 \newcommand{\ds}{\displaystyle}
\newtheorem{thm}{Theorem}[section]
\newtheorem{cor}[thm]{Corollary} 
\newtheorem{prn}[thm]{Proposition}
\newtheorem{rem}{Remark}[section]
\newtheorem{ex}{Example}[section]
\numberwithin{equation}{section} 
\numberwithin{thm}{section}
\numberwithin{rem}{section}
\numberwithin{ex}{section}
\begin{document}
\title{On Combining Estimation Problems Under Quadratic Loss: A Generalization}
\date{}
\author{S\'ev\'erien Nkurunziza{\thanks {University of Windsor, 401 Sunset
Avenue, Windsor, Ontario, N9B 3P4. Email: severien@uwindsor.ca}}}
\thispagestyle{empty} \selectlanguage{english} \maketitle
\maketitle\thispagestyle{empty}
\begin{abstract} The main theorem in Judge and
Mittelhammer [Judge, G. G., and Mittelhammer,~R.~(2004), A
Semiparametric Basis for Combining Estimation Problems under
Quadratic Loss; JASA, {\bf 99}, 466, 479--487] stipulates that, in
the context of nonzero correlation, a sufficient condition for the
Stein rule~(SR)-type estimator to dominate the base estimator is
that the dimension $k$ should be at least 5. Thanks to some refined
inequalities, this dominance result is proved in its full
generality; for a class of estimators which includes the SR
estimator as a special case. 
Namely, we prove that, for any member of the derived class,
$k\geqslant 3$ is a sufficient condition regardless of the
correlation factor. We also relax the Gaussian condition of the
distribution of the base estimator, as we consider the family of
elliptically contoured variates. Finally, we waive the condition on
the invertibility  of the variance-covariance matrix of the base and
the competing estimators. Our
theoretical findings are corroborated by some simulation studies, and
the proposed method is applied to the Cigarette dataset.

\end{abstract}
\noindent {\it Keywords:} Elliptically contoured variate;
Least-squares estimators; Quadratic loss; Restricted estimator;
Semiparametric inference; Shrinkage estimators; Stein-type
estimator.
 \pagenumbering{arabic} \addtocounter{page}{0}
\section{Introduction and statistical model }\label{sec:intro}
\subsection{Introduction}
The multiple regression model is a common statistical tool for
investigating the relationship between a response variable and
several explanatory variables. One of the main issues in regression
analysis consists in estimating the regression coefficients. In
particular, in the context of a linear regression model, it is
common to use the ordinary least squares estimator~(OLSE). Indeed,
under the normality of the errors term, OLSE is known to be the
maximum likelihood estimator as well as the minimum variance
unbiased estimator. However, in case some prior information~(from
outside the sample) is available, OLSE may not be optimal. For
instance, this prior information may be due to past statistical
investigations, when these investigations could have concluded that
some regression coefficients are not statistically significant.
Another source of prior information may be the expertise in a
certain field, which establishes an association between the
regressor variables. Such a situation arises in economic theory
where, for example, it is common to consider that the sum of the
exponents in a Cobb-Douglas production~(see Douglas and Cobb,~1928)
is equal to one. 

From the statistical inference point of view, it is important to
incorporate the available prior information in the estimation method
in order to improve upon the OLSE. For instance, if such prior
information can be expressed in the form of exact linear
restrictions binding the regression coefficients, instead of using
the OLSE, one can resort to a competing estimator which is also
known as the restricted least squares estimator~(RLSE); it is known
that the RLSE dominates the OLSE in such cases. In the sequel, the
OLSE will be referred to as the base estimator while the RLSE will
be referred to as the restricted estimator or the competing
estimator. Thus, in the case where some exact prior information is
available, the practitioners should use the restricted estimator in
order to estimate the target parameter while if only the sample
information is
available, the base estimator is to be preferred. 

Nevertheless, in some circumstances, the prior information is nearly
correct and thus, we want to incorporate an additional information
but we are not completely sure about it. Such uncertainty about the
additional information may be induced by a change in the phenomenon
underlying the regression model. Another context is the one where
the prior information comes from experts in a field, the uncertainty
reflects the imprecision in the experts' information or judgements.
In the case where the prior information is that, from the past
statistical investigations, some regression coefficients are not
statistically significant, the uncertainty may reflect the fact that
a field specialist believes that
the nonsignificant explanatory variables are important.

In these cases, we have to choose how to incorporate uncertain prior
information into the inference procedure. Technically, in order to
use both the sample and the uncertain prior, we can combine the base
estimator and the restricted estimator and thus it is important to
find an optimal combination. In the context of the linear regression
model, Judge and Mittelhammer~(2004) proposed a Stein-type estimator
and derived a sufficient condition for the risk dominance of
Stein-type estimator relative to a certain base estimator. However,
the main result, in Judge and Mittelhammer~(2004)[JM], has some
limitations. First, the error term is supposed to be normally
distributed. Second, the variance-covariance matrix of the joint
distribution of the base estimator and the competing estimator is
supposed to be invertible. This last assumption excludes, for
example, a case where the prior information is about the
non-significance of some regression coefficients. Third, the derived
sufficient condition is too restrictive in the sense that it
excludes the case of a multiple regression model with less than five
regressors. Thus, the condition in JM~(2004) is not applicable to
the cases of quadratic or cubic regression models. However, in many
applications (see Ashton {\em et al.},~2008,
Fernandez-Juricic {\em et al.},~2003, among others), if a linear fit
is not appropriate, a quadratic or cubic regression proves to be a
simple and an adequate model. The last example is the Cigarette dataset
produced by the USA Federal Trade Commission which can be found in
Mendenhall  and Sincich~(1992). For this data set, the method in
JM~(2004) is not applicable since  we have only three
explanatory variables. In Section~\ref{sec:simuldata}, we analyse
this dataset and we show that our method performs very well.

 In this paper, we generalize in four ways the main result in
JM~(2004) which gives a sufficient condition for the risk dominance
of Stein-type estimator relative to a certain base estimator. First,
we present a class of estimators which includes as a special case
the Stein rule-type estimator given in JM~(2004). Second, we relax
the condition on the dimension of the parameter space. Third, we
waive the condition on the invertibility of the variance-covariance
matrix of the base estimator and the competing estimator. Thus, the
proposed methodology works also in the case where the practitioners
suspect some linear restrictions binding the regression
coefficients. Fourth, we extend the main result to the case of a
family of elliptically contoured distributions. To this end, recall
that the normal distribution is a member of the elliptically
contoured distributions, and, as explained in Provost and
Cheong~(2000), many test statistics and optimality properties
underlying Gaussian random samples remain unchanged for elliptically
contoured random samples. For further discussions and advantages of
elliptically contoured distributions, we refer for example to
Abdous~\emph{et~al.}~(2004), Liu~\emph{et al.}~(2009) and
references therein. Finally, the main key for establishing our results
consists in deriving some inequalities and bounds which are more refined than that
used in JM~(2004).

The remaining of this paper is organized as follows.
Section~\ref{sec:statmodel} presents the statistical model which is
given in JM~(2004) as well as the highlights of  our contributions.
In Section~\ref{sec:classSR}, we present a class of Stein rule-type
of estimators and their risk function. Section~\ref{sec:mainres}
gives the main results of this paper in the Gaussian  case and, more
generally, in the elliptically contoured random case. We also show,
in Section~\ref{sec:mainres}, that the proposed method works in the
context where the variance-covariance of the base estimator and the
competing estimator is singular.  In Section~\ref{sec:simuldata}, we
present some simulation results for small sample sizes  as well as
an analysis of a real data set. Section~\ref{sec:conclu} gives some
concluding remarks. For the convenience of the reader, technical
proofs are given in the Appendix. 

\subsection{Statistical model and main
contributions}\label{sec:statmodel} In this section, we recall the
statistical model and the assumptions as well as some preliminary
results which are given in JM~(2004). Thus, this section presents
only the model for which the error term is normally distributed. As
mentioned in the Introduction, this is a preliminary step as we show
later that 
the result established under the normality assumption holds also in
the cases of elliptically contoured variables.

Following JM~(2004), we consider the estimation problem of a
$k$-dimensional location parameter vector when one observes an
$n$-dimensional sample vector $y$ such that
$y=\bm{X}\beta+\epsilon$, where $\bm{X}$ is an $n\times k$ design
matrix of rank $k$ and $\epsilon$ is an $n$-dimensional random
vector such that $\textrm{E}(\epsilon)=0$ and
$\textrm{cov}(\epsilon)=\sigma^{2}\bm{I}_{n}$. Further, as in the
quoted paper, we consider the scenario where there exists some
uncertainty concerning the above statistical model, which leads to
uncertainty concerning the appropriate inference method. For more
details about these issues, we refer, for example, to JM~(2004),
Saleh~(2006), Hossain~{\em et
al.}~(2009), Morris and Lysy~(2012) among others.

In the case where the above statistical model is appropriate, it is
natural to estimate the target parameter $\beta$ by using the
least-squares estimator (LS)
$\hat{\delta}^{LS}=(\bm{X}'\bm{X})^{-1}\bm{X}'y$. Further, in the
context of an alternative statistical model, one can consider the
competing estimator $\tilde{\beta}$, which is such that
$\textrm{E}(\tilde{\beta})=\beta+\gamma$,
$\textrm{cov}(\tilde{\beta})=\bm{\Phi}$,
$\textrm{cov}(\hat{\beta},\tilde{\beta})=\bm{\Sigma}$. Thus, as in
JM~(2004), the two estimators $\hat{\beta}$ and $\tilde{\beta}$ are
assumed to be correlated, and $\tilde{\beta}$ may be biased with
bias $\gamma$. In the context of uncertainty about which one of  the
two statistical models is more appropriate, it is common to consider
an estimator which combines the two estimators in an optimal way.
Originally, this type of method was introduced by James and
Stein~(1961). Over the last 50 years, numerous papers have been
written around the topic so that it would be impossible to summarize
all of them. To give some closely related references, we mention
Bock~(1975), Judge and Bock~(1978), JM~(2004), Saleh~(2006), Nkurunziza and Ahmed~(2010),
Nkurunziza~(2011), and Tan~(2015) and references therein.

In our paper, we extend the following: JM~(2004) stipulate that (see
their main theorem), in the case of nonzero correlation between the
base estimator $\hat{\beta}$ and the alternative estimator
$\tilde{\beta}$, a sufficient condition for the Stein rule~(SR)-type
estimator to dominate the base estimator is $k\geqslant 5$.

In this paper, we extend this result in four ways. First, we
construct a class of estimators which includes as a special case the
SR estimator given in JM~(2004). Second, we prove that,
regardless of the presence of correlation, the condition $k\geqslant
3$ remains sufficient for any estimator of the proposed class of SR
estimators to dominate in mean squared error the base estimator. The
impact of this finding consists in the fact that, unlike the
result in JM~(2004), the established method can be applied to the
case where the number of regressors is less than five as, for
example, the case of a quadratic or a cubic regression model. Third,
we also generalize the method in JM~(2004) to the case where the
joint distribution of the base estimator and the restricted
estimator may be singular. This last result can be very useful in
the  case where the statistician suspects some linear restrictions
binding the regression coefficients. This includes, for example, the
case where the prior information from past statistical
investigations is that some regression coefficients are not
statistically significant, while the expert in the field of
application believes that the corresponding explanatory variables
should be in the model. Fourth, we prove that the established
results hold if the normality assumption is replaced with that of
elliptically contoured variates. Technically, in order to
derive our findings, we  establish some inequalities
which are more refined than that in JM~(2004). Finally, let us note that the
simulation results are in agreement with the above theoretical
findings. More specifically, the simulations show that the risk
dominance of some SR estimators increases as the correlation
increases.
\section{A class of  Stein rule estimators and the risk function}\label{sec:classSR}
\subsection{A class of  Stein rule-type estimators}
In this subsection, we present a class of  Stein
rule~(SR)-type estimators which includes as a special case the SR estimator in JM~(2004).
First, recall that the results given in this paper hold under a very general statistical
model than that in JM~(2004). More precisely, the established results hold whenever the
estimators $\hat{\beta}$ and $\tilde{\beta}$ follow jointly an elliptically contoured
distribution. 
First, suppose that the estimators $\hat{\beta}$
and $\tilde{\beta}$ are jointly Gaussian. 
Thus, let
\begin{eqnarray}
\left(
            \begin{array}{c}
              \hat{\beta}-\beta \\
              \tilde{\beta}-\beta \\
            \end{array}
          \right)
\sim\mathcal{N}_{2k}\left((\bm{0},\bm{\gamma}')',\left(
                                                   \begin{array}{cc}
                                                     \bm{A} & \bm{\Sigma} \\
                                                     \bm{\Sigma}' & \bm{\Phi} \\
                                                   \end{array}
                                                 \right)\label{distrbeta}
\right),
\end{eqnarray}
where, as in JM~(2004), the matrices $\bm{A}$, $\bm{\Phi}$,
$\bm{\Xi}=\bm{A}-\bm{\Sigma}-\bm{\Sigma}'+\bm{\Phi}$ are assumed to
be positive definite. This assumption will be waived in Subsection~\ref{sec:extension}
to study the case where the matrices $\bm{\Xi}$ and $\bm{\Phi}$ may be singular.
Further, let $c$ be real number and let $h$ be real-valued measurable and square-integrable (with respect to the Gaussian measure). We consider the following class of SR estimators
\begin{eqnarray}
\hat{\beta}^{S}(h,c)=\hat{\beta}+c\,\, h\left(\hat{\beta},\tilde{\beta}\right)
\left(\hat{\beta}-\tilde{\beta}\right). \label{spsl}
\end{eqnarray}
\begin{ex}
\begin{enumerate}
  \item For a given $m$-column vector $e$,let $\|e\|^{2}=e'e=\textrm{\em trace}(ee')$. If $h\left(\hat{\beta},\tilde{\beta}\right)=1/\|\hat{\beta}-\tilde{\beta}\|^{2}$, for a known real number $c$, from~\eqref{spsl}, we get the estimator
\begin{eqnarray}
\hat{\delta}\left(\hat{\beta},\tilde{\beta};c\right)
=\hat{\beta}-\ds{\frac{c}{\|\hat{\beta}-\tilde{\beta}\|^{2}}}
\left(\hat{\beta}-\tilde{\beta}\right),\label{SPSL}
\end{eqnarray}
that is the Stein rule~(SR)-type estimator given in~JM~(2004).
  \item Let $\hat{a}=S^{2}\textrm{\em trace}((\bm{X}'\bm{X})^{-1})-\textrm{\em trace}(\hat{\bm{\Sigma}})$,
$S^{2}=(n-k)^{-1}\|y-\bm{X}\hat{\beta}\|^{2}$, $\hat{\bm{\Sigma}}$
is an unbiased and/or a consistent estimator for $\bm{\Sigma}$. If
$c=-\hat{a}$,
$h\left(\hat{\beta},\tilde{\beta}\right)=1/\|\hat{\beta}-\tilde{\beta}\|^{2}$,
the estimator in~\eqref{spsl} becomes the Semiparametric Stein-Like
(SPSL) estimator given in JM~(2004). Namely, we have
\begin{eqnarray}
\hat{\beta}^{S}=\hat{\beta}-\ds{\frac{\hat{a}}{\|\hat{\beta}-\tilde{\beta}\|^{2}}}
(\hat{\beta}-\tilde{\beta}). \label{spsla}
\end{eqnarray}
  \item If $h\left(\hat{\beta},\tilde{\beta}\right)=0$, the estimator in~\eqref{spsl} yields the base estimator $\hat{\beta}$.
  \item If $c=-1$, $h\left(\hat{\beta},\tilde{\beta}\right)=1$, we have $\hat{\beta}^{S}(-1,1)=\tilde{\beta}$.
\end{enumerate}
\end{ex}
As an important point, in this paper, the random quantity $h\left(\hat{\beta},\tilde{\beta}\right)$ is a statistic in the sense that it can be computed whenever we have the observations. As for the real value $c$ which is assumed to be known in \eqref{spsl}, this is similar to that used in SR-estimator in~JM~(2004). The impact of replacing $c$ by its corresponding consistent estimator should be similar to that in~JM~(2004). In practice, the value of $c$ can be obtained by using a re-sampling technique as the bootstrap.
\subsection{Risk function}
The performance of the proposed class of estimators is studied under the quadratic loss function.
Thus, the quadratic risk function, so-called the mean squared error~(MSE),
 of the class of the estimators in~\eqref{spsl} is
$$\textrm{MSE}\left(\hat{\beta}^{S}(h,c)\right)=
\textrm{E}\left[\left(\hat{\beta}^{S}(h,c)-\beta\right)'
\left(\hat{\beta}^{S}(h,c)-\beta\right)\right].$$
then, by using~\eqref{distrU}
and~\eqref{zandr}, we have
\begin{eqnarray}
\textrm{MSE}\left(\hat{\beta}^{S}(h,c)\right)
=\textrm{trace}(\bm{A})-2c\eta(h)+c^{2}\omega(h),\label{mse}
\end{eqnarray}
where
\begin{eqnarray}
\eta(h)=\textrm{E}\left[h\left(\hat{\beta},\tilde{\beta}\right)\left(\hat{\beta}-\beta\right)'\left(\hat{\beta}-\tilde{\beta}\right)\right],
\quad{ }
\omega(h)=\textrm{E}\left[h^{2}\left(\hat{\beta},\tilde{\beta}\right)
\left\|\hat{\beta}-\tilde{\beta}\right\|^{2}\right],
\label{etaomegah}
\end{eqnarray}
assuming that these expectations are defined. Thus, from~\eqref{mse}, it is obvious that,
for all
$c\in\left(\min\{0,2\eta(h)/\omega(h)\},\,
\max\{0,2\eta(h)/\omega(h)\}\right)$,  we have   
$$\textrm{MSE}\left(\hat{\beta}^{S}(h,c)\right)
<\textrm{MSE}(\hat{\beta}),$$
provided that $\eta(h)$ and $\omega(h)$ exist. Further, from~\eqref{mse}, one concludes that the optimal choice of
$c$ is $c_{*}=\eta(h)/\omega(h)$ and thus,
\begin{eqnarray}
\textrm{MSE}\left(\hat{\beta}^{S}(h,c^{*})\right)
=\textrm{trace}(\bm{A})-(\eta^{2}(h)/\omega(h)).\label{mseopt}
\end{eqnarray}
\begin{rem}\label{rem:opt}
Assuming that $\eta(h)$ and $\omega(h)$ are defined, \eqref{mseopt} implies that, for a fixed value of $\omega(h)$, the MSE
of the SR estimator decreases as $\eta(h)$ increases. Further, for a fixed value
of $\eta(h)$, the MSE of the SR estimator decreases as $\omega(h)$ decreases. The
simulation results given in Section~\ref{sec:simulat} are in
agreement with this analysis. To make this idea more precise, we
first note that, for a given $h$, the values of $\omega(h)$ and $\eta(h)$ depend both on
$\gamma$, the bias of the estimator $\tilde{\beta}$, and on the
covariance between the estimator $\tilde{\beta}$ and $\hat{\beta}$.
In particular, the simulation results show that the risk dominance
of the SPSL increases as the correlation increases. Also, the
simulation results show that the risk dominance of the SPSL
decreases as the norm of the bias increases.
\end{rem}
As mentioned above,
the derivation of the MSE in~\eqref{mse} and \eqref{mseopt}
assumes the existence of $\omega(h)$ and $\eta(h)$. Thus, it is
important to derive the conditions under which these expectations are defined.
To this end, we require that the function $h$ satisfies the following assumption.

\noindent {\bf \underline{Assumption $(\mathcal{H}_{1})$}}: {\em The
function $h$ is such that $\|x-y\|^{2}|h(x,y)|$ is bounded i. e.
$$|h(x,y)|=O\left(\|x-y\|^{-2}\right).$$}
\begin{rem}
It should be noticed that the function $h$ which gives the SR estimator  satisfies
the above assumption. Indeed, in this case, we have $\|x-y\|^{2}|h(x,y)|=1$.
Also, the function $h\equiv 0$ which gives the base estimator
satisfies the above assumption. As another example of a function $h$ which satisfies the Assumption $(\mathcal{H}_{1})$,
one can take $h(x,y)=\ds{\frac{1}{1+\|x-y\|^{p}}}$ for some $p\geqslant2$.
\end{rem}
Below, we prove that, under the above assumption, regardless of the presence of correlation,
the condition $k\geqslant 3$ remains sufficient for any estimator of the class in~\eqref{spsl} to
dominate in mean square error the base estimator. In particular, since the SR estimator is a member
of the class of the estimators in~\eqref{spsl}, the established result proves that, regardless of the presence of correlation, the condition $k\geqslant 3$ remains sufficient for the SR estimator to
dominate in mean square error the base estimator. We also prove that this conclusion holds if the normality assumption is replaced by that of elliptically contoured variates.

\section{Main results}\label{sec:mainres}
In this section, we present the main results of this paper. As an
intermediate step, we derive below three propositions and a theorem
which play a central role in deriving the main result. In summary, these
results are useful in deriving a more refined inequality than that used in
JM~(2004). In order to
simplify the presentation of the main results, we define some
notations which will be used for the remaining of the paper. 
Let $U=(U'_{1},U'_{2})'$ where $U_{1}=\hat{\beta}-\beta$ and $U_{2}=
\tilde{\beta}-\beta$. From~\eqref{distrbeta}, we have
\begin{eqnarray}
U=\left(
    \begin{array}{c}
      U_{1} \\
      U_{2} \\
    \end{array}
  \right)
\sim\mathcal{N}_{2k}\left((\bm{0},\bm{\gamma}')',\left(
                                                   \begin{array}{cc}
                                                     \bm{A} & \bm{\Sigma} \\
                                                     \bm{\Sigma}' & \bm{\Phi} \\
                                                   \end{array}
                                                 \right)\label{distrU}
\right).
\end{eqnarray}
From the Cholesky decomposition, let $\bm{P}$ be a nonsingular matrix such
that $\bm{\Xi}=\bm{P}\bm{P}'$, and let
\begin{eqnarray}
Z=\bm{P}^{-1}(U_{1}-U_{2}) \quad{ } \mbox{ and } \quad{ }
\bm{R}=\bm{P}'\bm{P}. \label{zandr}
\end{eqnarray}
Further, let
\begin{eqnarray}
W=\left(
    \begin{array}{c}
      U_{1} \\
      Z \\
    \end{array}
  \right), \quad{ }
 \bm{F}=\left(
    \begin{array}{cc}
      \bm{0} & \bm{0} \\
      \bm{P} & \bm{0} \\
    \end{array}
  \right), \quad{ } \bm{B}=\left(
    \begin{array}{cc}
      \bm{0} & \bm{P}' \\
      \bm{P} & \bm{0} \\
    \end{array}
  \right),\label{BF}
\end{eqnarray}
\begin{eqnarray}
\eta=\textrm{E}\left[U'_{1}\bm{P}Z\,/Z'\bm{R}Z\right], \quad{ } \eta^{\ddag}=\textrm{E}\left[|U'_{1}\bm{P}Z|\,/Z'\bm{R}Z\right], \quad{ }
\omega=\textrm{E}\left[1\,/Z'\bm{R}Z\right]. \label{etaomega}
\end{eqnarray}
\begin{prn} \label{proetaomega}Suppose that Assumption~$(\mathcal{H}_{1})$ holds, then there exists $q_{\scriptscriptstyle 0}>0$ such that
\begin{enumerate}
  \item $|\eta(h)|\leqslant q_{\scriptscriptstyle 0}\eta^{\ddag}$ where $\eta^{\ddag}$  is defined in~\eqref{etaomega};
  \item $|\omega(h)|\leqslant q^{2}_{\scriptscriptstyle 0}\omega$ where $\omega$ is defined in~\eqref{etaomega}.
\end{enumerate}
\end{prn}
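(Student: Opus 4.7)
The plan is to extract a quantitative form of Assumption~$(\mathcal{H}_1)$, substitute the change of variables given in~\eqref{zandr}, and then bound the integrands inside $\eta(h)$ and $\omega(h)$ pointwise before taking expectations.

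First I would rewrite Assumption~$(\mathcal{H}_1)$ in its working form: the statement that $\|x-y\|^2|h(x,y)|$ is bounded means there exists a constant $q_{\scriptscriptstyle 0}>0$ such that
\begin{equation*}
|h(x,y)|\leqslant \frac{q_{\scriptscriptstyle 0}}{\|x-y\|^{2}}\qquad\text{for all }x,y.
\end{equation*}
Applied with $x=\hat{\beta}$, $y=\tilde{\beta}$ and using $\hat{\beta}-\tilde{\beta}=U_{1}-U_{2}=\bm{P}Z$ from~\eqref{zandr}, one gets $\|\hat{\beta}-\tilde{\beta}\|^{2}=Z'\bm{R}Z$ and hence
\begin{equation*}
\bigl|h\bigl(\hat{\beta},\tilde{\beta}\bigr)\bigr|\leqslant \frac{q_{\scriptscriptstyle 0}}{Z'\bm{R}Z}.
\end{equation*}

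For part~1, I would start from the definition of $\eta(h)$ in~\eqref{etaomegah}, rewrite $\hat{\beta}-\beta=U_{1}$ and $\hat{\beta}-\tilde{\beta}=\bm{P}Z$, push absolute values inside the expectation (Jensen/triangle inequality for integrals), and apply the above pointwise bound:
\begin{equation*}
|\eta(h)|\leqslant \textrm{E}\bigl[\,|h(\hat\beta,\tilde\beta)|\,|U'_{1}\bm{P}Z|\,\bigr]
\leqslant q_{\scriptscriptstyle 0}\,\textrm{E}\!\left[\frac{|U'_{1}\bm{P}Z|}{Z'\bm{R}Z}\right]
=q_{\scriptscriptstyle 0}\eta^{\ddag}.
\end{equation*}
For part~2, I would similarly substitute and square the bound:
\begin{equation*}
|\omega(h)|=\textrm{E}\bigl[h^{2}(\hat\beta,\tilde\beta)\,Z'\bm{R}Z\bigr]
\leqslant \textrm{E}\!\left[\frac{q^{2}_{\scriptscriptstyle 0}}{(Z'\bm{R}Z)^{2}}\,Z'\bm{R}Z\right]
=q^{2}_{\scriptscriptstyle 0}\,\textrm{E}\!\left[\frac{1}{Z'\bm{R}Z}\right]=q^{2}_{\scriptscriptstyle 0}\omega.
\end{equation*}

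There is no real obstacle here beyond bookkeeping; the proposition is essentially a restatement of Assumption~$(\mathcal{H}_1)$ in the transformed coordinates $(U_{1},Z)$. The only point worth flagging is that all the manipulations are monotone applications of the pointwise bound on $|h|$, so the finiteness of $\eta^{\ddag}$ and $\omega$ (to be established separately from Gaussian tail/moment arguments on $Z$ when $k\geqslant 3$) is what ultimately guarantees that $\eta(h)$ and $\omega(h)$ are well defined, as anticipated after~\eqref{etaomegah}.
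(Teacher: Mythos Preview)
Your proposal is correct and follows essentially the same route as the paper: extract the constant $q_{\scriptscriptstyle 0}$ from Assumption~$(\mathcal{H}_{1})$, apply the pointwise bound on $|h|$ inside the expectations, and rewrite $\|\hat\beta-\tilde\beta\|^{2}=Z'\bm{R}Z$ and $(\hat\beta-\beta)'(\hat\beta-\tilde\beta)=U_{1}'\bm{P}Z$ via~\eqref{zandr}. The only cosmetic difference is that the paper first bounds in the $(\hat\beta,\tilde\beta)$ coordinates and then changes variables, whereas you change variables first; the substance is identical.
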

The proof of this proposition is given in the Appendix. From
Proposition~\ref{proetaomega}, it is clear that, in order to prove
that $\eta(h)$ and $\omega(h)$ are defined, it is sufficient to
prove that $\eta^{\ddag}<\infty$ and $\omega<\infty$. Below, we
establish a theorem which proves that, provided that $k\geqslant 3$,
$\omega<\infty$, and this implies that $\eta^{\ddag}<\infty$. To
introduce some notations, let $\mathbb{I}_{G}$ denote the indicator
function of the event $G$, let $H$ be $k\times k$-symmetric matrix,
let $\lambda(\bm{H})$ denote the eigenvalue of $\bm{H}$, and let
$\lambda_{1}(\bm{H})$, $\lambda_{2}(\bm{H})$, $\dots$,
$\lambda_{k}(\bm{H})$ be respectively the first, the second,
$\dots$, the $\mbox{k}^{th}$ the eigenvalue of $\bm{H}$.
\begin{prn}\label{born1}
Let $\alpha>0$, let
$\psi_{1}=\max\{|\lambda_{1}(\bm{B})|,|\lambda_{2}(\bm{B})|,\dots,|\lambda_{2k}(\bm{B})|\}$,
where $\bm{B}$ is defined in~\eqref{BF}, and let $W$ be the random vector
in~\eqref{BF}.  We have
\begin{eqnarray}
\textrm{\em E}\left\{\left(|U_{1}'\bm{P}Z|\,\big /Z'\bm{R}Z\right)\,
\mathbb{I}_{\left\{\|W\|\leqslant\alpha\right\}}\right\} \leqslant
\alpha^{2}\, \psi_{1}\,\omega\,/2.
\end{eqnarray}

\end{prn}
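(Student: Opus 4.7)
The plan is to bound $|U_1'\bm{P}Z|$ pointwise on the event $\{\|W\|\leqslant\alpha\}$ by a deterministic constant, pull this constant out of the expectation, and then apply the definition of $\omega$ in~\eqref{etaomega}. First, I would apply the Cauchy--Schwarz inequality to the scalar $U_1'\bm{P}Z=\langle U_1,\bm{P}Z\rangle$, obtaining
\begin{equation*}
|U_1'\bm{P}Z|\leqslant \|U_1\|\,\|\bm{P}Z\|\leqslant \sigma_{\max}(\bm{P})\,\|U_1\|\,\|Z\|,
\end{equation*}
where $\sigma_{\max}(\bm{P})$ denotes the largest singular value of $\bm{P}$.

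The key algebraic step is the identification $\sigma_{\max}(\bm{P})=\psi_1$. Writing a singular value decomposition $\bm{P}=\bm{U}\bm{D}\bm{V}'$, a direct block computation on $\bm{B}$ (or the standard Jordan--Wielandt correspondence) shows that the eigenvalues of $\bm{B}$ are exactly $\pm\sigma_i(\bm{P})$, $i=1,\dots,k$; the maximum of their absolute values is therefore $\sigma_{\max}(\bm{P})=\psi_1$.

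Next, I would apply the elementary inequality $ab\leqslant (a^2+b^2)/2$ with $a=\|U_1\|$ and $b=\|Z\|$, together with the block identity $\|W\|^2=\|U_1\|^2+\|Z\|^2$ coming from~\eqref{BF}, to obtain $\|U_1\|\,\|Z\|\leqslant \|W\|^2/2$. Combined with the previous step, this gives the pointwise bound $|U_1'\bm{P}Z|\leqslant \psi_1\|W\|^2/2$, and restricting to $\{\|W\|\leqslant\alpha\}$ yields
\begin{equation*}
|U_1'\bm{P}Z|\,\mathbb{I}_{\{\|W\|\leqslant\alpha\}}\leqslant \frac{\alpha^2\psi_1}{2}.
\end{equation*}

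Since $\bm{R}=\bm{P}'\bm{P}$ is positive definite, $Z'\bm{R}Z>0$ almost surely, so dividing through and taking expectations gives
\begin{equation*}
\textrm{E}\left\{\frac{|U_1'\bm{P}Z|}{Z'\bm{R}Z}\,\mathbb{I}_{\{\|W\|\leqslant\alpha\}}\right\}\leqslant \frac{\alpha^2\psi_1}{2}\,\textrm{E}\left\{\frac{1}{Z'\bm{R}Z}\right\}=\frac{\alpha^2\psi_1\,\omega}{2},
\end{equation*}
which is the claimed bound. No step is truly delicate; the only point worth care is the identity $\psi_1=\sigma_{\max}(\bm{P})$ extracted from the off-diagonal block structure of $\bm{B}$, and everything else is Cauchy--Schwarz, AM--GM, and the definition of $\omega$.
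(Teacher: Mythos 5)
Your proof is correct and reaches the same key pointwise bound as the paper, namely $|U_1'\bm{P}Z|\,\mathbb{I}_{\{\|W\\|\leqslant\alpha\}}\leqslant \alpha^{2}\psi_{1}/2$, after which both arguments finish identically by dividing by $Z'\bm{R}Z$, bounding the indicator by $1$, and invoking the definition of $\omega$. The difference is in how that pointwise bound is obtained. The paper first rewrites the bilinear form as a quadratic form, $U_1'\bm{P}Z=W'\bm{F}W$ with $\bm{F}$ the block matrix of \eqref{BF}, and then applies its Corollary~\ref{corcourant}, which bounds $|W'\bm{F}W|$ by $\tfrac{1}{2}\max_i|\lambda_i(\bm{F}+\bm{F}')|\,\|W\|^2=\tfrac{1}{2}\psi_1\|W\|^2$ via diagonalization of the symmetrization $\bm{B}=\bm{F}+\bm{F}'$. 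You instead keep the bilinear form as is, apply Cauchy--Schwarz and the operator norm of $\bm{P}$ to get $|U_1'\bm{P}Z|\leqslant\sigma_{\max}(\bm{P})\|U_1\|\|Z\|$, identify $\sigma_{\max}(\bm{P})=\psi_1$ through the Jordan--Wielandt correspondence (the eigenvalues of $\bm{B}$ are $\pm\sigma_i(\bm{P})$, which is correct), and close with AM--GM. The two routes are mathematically equivalent --- your Jordan--Wielandt step is precisely the content hidden inside the paper's Corollary~\ref{corcourant} for this particular anti-diagonal block matrix --- but yours is self-contained and arguably more transparent about \emph{why} $\psi_1$ is the right constant, whereas the paper's version is packaged as a reusable lemma that it also needs later in the proof of Proposition~\ref{born2}. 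No gaps.
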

The proof of this proposition is given in the Appendix.
\begin{prn}\label{born2}
Let $\alpha>0$, and let $\psi_{0}=
\min\{\lambda_{1}(\bm{R}),\lambda_{2}(\bm{R}),\dots,\lambda_{k}(\bm{R})\}$.
We have
\begin{eqnarray}
\textrm{\em E}\left\{\left(|W'\bm{F}W|\,\big /\,Z'\bm{R}Z\right)
\mathbb{I}_{\left\{\|W\|>\alpha\right\}}\right\}\leqslant \psi_{1}
\,\left[\textrm{\em trace}(\bm{A})+k+\mu'\mu\right]\,\big
/(\alpha^{2}\,\psi_{0}).
\end{eqnarray}
\end{prn}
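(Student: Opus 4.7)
The plan is to combine three elementary pointwise bounds with a Chebyshev--Markov step on the indicator, and then identify the remaining expectation with $\textrm{E}(\|W\|^{2}) = \textrm{trace}(\bm{A})+k+\mu'\mu$ through the Gaussian second-moment structure of $W$.

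First I would symmetrize the numerator. Because $W'\bm{F}W$ is a scalar and hence equal to its transpose, the block definitions in \eqref{BF} give $W'\bm{F}W = \tfrac{1}{2}W'(\bm{F}+\bm{F}')W = \tfrac{1}{2}W'\bm{B}W$, so the Rayleigh--Ritz inequality applied to the symmetric matrix $\bm{B}$ yields $|W'\bm{F}W| \leq \tfrac{\psi_{1}}{2}\|W\|^{2}$. Positive-definiteness of $\bm{R}$ gives $Z'\bm{R}Z \geq \psi_{0}\|Z\|^{2}$, and on the set $\{\|W\|>\alpha\}$ one has $\mathbb{I}_{\{\|W\|>\alpha\}} \leq \|W\|^{2}/\alpha^{2}$. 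After chaining these three inequalities, taking expectation, and appealing to \eqref{distrU} together with $Z = \bm{P}^{-1}(U_{1}-U_{2}) \sim \mathcal{N}_{k}(\mu,\bm{I}_{k})$ with $\mu=-\bm{P}^{-1}\bm{\gamma}$, one obtains $\textrm{E}(\|Z\|^{2})=k+\mu'\mu$ which, combined with $\textrm{E}(\|U_{1}\|^{2})=\textrm{trace}(\bm{A})$, produces the combinatorial factor on the right-hand side of the claimed estimate.

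The main obstacle is that a naive combination of the three bounds yields an integrand proportional to $\|W\|^{4}/\|Z\|^{2}$, whose expectation is not controlled by $\textrm{E}(\|W\|^{2})$ in general. I expect the resolution to be to refrain from converting $Z'\bm{R}Z$ to $\psi_{0}\|Z\|^{2}$ prematurely; instead, the weighted Cauchy--Schwarz bound $|W'\bm{F}W|^{2} \leq (Z'\bm{R}Z)\,(U_{1}'\bm{P}'\bm{R}^{-1}\bm{P}U_{1})$ allows one factor of $Z'\bm{R}Z$ to cancel cleanly. Together with the eigenvalue estimate $\lambda_{\max}(\bm{P}'\bm{R}^{-1}\bm{P}) \leq \psi_{1}^{2}/\psi_{0}$ (which may be checked via the SVD of $\bm{P}$) and an AM--GM control of the form $\|U_{1}\|\|W\| \leq \|W\|^{2}$, the Markov step $\mathbb{I}_{\{\|W\|>\alpha\}} \leq \|W\|^{2}/\alpha^{2}$ then converts the pointwise bound into the claimed expectation inequality, with the correct spectral factor $\psi_{1}/\psi_{0}$ emerging from the combination of the upper eigenvalue bound on $\bm{B}$ with the lower eigenvalue bound on $\bm{R}$.
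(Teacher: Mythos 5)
You have correctly identified the central difficulty: chaining the Rayleigh--Ritz bound on $|W'\bm{F}W|$, the Courant bound on $Z'\bm{R}Z$, and the Markov bound $\mathbb{I}_{\{\|W\|>\alpha\}}\leqslant\|W\|^{2}/\alpha^{2}$ leaves an integrand of order $\|W\|^{4}/\|Z\|^{2}$, which is not controlled by $\textrm{E}(\|W\|^{2})$. However, the repair you propose does not close the gap. The weighted Cauchy--Schwarz step gives $|W'\bm{F}W|\big/(Z'\bm{R}Z)\leqslant\left(U_{1}'\bm{P}\bm{R}^{-1}\bm{P}'U_{1}\right)^{1/2}\big/\left(Z'\bm{R}Z\right)^{1/2}$ (and since $\bm{R}=\bm{P}'\bm{P}$ one in fact has $\bm{P}\bm{R}^{-1}\bm{P}'=\bm{I}_{k}$, so the numerator is just $\|U_{1}\|$); only half a power of $Z'\bm{R}Z$ cancels, and after multiplying by $\|W\|^{2}/\alpha^{2}$ you are left with $\textrm{E}\left[\|U_{1}\|\,\|W\|^{2}\big/(Z'\bm{R}Z)^{1/2}\right]$, which still contains a negative power of $\|Z\|$ and is in any case not $\textrm{E}(\|W\|^{2})=\textrm{trace}(\bm{A})+k+\mu'\mu$. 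The AM--GM step $\|U_{1}\|\,\|W\|\leqslant\|W\|^{2}$ only raises the power of $\|W\|$ further. So your argument neither eliminates the random denominator nor produces the specific constant $\psi_{1}\left[\textrm{trace}(\bm{A})+k+\mu'\mu\right]\big/(\alpha^{2}\psi_{0})$ asserted in the statement.

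The missing idea is to run the Chebyshev--Markov step on the $Z$-component rather than on $\|W\|$: the paper passes from the event $\{\|W\|>\alpha\}$ to the event $\{\|Z\|^{2}>\alpha^{2}/2\}$ and then uses $\mathbb{I}_{\{\|Z\|^{2}>\alpha^{2}/2\}}\leqslant 2\|Z\|^{2}/\alpha^{2}$. The factor $\|Z\|^{2}$ so produced cancels the denominator exactly, via Courant's inequality $\|Z\|^{2}/(Z'\bm{R}Z)\leqslant 1/\psi_{0}$, leaving $\frac{2}{\alpha^{2}\psi_{0}}\textrm{E}\left\{|W'\bm{F}W|\right\}$ with no random denominator at all; the Rayleigh--Ritz bound $|W'\bm{F}W|\leqslant\frac{1}{2}\psi_{1}\|W\|^{2}$ and the moment computation $\textrm{E}(\|W\|^{2})=\textrm{trace}(\bm{A})+k+\mu'\mu$ then yield precisely the stated constant. (Note in passing that $\|W\|>\alpha$ only gives $\|Z\|^{2}>\alpha^{2}/2$ \emph{or} $\|U_{1}\|^{2}>\alpha^{2}/2$, so even the paper's first inclusion deserves a word of justification; but the mechanism --- bounding the indicator by a multiple of $\|Z\|^{2}$ so that it cancels $Z'\bm{R}Z$ --- is the step your proposal lacks.)
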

The proof of this proposition is given in the Appendix. By combining
Propositions~\ref{born1} and \ref{born2}, we establish the following
theorem which plays a central role in proving that $\eta$ exists
whenever $0<\omega<+\infty$.

\begin{thm}\label{mainthm}
Let $\alpha>0$, then there exists $M(\alpha)>0$ such that
\begin{eqnarray}
|\eta|\leqslant\eta^{\ddag}<M(\alpha)\omega+\,\psi_{1}\, \left(\textrm{\em
tr}\left(\bm{A}\right)+k+\mu'\mu\right)\,\big/(\alpha^{2} \psi_{0}),
\end{eqnarray}
where  $\psi_{1}$ and $\psi_{0}$ are given in
Propositions~\ref{born1} and \ref{born2} respectively.
\end{thm}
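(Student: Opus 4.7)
The plan is to split the expectation defining $\eta^{\ddag}$ at a radius-$\alpha$ ball in the augmented coordinates $W$, and then to invoke Propositions~\ref{born1} and~\ref{born2} separately on the two pieces. Before doing that, I first observe that from the definition $\eta=\textrm{E}[U_1'\bm{P}Z/(Z'\bm{R}Z)]$ in~\eqref{etaomega}, the elementary inequality $|\textrm{E}[X]|\leqslant\textrm{E}[|X|]$ immediately gives $|\eta|\leqslant\eta^{\ddag}$. It therefore suffices to control $\eta^{\ddag}$, which also explains why the quantity $\eta^{\ddag}$ (rather than $\eta$) is the natural target in the preceding two propositions.

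Next I decompose
\begin{equation*}
\eta^{\ddag}=\textrm{E}\!\left[\frac{|U_1'\bm{P}Z|}{Z'\bm{R}Z}\,\mathbb{I}_{\{\|W\|\leqslant\alpha\}}\right]+\textrm{E}\!\left[\frac{|U_1'\bm{P}Z|}{Z'\bm{R}Z}\,\mathbb{I}_{\{\|W\|>\alpha\}}\right].
\end{equation*}
The first summand is \emph{exactly} the quantity bounded in Proposition~\ref{born1}, which yields the estimate $(\alpha^{2}\psi_{1}/2)\,\omega$; this forces the choice $M(\alpha):=\alpha^{2}\psi_{1}/2$. For the second summand, I use the block structure of $\bm{F}$ in~\eqref{BF} to identify the scalar $U_1'\bm{P}Z$ with a quadratic form $W'\bm{F}W$ in the augmented vector $W=(U_1',Z')'$, invoking the fact that a scalar coincides with its own transpose. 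Proposition~\ref{born2} then controls this second piece by $\psi_{1}\,(\textrm{tr}(\bm{A})+k+\mu'\mu)/(\alpha^{2}\psi_{0})$, and summing the two bounds produces the announced inequality.

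The only delicate point is the algebraic rewriting that transforms $|U_1'\bm{P}Z|$ into a quadratic form in $W$ so that Proposition~\ref{born2} can be applied; once this identification through~\eqref{BF} is made, the theorem is a direct consequence of the two propositions. No integrability issue arises: on the bounded set $\{\|W\|\leqslant\alpha\}$ the integrand is dominated by a constant multiple of $1/(Z'\bm{R}Z)$, whose expectation is $\omega$, while on the complement the right-hand side only involves the first and second moments of $W$, which are finite by the Gaussian assumption~\eqref{distrU}. Hence the bound obtained is finite for every $\alpha>0$ provided $\omega<\infty$, which is precisely the setting in which the theorem will be used later.
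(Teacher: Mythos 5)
Your proposal is correct and follows essentially the same route as the paper: rewrite $|U_1'\bm{P}Z|$ as the quadratic form $|W'\bm{F}W|$ via~\eqref{BF}, split the expectation over $\{\|W\|\leqslant\alpha\}$ and $\{\|W\|>\alpha\}$, apply Propositions~\ref{born1} and~\ref{born2} to the respective pieces, and set $M(\alpha)=\alpha^{2}\psi_{1}/2$. The only (harmless) discrepancy is the ordering of the blocks of $W$: with $\bm{F}$ as in~\eqref{BF} one needs $W=(Z',U_1')'$ to get $W'\bm{F}W=U_1'\bm{P}Z$, which is the convention used in the paper's proofs.
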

\begin{proof} Since $\bm{R}$ is a positive definite matrix,
we have
\begin{eqnarray}
|\eta|\leqslant \textrm{E}\left[|U'_{1}\bm{P}Z|\,\big
/(Z'\bm{R}Z)\right]=
\textrm{E}\left[|\left(Z',U'_{1}\right)\bm{F}\left(Z',U'_{1}\right)'|\,\big
/(Z'\bm{R}Z)\right],
\end{eqnarray}
where $\bm{F}$ is given in equation~\eqref{BF}.
Further, set $W=\left(Z',U'_{1}\right)'$.
We have,
\begin{eqnarray}
\textrm{E}\left[\frac{|W'\bm{F}W|}{Z'\bm{R}Z}\right]
=\textrm{E}\left\{\frac{|W'\bm{F}W|}{Z'\bm{R}Z}
\mathbb{I}_{\left\{\|W\|>\alpha\right\}}\right\}+\textrm{E}\left\{\frac{|W'\bm{F}W|}{Z'\bm{R}Z}
\mathbb{I}_{\left\{\|W\|\leqslant\alpha\right\}}\right\}.
\end{eqnarray}
Then, by combining Propositions~\ref{born1} and \ref{born2} and by
taking $M(\alpha)=\alpha^{2}\psi_{1}/2$, we get the stated result.
\end{proof}
By using Theorem~\ref{mainthm}, we establish the following corollary
which shows that $\eta^{\ddag}$ (and so $|\eta|$) is bounded by a positive real number which
is finite provided that $0<\omega<+\infty$.
\begin{cor}\label{corinterm}
Suppose that the conditions of Theorem~\ref{mainthm} hold. Then,
\begin{eqnarray*}
|\eta|\leqslant\eta^{\ddag}<\omega+\,\psi_{1}^{2}\, \left(\textrm{\em trace
}\left(\bm{A}\right)+k+\mu'\mu\right)\,\big/(2 \psi_{0}),
\end{eqnarray*}
where $\psi_{1}$ and $\psi_{0}$ are given in
Propositions~\ref{born1} and \ref{born2} respectively.
\end{cor}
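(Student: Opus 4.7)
The plan is to deduce this corollary directly from Theorem~\ref{mainthm} by optimizing the choice of $\alpha$. Recall that the proof of Theorem~\ref{mainthm} makes $M(\alpha)=\alpha^{2}\psi_{1}/2$ explicit, so the inequality actually reads, for every $\alpha>0$,
$$|\eta|\leqslant\eta^{\ddag}<\frac{\alpha^{2}\psi_{1}}{2}\,\omega+\frac{\psi_{1}\bigl(\textrm{trace}(\bm{A})+k+\mu'\mu\bigr)}{\alpha^{2}\psi_{0}}.$$
Because $\alpha$ is free, I can pick the one that yields the cleanest constants.

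The natural choice is $\alpha^{\ast}=\sqrt{2/\psi_{1}}$, i.e.\ the unique positive root of $\alpha^{2}\psi_{1}/2=1$; this collapses the first summand to exactly $\omega$. The only thing worth verifying to make this legitimate is that $\psi_{1}>0$. But $\bm{P}$ is the Cholesky factor of the positive definite matrix $\bm{\Xi}$, hence nonsingular, so the symmetric block matrix $\bm{B}$ in~\eqref{BF} has nonzero eigenvalues (up to sign its spectrum consists of the singular values of $\bm{P}$); in particular $\psi_{1}>0$, and $\alpha^{\ast}$ is well defined.

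Substituting $\alpha^{\ast}$ into the second summand gives
$$\frac{\psi_{1}\bigl(\textrm{trace}(\bm{A})+k+\mu'\mu\bigr)}{(\alpha^{\ast})^{2}\psi_{0}}=\frac{\psi_{1}^{2}\bigl(\textrm{trace}(\bm{A})+k+\mu'\mu\bigr)}{2\psi_{0}},$$
and adding this to the $\omega$ from the first summand reproduces exactly the inequality stated in the corollary. The argument is a one-line substitution once Theorem~\ref{mainthm} is at hand; there is no genuine obstacle beyond the positivity check for $\psi_{1}$ just verified.
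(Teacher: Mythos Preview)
Your proof is correct and follows exactly the same route as the paper: choose $\alpha=\sqrt{2/\psi_{1}}$ so that $M(\alpha)=1$, then read off the bound from Theorem~\ref{mainthm}. Your added verification that $\psi_{1}>0$ (so that this choice of $\alpha$ is legitimate) is a nice detail the paper leaves implicit.
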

\begin{proof}
For $\alpha=\sqrt{2/\psi_{1}}$, we have
$M(\alpha)=\alpha^{2}\psi_{1}/2=1$. Then, by using
Theorem~\ref{mainthm}, we get the statement of the corollary.
\end{proof}
\begin{rem}\label{rem:prop}
It should be noticed that Propositions~\ref{born1}-\ref{born2},
Theorem~\ref{mainthm}, and Corollary~\ref{corinterm} hold even in the
case where $W$ is not Gaussian, provided that the mean and the
variance-covariance matrix of $U$ are the same as the one given
in~\eqref{distrU}.
\end{rem}
\begin{rem}\label{rem:main}
From Corollary~\ref{corinterm}, it should be noticed that, if
$0<\omega<\infty$, then $|\eta|\leqslant\eta^{\ddag}<\infty$. This is an interesting finding
which shows that the nonzero correlation does not affect the
condition for the risk dominance of the SR estimator relative to the base
estimator. Thus, under normality, in order
to guarantee the existence of the \textrm{MSE} in~\eqref{mse} and
\eqref{mseopt}, it is sufficient to let $k\geqslant 3$.
\end{rem}
\begin{cor}\label{cor_main}
Under normality, $k\geqslant3$ implies $0<\omega<+\infty$ and
$|\eta|\leqslant\eta^{\ddag}<+\infty$.
\end{cor}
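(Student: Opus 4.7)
The plan is to reduce the problem to a standard moment computation for a noncentral chi-square distribution, and then to invoke Corollary~\ref{corinterm} to get the bound on $\eta^{\ddag}$ for free.

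First I would unpack $Z'\bm{R}Z$. Since $\bm{\Xi}=\bm{P}\bm{P}'$ and $U_{1}-U_{2}=\hat{\beta}-\tilde{\beta}\sim\mathcal{N}_{k}(-\gamma,\bm{\Xi})$, the change of variables $Z=\bm{P}^{-1}(U_{1}-U_{2})$ gives $Z\sim\mathcal{N}_{k}(\mu,\bm{I}_{k})$ with $\mu=-\bm{P}^{-1}\gamma$, so in particular $\|Z\|^{2}$ has a noncentral chi-square distribution with $k$ degrees of freedom and noncentrality $\|\mu\|^{2}=\gamma'\bm{\Xi}^{-1}\gamma$. Positivity ($\omega>0$) is immediate because the integrand $1/Z'\bm{R}Z$ is strictly positive almost surely.

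Next I would bound $\omega$ from above. Since $\bm{R}$ is positive definite, $Z'\bm{R}Z\geqslant\psi_{0}\|Z\|^{2}$, where $\psi_{0}$ is its smallest eigenvalue, and therefore
\begin{equation*}
\omega=\textrm{E}\!\left[\frac{1}{Z'\bm{R}Z}\right]\leqslant\frac{1}{\psi_{0}}\,\textrm{E}\!\left[\frac{1}{\|Z\|^{2}}\right].
\end{equation*}
The main (and only nonroutine) step is the classical fact that for $X\sim\chi^{2}_{k}(\lambda)$ the reciprocal moment $\textrm{E}[1/X]$ is finite if and only if $k\geqslant 3$. The easiest route is to condition on the Poisson mixing representation of the noncentral chi-square: $X\mid K\sim\chi^{2}_{k+2K}$ with $K\sim\text{Poisson}(\lambda/2)$, and then use $\textrm{E}[1/\chi^{2}_{m}]=1/(m-2)$ for $m\geqslant 3$, which gives
\begin{equation*}
\textrm{E}\!\left[\frac{1}{\|Z\|^{2}}\right]=\sum_{j=0}^{\infty}\frac{e^{-\lambda/2}(\lambda/2)^{j}}{j!}\cdot\frac{1}{k+2j-2}\leqslant\frac{1}{k-2}<\infty,
\end{equation*}
whenever $k\geqslant 3$. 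This finiteness (together with the eigenvalue bound) yields $0<\omega<\infty$.

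Finally, to close the statement for $\eta$, I would simply invoke Corollary~\ref{corinterm}: the inequality $|\eta|\leqslant\eta^{\ddag}<\omega+\psi_{1}^{2}(\textrm{trace}(\bm{A})+k+\mu'\mu)/(2\psi_{0})$ has a right-hand side all of whose terms are finite under the Gaussian assumption (the matrices $\bm{A}$, $\bm{R}$, $\bm{B}$ are fixed and $\mu=-\bm{P}^{-1}\gamma$ is a fixed vector), and we have just shown $\omega<\infty$. Hence $|\eta|\leqslant\eta^{\ddag}<\infty$, which completes the corollary. The one conceptual obstacle is recognizing that $\omega$ is nothing but a scalar multiple of the reciprocal moment of a noncentral $\chi^{2}_{k}$; once this reduction is made, the threshold $k\geqslant 3$ appears as the well-known integrability condition for that moment.
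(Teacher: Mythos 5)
Your proof is correct and follows essentially the same route as the paper: the paper's own proof simply cites JM~(2004) for the fact that $k\geqslant 3$ gives $0<\omega<+\infty$ and then invokes Theorem~\ref{mainthm} (equivalently Corollary~\ref{corinterm}) for the finiteness of $\eta^{\ddag}$. Your explicit reduction of $\omega$ to the reciprocal moment of a noncentral $\chi^{2}_{k}$ via the eigenvalue bound and the Poisson mixture is exactly the computation the paper delegates to that citation (and reproduces itself in the proof of Theorem~\ref{propmain}), so you have merely filled in the outsourced step.
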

\begin{proof}
If $k\geqslant 3$, from the proof in JM~(2004), $0<\omega<+\infty$.
Then, by using Theorem~\ref{mainthm}, we get $|\eta|\leqslant\eta^{\ddag}<+\infty$.
\end{proof}
\begin{cor}\label{cor_mainh}
Suppose that Assumption~$(\mathcal{H}_{1})$ holds. Under normality, $k\geqslant3$ implies $0<\omega(h)<+\infty$ and
$|\eta(h)|<+\infty$.
\end{cor}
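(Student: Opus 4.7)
The plan is to combine Proposition~\ref{proetaomega} with Corollary~\ref{cor_main}, so essentially no new analytic work is needed; the corollary is really a packaging step that transfers the bounds on $\eta$ and $\omega$ to their $h$-weighted counterparts.

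First, I would invoke Proposition~\ref{proetaomega}, which gives a constant $q_{\scriptscriptstyle 0}>0$ (depending on the implicit $O$-constant in Assumption~$(\mathcal{H}_1)$) such that
\begin{equation*}
|\eta(h)|\leqslant q_{\scriptscriptstyle 0}\,\eta^{\ddag},\qquad
|\omega(h)|\leqslant q_{\scriptscriptstyle 0}^{2}\,\omega.
\end{equation*}
Next, since we are in the Gaussian setup and $k\geqslant 3$, Corollary~\ref{cor_main} furnishes $0<\omega<+\infty$ and $|\eta|\leqslant\eta^{\ddag}<+\infty$. Plugging these bounds into the inequalities from Proposition~\ref{proetaomega} immediately yields $\omega(h)<+\infty$ and $|\eta(h)|<+\infty$, which covers the finiteness half of the claim.

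For the lower bound $\omega(h)>0$, I would observe directly from the definition
$\omega(h)=\textrm{E}\!\left[h^{2}(\hat{\beta},\tilde{\beta})\,\|\hat{\beta}-\tilde{\beta}\|^{2}\right]$ that $\omega(h)\geqslant 0$, with equality only in the degenerate case where $h(\hat{\beta},\tilde{\beta})=0$ almost surely (which is implicitly excluded since we are studying a nontrivial shrinkage estimator; the case $h\equiv 0$ just returns the base estimator). Under the joint Gaussian distribution~\eqref{distrU} the density of $(\hat{\beta},\tilde{\beta})$ is strictly positive on $\mathbb{R}^{2k}$, so any measurable $h$ which is not almost surely zero yields $\omega(h)>0$.

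The only subtle ingredient is really Proposition~\ref{proetaomega}, whose proof is deferred to the Appendix; once that proposition is in hand, the present corollary is a one-line consequence of Corollary~\ref{cor_main}. There is therefore no true obstacle here: the work has already been done in establishing Propositions~\ref{born1}--\ref{born2} and Theorem~\ref{mainthm}, and the present statement is simply the form in which those estimates will be applied when computing the risk~\eqref{mse}.
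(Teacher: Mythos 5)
Your proof is correct and takes essentially the same route as the paper, whose entire argument for this corollary is the one-line combination of Proposition~\ref{proetaomega} with Corollary~\ref{cor_main}. Your additional observation that the strict positivity $\omega(h)>0$ fails only for the degenerate choice $h\equiv 0$ (which merely returns the base estimator) addresses a point the paper leaves implicit, and is a worthwhile clarification.
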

The proof follows from Proposition~\ref{proetaomega} and
Corollary~\ref{cor_main}.

Note that Corollary~\ref{cor_main} generalizes the main theorem
in~JM~(2004). Further, Corollary~\ref{cor_mainh} extends the result
of Corollary~\ref{cor_main} to a class of SR-type estimators which
includes the SR-type estimator in JM~(2004) as a special case.
\subsection{Extension to elliptically contoured random samples }
In this subsection, we show that the result given in
Corollary~\ref{cor_main} remains valid in the context of some
elliptically contoured random samples. The importance of such a
family of distributions is the primary source of our motivation.
Indeed, as discussed in the literature, elliptically contoured
distributions have been particularly useful in several areas of
applications such as actuarial science~(see Furman
and Landsman,~2006, Landsman and Valdez,~~2003), or
economics and finance (see Bingham and Kiesel,~2001).

Recall that  a class of elliptically contoured distributions
includes for example the multivariate Gaussian, t, Pearson type II and
VII, as well as Kotz distributions. To simplify the notation, let
$X\sim\mathcal{E}_{q}\left(\mu,\,\bm{\Sigma};g\right)$ stand for a
$q$-column random vector distributed as an elliptically contoured
vector with mean $\mu$ and scale parameter matrix
$\bm{\Sigma}$, where $\bm{\Sigma}$ is a positive definite matrix, and
$g$ is the
probability density function (p.d.f) generator.
For the sake of simplicity, we consider the case where the p.d.f of
$X\sim \mathcal{E}_{q}\left(\mu,\,\bm{\Sigma};g\right)$ is assumed
to be written as
\begin{eqnarray}
f_{X}(x) =\ds{\int_{0}^{\infty}}f_{\mathcal{N}\left(\mu,\,
z^{-1}\bm{\Sigma}\right)}(x)\kappa(z)dz, \label{mixture}
\end{eqnarray}
where $f_{\mathcal{N}\left(\mu,\, \bm{\Sigma}\right)}$ denotes the
p.d.f of a random vector which follows a normal distribution with
mean $\mu$ and variance-covariance $\bm{\Sigma}$, and $\kappa(.)$ is
a weighting function that satisfies
$\ds{\int_{0}^{\infty}}\ds{\frac{1}{t}}\left|\kappa(t)\right|\,dt<\infty$.
Note that the weighting function $\kappa(.)$ does not need to be
nonnegative. In the case where the function $\kappa(.)$ is
nonnegative, then $\kappa(.)$ is a p.d.f, and the subclass of
elliptically contoured distributions is known as a mixture of
multivariate normal distributions. For more details, we refer to
Chmielewski~(1981), Gupta and Verga~(1995), Nkurunziza and
Chen~(2013) among
others. In particular, Gupta and Verga~(1995) give the conditions on the p.d.f generator $g$
for the pdf of $X\sim
\mathcal{E}_{q}\left(\mu,\,\bm{\Sigma};g\right)$ to be rewritten as
in~\eqref{mixture}. From now on, we suppose that
$\left((\hat{\beta}-\beta)',\,(\tilde{\beta}-\beta)'\right)'$ has a
p.d.f which can be rewritten as in ~\eqref{mixture}, and in a
similar way to Section~\ref{sec:classSR}, let
\begin{eqnarray}
U=\left(
    \begin{array}{c}
      U_{1} \\
      U_{2} \\
    \end{array}
  \right)=\left(
            \begin{array}{c}
              \hat{\beta}-\beta \\
              \tilde{\beta}-\beta \\
            \end{array}
          \right)
\sim\mathcal{E}_{2k}\left((0,\gamma')',\left(
                                                   \begin{array}{cc}
                                                     \bm{A} & \bm{\Sigma} \\
                                                     \bm{\Sigma}' & \bm{\Phi} \\
                                                   \end{array}
                                                 \right)\label{ellipticallyU}
;g\right),
\end{eqnarray}
where $\gamma$, $\bm{A}$, $\bm{\Sigma}$, $\bm{\Phi}$ are as defined
in Section~\ref{sec:classSR}.
\begin{thm}\label{propmain}
Suppose that
$\left((\hat{\beta}-\beta)',\,(\tilde{\beta}-\beta)'\right)'$ is
distributed as in \eqref{ellipticallyU} and suppose that the weighting function $\kappa(.)$ satisfies
\quad{ }
$0<\ds{\int_{0}^{\infty}}t\left|\kappa(t)\right|\,dt<\infty$. Then,
\mbox{ }  $k\geqslant3$ \mbox{ } implies \mbox{ } $0<\omega<+\infty$
and $|\eta|\leqslant\eta^{\ddag}<+\infty$.
\end{thm}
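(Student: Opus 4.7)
The plan is to exploit the Gaussian scale-mixture representation \eqref{mixture} by conditioning on the latent scale $t$, reducing each expectation to a conditional Gaussian computation controlled by the results already established in the Gaussian case, and then integrating the bounds against $\kappa$. Let $\textrm{E}^{(t)}[\cdot]$ denote expectation under the conditional Gaussian law $\mathcal{N}\!\left(\mu_{0},\,t^{-1}\bm{\Sigma}_{0}\right)$, where $\mu_{0}=(\bm{0}',\bm{\gamma}')'$ and $\bm{\Sigma}_{0}$ is the joint covariance appearing in \eqref{ellipticallyU}, and set $\omega^{(t)}=\textrm{E}^{(t)}\!\left[1/(Z'\bm{R}Z)\right]$, $\eta^{\ddag,(t)}=\textrm{E}^{(t)}\!\left[|U_{1}'\bm{P}Z|/(Z'\bm{R}Z)\right]$. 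By Fubini---whose applicability I verify at the end---the targets rewrite as $\omega=\int_{0}^{\infty}\omega^{(t)}\kappa(t)\,dt$ and $\eta^{\ddag}=\int_{0}^{\infty}\eta^{\ddag,(t)}\kappa(t)\,dt$, so it suffices to obtain conditional bounds with the right scaling in $t$.

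For $\omega^{(t)}$, I would observe that under the conditional law $U_{1}-U_{2}\sim\mathcal{N}(-\bm{\gamma},t^{-1}\bm{\Xi})$; standardising via $Y=t^{1/2}\bm{\Xi}^{-1/2}(U_{1}-U_{2})\sim\mathcal{N}(-\sqrt{t}\,\bm{\Xi}^{-1/2}\bm{\gamma},\bm{I}_{k})$ and using $Z'\bm{R}Z=\|U_{1}-U_{2}\|^{2}=t^{-1}Y'\bm{\Xi}Y\geqslant t^{-1}\psi_{0}\|Y\|^{2}$ reduces matters to the reciprocal first moment of a noncentral $\chi^{2}_{k}$ variable, which for $k\geqslant3$ is bounded by $1/(k-2)$ through its Poisson-mixture representation. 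This yields
\begin{equation*}
\omega^{(t)}\;\leqslant\;\frac{t}{(k-2)\,\psi_{0}}.
\end{equation*}
For $\eta^{\ddag,(t)}$, using $\bm{P}Z=U_{1}-U_{2}$ and two applications of Cauchy--Schwarz, together with $\textrm{E}^{(t)}\!\left[\|U_{1}\|^{2}\right]=t^{-1}\textrm{trace}(\bm{A})$, gives the uniform-in-$t$ estimate
\begin{equation*}
\eta^{\ddag,(t)}\;\leqslant\;\sqrt{\textrm{E}^{(t)}\!\left[\|U_{1}\|^{2}\right]\,\omega^{(t)}}\;\leqslant\;\sqrt{\frac{\textrm{trace}(\bm{A})}{(k-2)\,\psi_{0}}}.
\end{equation*}

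With these two inequalities in hand, the conclusion is routine. The bound $\omega^{(t)}=O(t)$ together with the new assumption $\int_{0}^{\infty}t|\kappa(t)|\,dt<\infty$ makes $\omega^{(t)}\kappa(t)$ integrable, so $|\omega|<\infty$; positivity is immediate since $1/(Z'\bm{R}Z)>0$ almost surely under the nonnegative density $f_{U}$. For $\eta^{\ddag}$, the uniform conditional bound is integrable against $|\kappa(t)|\,dt$ because
\begin{equation*}
\int_{0}^{\infty}|\kappa(t)|\,dt\;\leqslant\;\int_{0}^{\infty}\bigl(t+t^{-1}\bigr)|\kappa(t)|\,dt\;<\;\infty,
\end{equation*}
where the right-hand side is finite by splitting at $t=1$ and invoking the new hypothesis on $[1,\infty)$ together with the base condition $\int_{0}^{\infty}t^{-1}|\kappa(t)|\,dt<\infty$ inherited from \eqref{mixture} on $(0,1]$. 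This also justifies the interchange in Fubini, yielding $\eta^{\ddag}<\infty$.

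The main obstacle is arranging the $t$-scalings so that the conditional Gaussian bounds survive integration against the possibly signed kernel $\kappa(t)\,dt$. The linear growth $\omega^{(t)}=O(t)$ is forced by the standardisation of $U_{1}-U_{2}$, and the moment condition $\int t|\kappa(t)|dt<\infty$ of the theorem is tailored precisely to absorb this growth; this is the key structural point that allows the Gaussian argument of Corollary~\ref{cor_main} to pass to the elliptically contoured setting.
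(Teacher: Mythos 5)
Your proof is correct, and for the bound on $\omega$ it follows essentially the paper's route: condition on the latent scale $t$, reduce $\textrm{E}^{(t)}[1/(Z'\bm{R}Z)]$ to the reciprocal moment of a noncentral $\chi^{2}_{k}$ (bounded by $1/(k-2)$ for $k\geqslant 3$), pick up the factor $t$ from the scaling, and absorb it with the hypothesis $\int_{0}^{\infty}t|\kappa(t)|\,dt<\infty$; the paper does the same after first sandwiching $\omega$ by $\textrm{E}(1/Z'Z)$ via Courant's theorem. Where you genuinely diverge is the treatment of $\eta^{\ddag}$. The paper does not recompute anything in the elliptical setting: it invokes Remark~\ref{rem:prop} to note that Propositions~\ref{born1}--\ref{born2}, Theorem~\ref{mainthm} and Corollary~\ref{corinterm} use only the mean and covariance of $U$, hence hold verbatim here, giving $\eta^{\ddag}<\omega+\psi_{1}^{2}\left(\textrm{trace}(\bm{A})+k+\mu'\mu\right)/(2\psi_{0})$ and thus finiteness of $\eta^{\ddag}$ for free once $\omega<\infty$, with no integrability of $\kappa$ beyond what $\omega$ already requires. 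You instead bound $\eta^{\ddag,(t)}$ directly by two applications of Cauchy--Schwarz, $\eta^{\ddag,(t)}\leqslant\bigl(\textrm{E}^{(t)}\|U_{1}\|^{2}\,\omega^{(t)}\bigr)^{1/2}\leqslant\bigl(\textrm{trace}(\bm{A})/((k-2)\psi_{0})\bigr)^{1/2}$, where the $t^{-1}$ from $\textrm{E}^{(t)}\|U_{1}\|^{2}$ exactly cancels the $t$ from $\omega^{(t)}$; this is more elementary and yields an explicit uniform-in-$t$ constant, but it costs you the extra step of checking $\int_{0}^{\infty}|\kappa(t)|\,dt<\infty$, which you correctly extract from the two stated conditions via $1\leqslant t+t^{-1}$. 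Both arguments are sound; the paper's is shorter given the machinery already built, while yours is self-contained and bypasses the truncation argument of Theorem~\ref{mainthm} entirely.
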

\begin{proof}
First, recall that the family of elliptically contoured
distribution is closed under linear transformations. Then, if
$\left((\hat{\beta}-\beta)',\,(\tilde{\beta}-\beta)'\right)'$ is
distributed as in \eqref{ellipticallyU},~\eqref{mse} and \eqref{mseopt} hold, and
then~\eqref{etaomega} holds with $Z\sim
\mathcal{E}_{k}\left(\mu,\bm{I}_{k};g\right)$. Therefore, by using
Remark~\ref{rem:prop}, we conclude that Corollary~\eqref{corinterm}
holds. Further, as in JM~(2004), we get
\begin{eqnarray*}
\ds{\frac{1}{\max\left(\lambda_{1}(\bm{R}),\lambda_{2}(\bm{R}),\dots,\lambda_{k}(\bm{R})\right)}}
\textrm{E}\left(\frac{1}{Z'Z}\right)<\omega<\ds{\frac{1}{\min\left(\lambda_{1}(\bm{R}),
\lambda_{2}(\bm{R}),\dots,\lambda_{k}(\bm{R})\right)}}
\textrm{E}\left(\frac{1}{Z'Z}\right).\label{courantthm}
\end{eqnarray*}
Then, it suffices to prove that $\textrm{E}\left(1\,\Big
/Z'Z\right)<\infty$ for all $k\geqslant3$.

From~\eqref{mixture} and Fubini's Theorem, we
have
\begin{eqnarray*}
\textrm{E}\left(1\,\Big /Z'Z\right)=\int_{0}^{\infty}\kappa(t)
\textrm{E}_{t,\mu}\left(1\, \Big /U_{0}'U_{0}\right) dt,
\end{eqnarray*}
where $U_{0}\sim
\mathcal{N}_{k}\left(\mu,\,t^{-1}\bm{I}_{k}\right)$. Note that,
\begin{eqnarray*}
\textrm{E}_{t,\mu}\left(\frac{1}{U_{0}'U_{0}}\right)=
t\textrm{E}_{t,\mu}\left(\frac{1}{(\sqrt{t}U_{0})'(\sqrt{t}U_{0})}\right)
=t\textrm{E}\left(\frac{1}{\chi^{2}_{k}(t\mu'\mu)}\right),
\end{eqnarray*}
and then,
\begin{eqnarray*}
0<\textrm{E}\left(1\, \Big /Z'Z\right)=\int_{0}^{\infty}t\kappa(t)
\textrm{E}\left(\chi^{-2}_{k}(t\mu'\mu)\right)
dt<\frac{1}{k-2}\int_{0}^{\infty}t\left|\kappa(t)\right| dt<+\infty,
\,
\end{eqnarray*}
\mbox{for all $k\geqslant 3$}. This completes the proof.
\end{proof}
\begin{cor}\label{coretah}
Suppose that Assumption~$(\mathcal{H}_{1})$ holds. Also, suppose that
$\left((\hat{\beta}-\beta)',\,(\tilde{\beta}-\beta)'\right)'$ is
distributed as in \eqref{ellipticallyU} and suppose that the weighting function $\kappa(.)$ satisfies 
\quad{ }\\
$0<\ds{\int_{0}^{\infty}}t\left|\kappa(t)\right|\,dt<\infty$. Then,
\mbox{ }  $k\geqslant3$ \mbox{ } implies \mbox{ }
$0<\omega(h)<+\infty$ and $|\eta(h)|<+\infty$.
\end{cor}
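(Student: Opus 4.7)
The plan is to mimic the two-line argument that proves Corollary~\ref{cor_mainh} in the Gaussian setting, replacing the invocation of Corollary~\ref{cor_main} by the elliptical analogue established in Theorem~\ref{propmain}. Concretely, I would first observe that Proposition~\ref{proetaomega} is distribution-free: its proof uses only Assumption~$(\mathcal{H}_{1})$ (which forces $|h(x,y)|\leqslant q_{\scriptscriptstyle 0}\|x-y\|^{-2}$) together with the identities $\|\hat{\beta}-\tilde{\beta}\|^{2}=Z'\bm{R}Z$ and $(\hat{\beta}-\beta)'(\hat{\beta}-\tilde{\beta})=U_{1}'\bm{P}Z$ arising from~\eqref{zandr}. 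These identities depend on the Cholesky decomposition of $\bm{\Xi}$, not on the distribution of $U$, so the bounds $|\eta(h)|\leqslant q_{\scriptscriptstyle 0}\eta^{\ddag}$ and $\omega(h)\leqslant q_{\scriptscriptstyle 0}^{2}\omega$ remain valid when $U$ is elliptically contoured.

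Next, I would apply Theorem~\ref{propmain} directly: under the stated weighting-function hypothesis on $\kappa$, together with $k\geqslant 3$, it yields $0<\omega<+\infty$ and $|\eta|\leqslant\eta^{\ddag}<+\infty$. Combining this with the inequalities from Proposition~\ref{proetaomega} gives $\omega(h)\leqslant q_{\scriptscriptstyle 0}^{2}\omega<+\infty$ and $|\eta(h)|\leqslant q_{\scriptscriptstyle 0}\eta^{\ddag}<+\infty$, which are the finiteness conclusions sought.

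For the strict positivity $\omega(h)>0$, since $\omega(h)=\textrm{E}[h^{2}(\hat{\beta},\tilde{\beta})\|\hat{\beta}-\tilde{\beta}\|^{2}]\geqslant 0$ and the integrand is nonnegative, the only way $\omega(h)$ vanishes is if $h(\hat{\beta},\tilde{\beta})=0$ almost surely, in which case the class estimator $\hat{\beta}^{S}(h,c)$ reduces to $\hat{\beta}$ and the corollary is trivial; otherwise $\omega(h)>0$ by absolute continuity of the elliptical law in~\eqref{ellipticallyU}.

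The only nontrivial point is the first step, namely verifying that Proposition~\ref{proetaomega} really is distribution-free. I expect the main obstacle to be nothing more than carefully tracing through its proof (given in the Appendix) to confirm that no Gaussian-specific step (e.g.\ a Stein identity) is used; once that is checked, the result is a one-line composition of Theorem~\ref{propmain} with Proposition~\ref{proetaomega}, in exact parallel with the normal-case derivation of Corollary~\ref{cor_mainh}.
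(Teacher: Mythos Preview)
Your proposal is correct and takes essentially the same approach as the paper: the paper's proof is the single line ``The proof follows by combining Proposition~\ref{proetaomega} and Theorem~\ref{propmain},'' which is precisely the composition you describe. Your additional verification that Proposition~\ref{proetaomega} is distribution-free is accurate (its Appendix proof uses only Assumption~$(\mathcal{H}_{1})$ and the algebraic identities from~\eqref{zandr}), and your treatment of the strict positivity $\omega(h)>0$ is in fact more careful than the paper, which leaves that point implicit.
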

The proof follows by combining Proposition~\ref{proetaomega} and Theorem~\ref{propmain}.
\begin{rem}
Note that in the Gaussian case, the weighting function $\kappa(t)$
is the Dirac delta function at $t-1$~(see Gupta and Varga,~1975). Thus, the conditions of
Theorem~\ref{propmain} hold since
$\ds{\int_{0}^{\infty}}t\left|\kappa(t)\right|dt=1$. This shows that
Corollary~\ref{cor_main} and Corollary~\ref{cor_mainh} are  special
cases of Theorem~\ref{propmain} and Corollary~\ref{coretah}
respectively.
\end{rem}
\subsection{Further extensions and statistical practice}\label{sec:extension}
\subsubsection{Singular distributions case}
In the previous sections, we derived the results under the
assumption that the joint distribution of $\hat{\beta}$ and
$\tilde{\beta}$ is not singular (see the
relation~\eqref{distrbeta}). This is a limitation which excludes,
for example, the case where the imprecise prior information is in
the form of a linear restriction between the parameters.
Nevertheless, this is particulary the case where there is a
restriction binding some regression coefficients. Indeed, such a
situation is common in economic theory where for example, as
introduced by Douglas and Cobb~(1928), the sum of the exponents in a
Cobb-Douglas production is known to  be one. Thus, in this
subsection, we consider that
$\left((\hat{\beta}-\beta)',\,(\tilde{\beta}-\beta)'\right)$ has the
same distribution as in~\eqref{distrbeta} where the matrices
$\bm{\Xi}$ and $\bm{\Phi}$ are (possibly) singular. For this kind of
problem, the joint distribution of $\hat{\beta}$ and $\tilde{\beta}$
is (possibly) singular and thus, it is important to show how the
proposed methodology works in this case. To this end, let $q$ be the
rank of $\bm{\Xi}$ with $q\leqslant k$. Briefly,  we show that,
under some conditions, the established results hold by replacing $k$
by $q$. Namely, a sufficient condition for the risk dominance of any
member of the class of SR-type estimators relative to the base
estimator is to let $q\geqslant3$. Of course, this condition implies
that $k\geqslant3$ since $k\geqslant q$. Namely, we suppose that the
following conditions hold.

\noindent {\bf \underline{Assumption $(\mathcal{H}_{2})$}}: {\em The
function $h\left(\hat{\beta},\tilde{\beta}\right)$ is a measurable
function of $\hat{\beta}-\tilde{\beta}$ only}.

\begin{rem}
Note that the function $h$ which gives the SR estimator satisfies
Assumption~$(\mathcal{H}_{2})$. Namely, for the SR estimator, we
have
$h\left(\hat{\beta},\tilde{\beta}\right)=\left\|\hat{\beta}-\tilde{\beta}\right\|^{-2}$.
\end{rem}

\noindent {\bf \underline{Assumption $(\mathcal{H}_{3})$}}: {\em There exists a symmetric and positive definite matrix $\bm{\Lambda}$ such that $\bm{\Lambda}^{1/2}\bm{\Xi}\bm{\Lambda}^{1/2}$ is idempotent and $\bm{\Lambda}\bm{\Xi}\bm{\Lambda}\gamma=\bm{\Lambda}\gamma$}.

\begin{rem}
It should be noted that in the case where $\bm{\Xi}$ is invertible,
it suffices to take $\bm{\Lambda}=\bm{\Xi}^{-1}$. Below we give
another, more specific, example of a matrix $\bm{\Lambda}$ in the
case where the prior information is a linear restriction on the
regression coefficients.
\end{rem}
\begin{thm} \label{furtherext} Suppose that Assumptions~$(\mathcal{H}_{1})$-$(\mathcal{H}_{3})$
hold. Under normality, \\$k\geqslant q\geqslant
3$ implies $0<\omega(h)<\infty$ and $|\eta(h)|<+\infty$.
\end{thm}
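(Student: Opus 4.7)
\textbf{Proof proposal for Theorem \ref{furtherext}.} The plan is to reduce the singular case to the non-singular framework already established in Propositions \ref{proetaomega}--\ref{born2}, Theorem \ref{mainthm}, and Corollary \ref{corinterm}, by constructing a $q$-dimensional non-degenerate Gaussian vector $Z$ that plays the role of the Cholesky-based $Z$ used in \eqref{zandr}. First, Assumption~$(\mathcal{H}_{3})$ guarantees that $\bm{M}:=\bm{\Lambda}^{1/2}\bm{\Xi}\bm{\Lambda}^{1/2}$ is a symmetric idempotent matrix of rank $q$, so it admits the factorization $\bm{M}=\bm{Q}\bm{Q}'$ with $\bm{Q}$ a $k\times q$ matrix whose columns are orthonormal ($\bm{Q}'\bm{Q}=\bm{I}_{q}$). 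Setting $Z=\bm{Q}'\bm{\Lambda}^{1/2}(U_{1}-U_{2})$, one checks that $\mathrm{cov}(Z)=\bm{Q}'\bm{M}\bm{Q}=\bm{I}_{q}$, so under normality $Z\sim\mathcal{N}_{q}(-\bm{Q}'\bm{\Lambda}^{1/2}\gamma,\bm{I}_{q})$.

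The second key step is to show that $U_{1}-U_{2}=\bm{\Lambda}^{-1/2}\bm{Q}Z$ almost surely, which is the main technical point. Pre-multiplying the identity $\bm{\Lambda}\bm{\Xi}\bm{\Lambda}\gamma=\bm{\Lambda}\gamma$ by $\bm{\Lambda}^{-1/2}$ shows that $\bm{\Lambda}^{1/2}\gamma$ is a fixed point of the orthogonal projection $\bm{M}=\bm{Q}\bm{Q}'$, hence lies in the range of $\bm{Q}$. The Gaussian vector $\bm{\Lambda}^{1/2}(U_{1}-U_{2})$ therefore has both its mean and its covariance supported in $\mathrm{range}(\bm{Q})$, so the orthogonal component $(\bm{I}_{k}-\bm{Q}\bm{Q}')\bm{\Lambda}^{1/2}(U_{1}-U_{2})$ is a degenerate Gaussian with zero mean and zero covariance, hence vanishes a.s. Consequently $U_{1}-U_{2}=\bm{\Lambda}^{-1/2}\bm{Q}Z$, and in particular $\|U_{1}-U_{2}\|^{2}=Z'\bm{R}Z$ with $\bm{R}:=\bm{Q}'\bm{\Lambda}^{-1}\bm{Q}$ symmetric positive definite on $\mathbb{R}^{q}$.

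With this reduction in hand, Assumption~$(\mathcal{H}_{2})$ makes $h(\hat{\beta},\tilde{\beta})$ a measurable function of $U_{1}-U_{2}$, hence of $Z$ alone, and Assumption~$(\mathcal{H}_{1})$ yields (as in Proposition~\ref{proetaomega})
\[
\omega(h)\leqslant q_{0}^{2}\,\mathrm{E}\!\left[1/(Z'\bm{R}Z)\right],\qquad |\eta(h)|\leqslant q_{0}\,\mathrm{E}\!\left[|U_{1}'\bm{\Lambda}^{-1/2}\bm{Q}Z|/(Z'\bm{R}Z)\right].
\]
The first expectation is bounded by $\lambda_{\min}(\bm{R})^{-1}\mathrm{E}[1/\|Z\|^{2}]$, which is finite because $Z$ is $q$-dimensional Gaussian and $q\geqslant 3$ (the usual inverse non-central chi-square moment bound, as already used in Corollary~\ref{cor_main} and Theorem~\ref{propmain}). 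This gives $0<\omega(h)<+\infty$.

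Finally, for $|\eta(h)|$, I would transcribe Propositions~\ref{born1}--\ref{born2}, Theorem~\ref{mainthm}, and Corollary~\ref{corinterm} verbatim with $\bm{P}$ replaced by $\bm{\Lambda}^{-1/2}\bm{Q}$ and $k$ replaced by $q$ in all dimension-dependent quantities; this is legitimate because Remark~\ref{rem:prop} already guarantees that those arguments only use the mean/covariance structure and not the joint Gaussianity of $W=(U_{1}',Z')'$. Splitting the expectation on $\{\|W\|\leqslant\alpha\}$ versus $\{\|W\|>\alpha\}$ and optimizing $\alpha$ as in Corollary~\ref{corinterm} then yields $|\eta(h)|\leqslant\eta^{\ddag}<+\infty$ whenever $0<\omega<+\infty$. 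The only genuinely new difficulty is Step 2 — confirming that the mean of $U_{1}-U_{2}$ does not leak into the null space of $\bm{\Xi}$ — and this is exactly why the alignment condition $\bm{\Lambda}\bm{\Xi}\bm{\Lambda}\gamma=\bm{\Lambda}\gamma$ is imposed in Assumption~$(\mathcal{H}_{3})$.
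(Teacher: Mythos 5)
Your proposal is correct, but it takes a genuinely different route from the paper's. The paper performs no dimension reduction of $Z$ at all: it first establishes $\omega(h)<\infty$ for $q\geqslant 3$ (Proposition~\ref{propomegah}) by bounding $1/\|\hat{\beta}-\tilde{\beta}\|^{2}$ above by $\textrm{trace}(\bm{\Lambda}\bm{\Xi}\bm{\Lambda})\big/\left((\hat{\beta}-\tilde{\beta})'\bm{\Lambda}\bm{\Xi}\bm{\Lambda}(\hat{\beta}-\tilde{\beta})\right)$ and invoking Mathai and Provost's theorem to identify that quadratic form as a noncentral $\chi^{2}_{q}$ --- this is where the idempotency in $(\mathcal{H}_{3})$ enters; it then controls $\eta(h)$ by the algebraic split $(\hat{\beta}-\beta)=(\hat{\beta}-\tilde{\beta})+(\tilde{\beta}-\beta)$, bounding the first piece by $q_{0}$ via $(\mathcal{H}_{1})$ and the second by $\|\gamma\|\,\omega^{1/2}(h)$ using the independence of $\tilde{\beta}-\beta$ and $\hat{\beta}-\tilde{\beta}$, Assumption~$(\mathcal{H}_{2})$, and Cauchy--Schwarz, so that everything reduces to the finiteness of $\omega(h)$. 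Your construction of the nondegenerate $q$-dimensional $Z=\bm{Q}'\bm{\Lambda}^{1/2}(U_{1}-U_{2})$ and the almost-sure identity $U_{1}-U_{2}=\bm{\Lambda}^{-1/2}\bm{Q}Z$ is sound (the alignment condition $\bm{\Lambda}\bm{\Xi}\bm{\Lambda}\gamma=\bm{\Lambda}\gamma$ is indeed exactly what keeps the mean inside $\mathrm{range}(\bm{Q})$), and re-running Propositions~\ref{born1}--\ref{born2} with the $k\times q$ matrix $\bm{\Lambda}^{-1/2}\bm{Q}$ in place of $\bm{P}$ does go through, since those arguments use only $\bm{R}=\bm{P}'\bm{P}$ being positive definite and the first two moments of $W$ (Remark~\ref{rem:prop}). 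What your route buys: a quantitative bound on $\eta^{\ddag}$ in the singular case, and no reliance on the independence of $\tilde{\beta}-\beta$ and $\hat{\beta}-\tilde{\beta}$ --- in fact $(\mathcal{H}_{2})$ is not load-bearing in your argument --- whereas the paper's proof invokes that independence, which holds in the linear-restriction setting~\eqref{distconstraint} where $\bm{\Sigma}=\bm{\Phi}$ but is an extra structural fact not listed among the hypotheses. What the paper's route buys: a much shorter treatment of $\eta(h)$ that avoids re-deriving the earlier propositions for a rectangular $\bm{P}$.
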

The proof of this theorem is given in the Appendix.
\subsubsection{Special singular case: Linear restriction}
In this subsection, we show that the proposed methodology works in a
very special case where the uncertain prior information refers to a
certain linear restriction. In particular, we consider the case
where the restriction is of the form
\begin{eqnarray}
\bm{R}\beta=r,\label{constraint}
\end{eqnarray}
where $\bm{R}$ is a known $q\times k$-full matrix with $q\leqslant
k$; $r$ is a known $q$-column vector. With a suitable choice of the
matrix $\bm{R}$ and the vector $r$, the constraint~\eqref{constraint} yields the
case where some regression coefficients are not statistically
significant i. e. their corresponding explanatory variables should
be excluded from the model.

Under the constraint in~\eqref{constraint}, the restricted estimator
 for $\beta$ is
$\tilde{\beta}=\hat{\beta}+\bm{J}\left(\bm{R}\hat{\beta}-r\right)$,
where
$\bm{J}=(\bm{X}'\bm{X})^{-1}\bm{R}'\left[\bm{R}(\bm{X}'\bm{X})^{-1}\bm{R}'\right]^{-1}$.
Then, if the restriction in~\eqref{constraint} does not hold and if
the error is normally distributed, it can be also verified that
\begin{eqnarray}
\left(
  \begin{array}{c}
    \hat{\beta}-\beta \\
    \tilde{\beta}-\beta \\
  \end{array}
\right)\sim
\mathcal{N}_{2k}\left(\left(\bm{0},\gamma'\right)',\left(
                                                     \begin{array}{cc}
                                                       \bm{A} & \bm{A}-\bm{J}\bm{R}\bm{A} \\
                                                       \bm{A}-\bm{J}\bm{R}\bm{A} & \bm{A}
                                                       -\bm{J}\bm{R}\bm{A} \\
                                                     \end{array}
                                                   \right)
\right),\label{distconstraint}
\end{eqnarray}
where $\gamma=\bm{J}\left(\bm{R}\beta-r\right)$ and $\bm{A}=\sigma^{2}\left(\bm{X}'\bm{X}\right)^{-1}$. Thus, here, the variance-covariance
matrix of $(\hat{\beta}',\tilde{\beta}')$ is singular and so is the
variance-covariance matrix of $\tilde{\beta}$. The following proposition
shows that the Assumption~$(\mathcal{H}_{3})$ holds
by taking $\bm{\Lambda}=\bm{A}^{-1}$.  Thus, the proposed
methodology works in this practical case.

\begin{prn} \label{furtherexts} Suppose that the base and restricted estimators
follow the distribution in~\eqref{distconstraint} and let
$\bm{\Lambda}=\bm{A}^{-1}$, then, Assumption~$(\mathcal{H}_{2})$
holds.
\end{prn}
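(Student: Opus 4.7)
The statement to be established identifies $\bm{\Lambda}=\bm{A}^{-1}$ as a matrix witnessing Assumption~$(\mathcal{H}_{3})$ in the linear-restriction setup of~\eqref{distconstraint} (the label ``$(\mathcal{H}_{2})$'' in the statement appears to be a typo, since the preceding paragraph announces that this proposition verifies $(\mathcal{H}_{3})$). Accordingly, the plan is to check the two defining requirements of $(\mathcal{H}_{3})$: that $\bm{A}^{-1/2}\bm{\Xi}\bm{A}^{-1/2}$ is idempotent, and that $\bm{A}^{-1}\bm{\Xi}\bm{A}^{-1}\gamma=\bm{A}^{-1}\gamma$.

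The first preparatory step is to obtain a clean closed form for $\bm{\Xi}$ from the covariance structure in~\eqref{distconstraint}. Since here $\bm{\Sigma}=\bm{\Phi}=\bm{A}-\bm{J}\bm{R}\bm{A}$ and the matrix $\bm{J}\bm{R}\bm{A}=\bm{A}\bm{R}'(\bm{R}\bm{A}\bm{R}')^{-1}\bm{R}\bm{A}$ is symmetric (this symmetry is the main thing to notice, and it follows at once from the explicit form of $\bm{J}$), the definition $\bm{\Xi}=\bm{A}-\bm{\Sigma}-\bm{\Sigma}'+\bm{\Phi}$ collapses to
\begin{equation*}
\bm{\Xi}=\bm{J}\bm{R}\bm{A}=\bm{A}\bm{R}'(\bm{R}\bm{A}\bm{R}')^{-1}\bm{R}\bm{A}.
\end{equation*}

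Armed with this, the idempotency check is essentially one line: conjugating by $\bm{A}^{-1/2}$ yields the matrix $\bm{M}=\bm{A}^{1/2}\bm{R}'(\bm{R}\bm{A}\bm{R}')^{-1}\bm{R}\bm{A}^{1/2}$, and squaring it produces the inner block $(\bm{R}\bm{A}\bm{R}')^{-1}(\bm{R}\bm{A}\bm{R}')(\bm{R}\bm{A}\bm{R}')^{-1}=(\bm{R}\bm{A}\bm{R}')^{-1}$, so $\bm{M}^{2}=\bm{M}$. For the second condition, I would substitute $\gamma=\bm{J}(\bm{R}\beta-r)=\bm{A}\bm{R}'(\bm{R}\bm{A}\bm{R}')^{-1}(\bm{R}\beta-r)$ and compute both $\bm{A}^{-1}\bm{\Xi}\bm{A}^{-1}\gamma$ and $\bm{A}^{-1}\gamma$; the needed equality reduces to the identity $\bm{R}\gamma=\bm{R}\beta-r$, which in turn follows from $\bm{R}\bm{J}=\bm{I}_{q}$, i.e. from the definition of $\bm{J}$.

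I do not anticipate a serious obstacle: the proof is essentially algebraic bookkeeping that exploits the projection-type identity $\bm{R}\bm{J}=\bm{I}_{q}$ and the symmetry of $\bm{J}\bm{R}\bm{A}$. The only point that requires any care is the clean reduction of $\bm{\Xi}$ to $\bm{J}\bm{R}\bm{A}$, because one must recognize the symmetry of $\bm{J}\bm{R}\bm{A}$ in order to combine the four summands in $\bm{A}-\bm{\Sigma}-\bm{\Sigma}'+\bm{\Phi}$ correctly; once that simplification is in hand, both idempotency and the bias condition fall out by inspection.
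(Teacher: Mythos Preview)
Your proposal is correct and follows exactly the route the paper takes: the paper's proof simply records that $\bm{\Xi}=\bm{J}\bm{R}\bm{A}$ and $\gamma=\bm{J}(\bm{R}\beta-r)$ and then defers to ``standard algebraic computations,'' which are precisely the idempotency check for $\bm{A}^{-1/2}\bm{\Xi}\bm{A}^{-1/2}$ and the bias identity you spell out. Your observation that the statement should read $(\mathcal{H}_{3})$ rather than $(\mathcal{H}_{2})$ is also correct.
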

\begin{proof}
We have $\bm{\Xi}=\bm{J}\bm{R}\bm{A}$ and
$\gamma=\bm{J}\left(\bm{R}\beta-r\right)$. Then, the proof follows
after applying standard algebraic computations.
\end{proof}
\begin{rem}
Actually, an even more general result could be proved. Indeed, by
using the similar transformation as in Nkurunziza~(2013), one can extend
Theorem~\ref{furtherext} to the case of singular elliptically
contoured distribution.
\end{rem}
\section{A simulation study and data analysis}\label{sec:simuldata}
\subsection{A Simulation Study }\label{sec:simulat} In this section, we
carry out  Monte Carlo  simulation studies to examine the mean
square error (MSE) performance of the SPSL over the base estimator.
To this end, we follow the similar sampling experiments as in
JM~(2004). Namely, for $k=3$ and $k=4$, we consider the general
linear model
\[
Y_{i}=X[i,.]\beta+\epsilon_{i}=\sum_{j=1}^{k}\beta_{j}X[i,j]+\epsilon_{i},
\quad{ } \mbox{ for } i=1,2,\dots,n,
\]
for small and large sample sizes. In order to save the space, we report only
 the results for $n=15$
and $n=25$. 
Although, not reported here, similar
results hold for $n=50$ and $n=125$ (they are available
from the author upon request). For the dimension of the parameter
vector $\beta$, here, we focus only on the
cases where $k=3$ and $k=4$, as the case $k=5$ has been
studied in JM~(2004). The $n\times k$-matrix $\bm{X}$ and the noise
$\epsilon$ were generated by following the sampling design described
in JM~(2004). For the convenience of the reader, we outline below
this sampling design.

Briefly, as in the quoted paper, for $k=3$ and $k=4$, the first
column of the $n\times k$ matrix $\bm{X}$ is a column of unit values
and the remaining columns of the $X[i,.]$'s are generated
independently from a $(k-1)$-dimensional normal distribution with a
mean vector of 1s, standard deviations all equal to 1, and various
levels of pairwise correlations. Further, the observations of
the $\epsilon_{i}$'s were generated independently based on various
normal probability distributions, all defined to have zero means
over a range of standard deviations. For every sample size, 5,000
replications were carried out in order to compute the empirical
quadratic risk estimates.

As in JM~(2004), we take
$\tilde{\beta}=\left[\textrm{diag}\left(\bm{X}'\bm{X}\right)\right]^{-1}\bm{X}'y$,
where $\textrm{diag}\left(\bm{X}'\bm{X}\right)$ denotes a $k\times
k$-diagonal matrix. Further, as in JM~(2004), the comparison between
the SPLS and LS estimators is based on the quantity called the
relative mean square efficiency (RMSE) of the estimators with
respect to LS, namely
$$
\textrm{RMSE}\left( \mbox{proposed
estimator}\right)=\mbox{risk}\left(\mbox{proposed
estimator}\right)\Big/
  \mbox{risk}
  \left(LS \right).
  $$
Therefore, we
  have
\begin{eqnarray*}
\textrm{RMSE}\left(LS\right)&=&
\mbox{risk}\left(LS\right)\Big/\mbox{risk}\left(LS\right)=1, \,
\textrm{RMSE}\left(SPLS\right)=\mbox{risk}\left(SPLS\right)\Big/\mbox{risk}\left(LS\right).
\end{eqnarray*}
  Thus,
    a relative efficiency
less than one indicates the degree of superiority of the new
estimator over the LS estimator.

\begin{figure}[htbp]
\centering
\subfigure[\mbox{$n=15$,  $\sigma=0.1$, $k=3$}]{
\label{T15p6} 
\includegraphics[height=1.65in,width=2.75in]{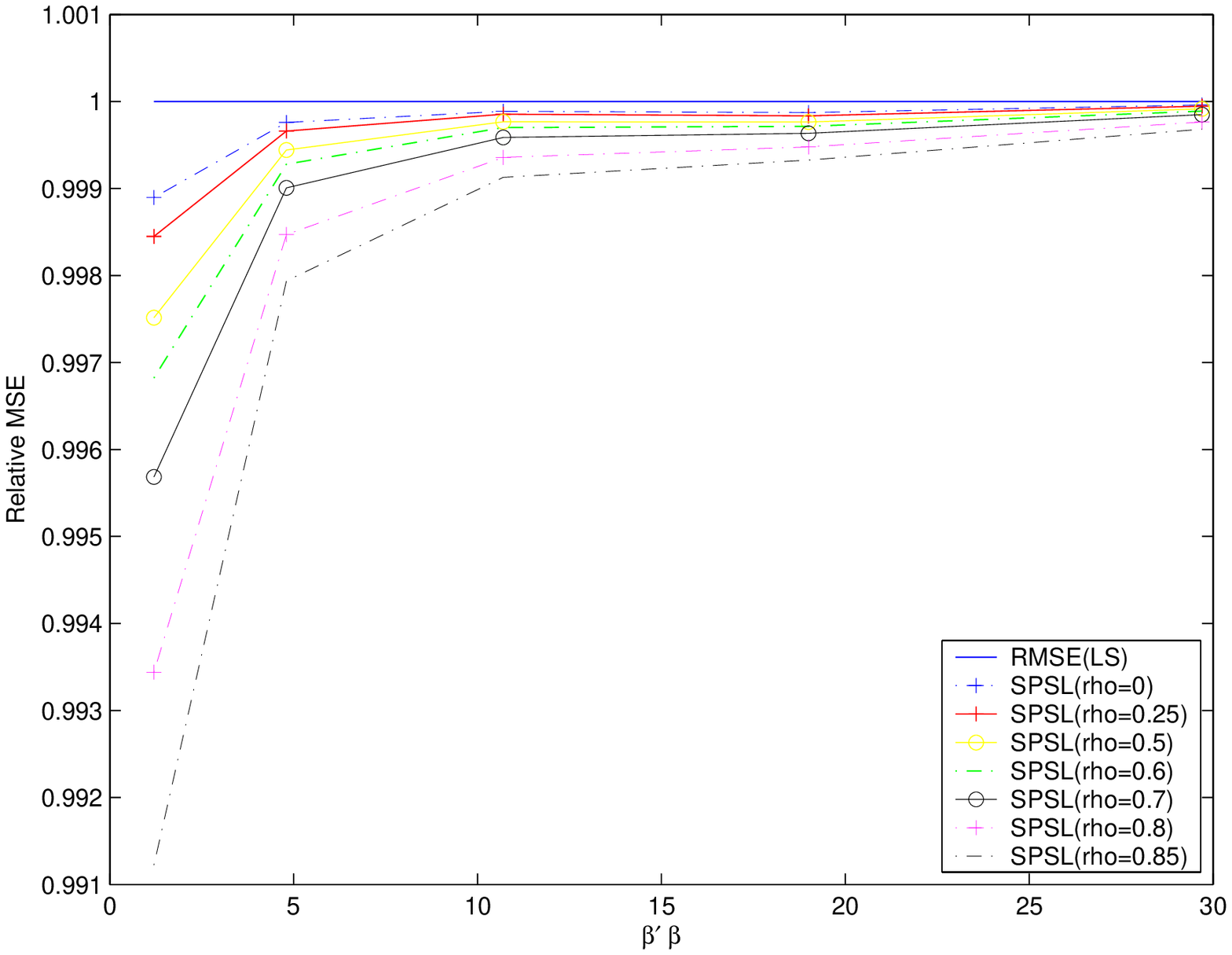}
} \subfigure[\mbox{$n=15$, $\sigma=0.25$, $k=3$} ]{
\label{T30p6} 
\includegraphics[height=1.65in,width=2.75in]{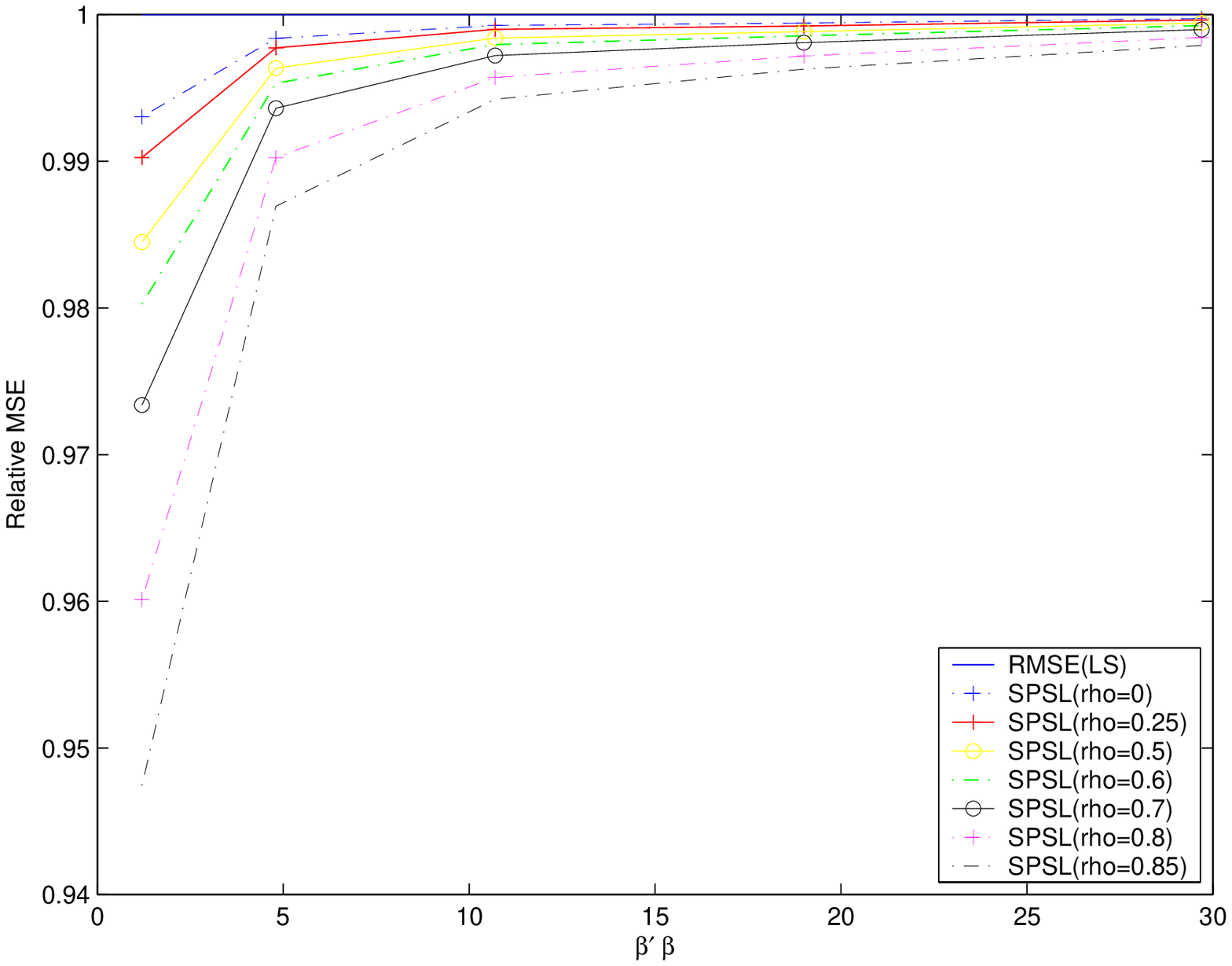}
} \subfigure[\mbox{$n=15$, $\sigma=0.5$, $k=3$}]{
\label{T40p6} 
\includegraphics[height=1.65in,width=2.75in]{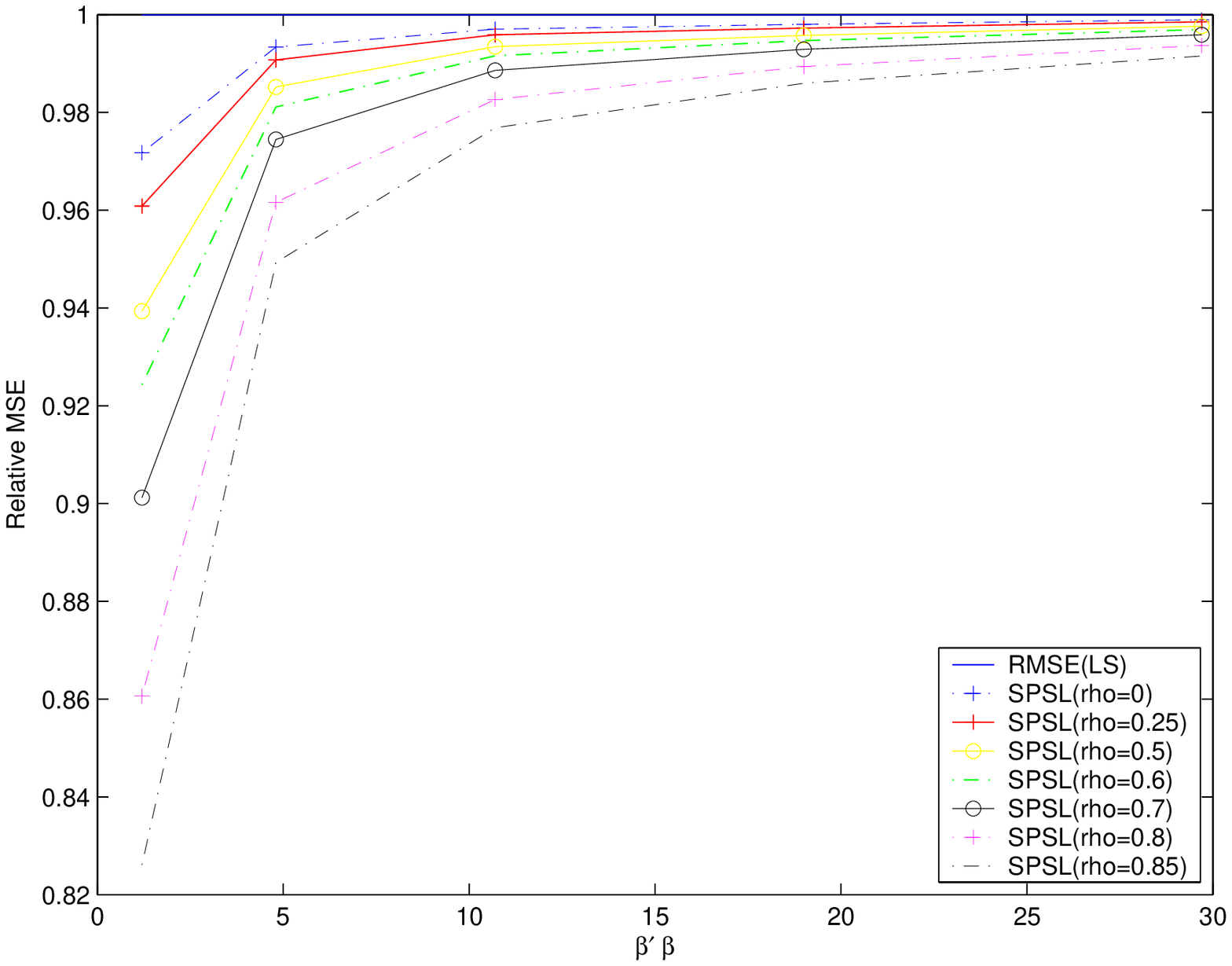}
} \subfigure[\mbox{$n=15$, $\sigma=1$, $k=3$}]{
\label{T50p6} 
\includegraphics[height=1.65in,width=2.75in]{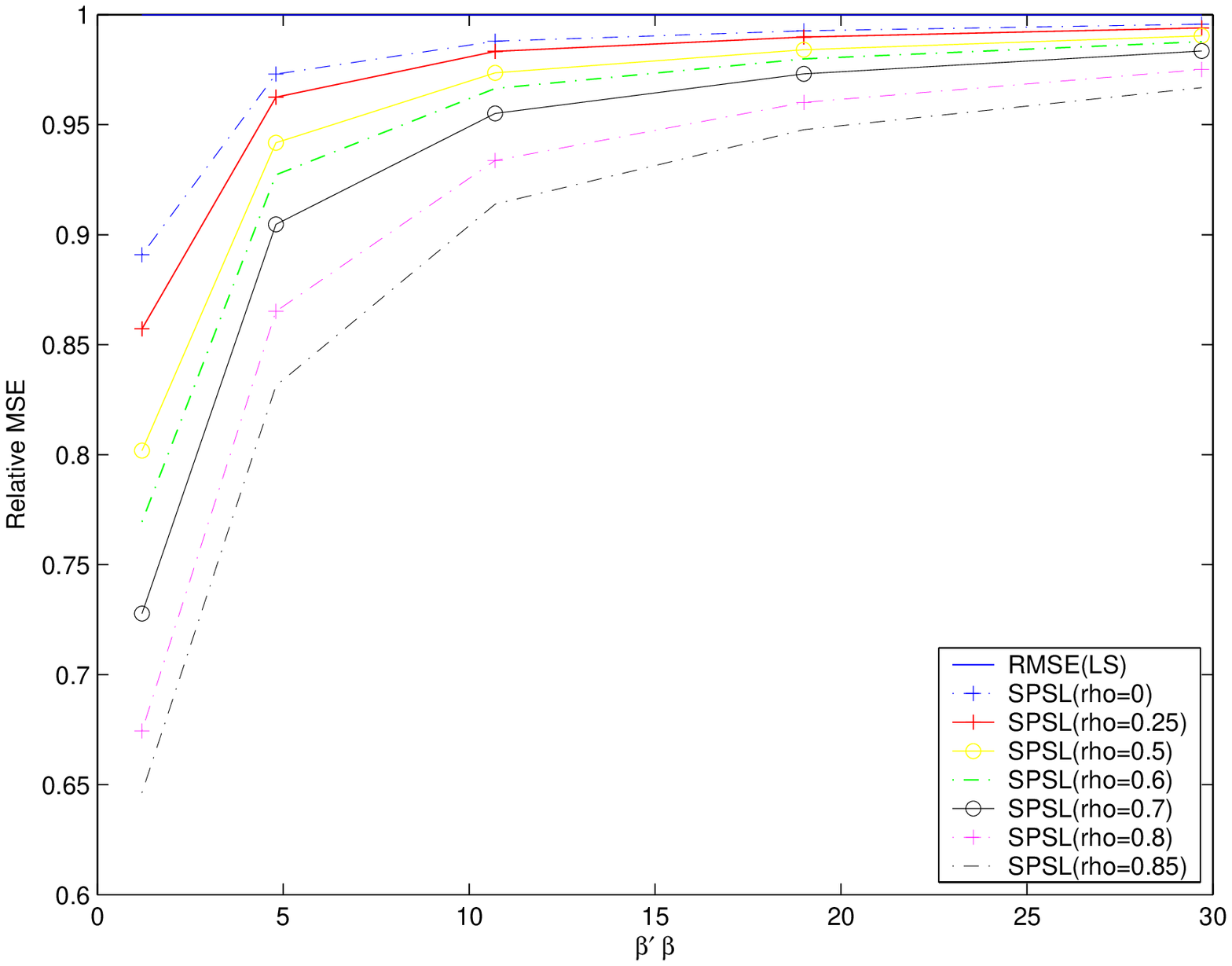}
} \subfigure[\mbox{$n=25$,  $\sigma=0.1$, $k=3$}]{
\label{T15p6} 
\includegraphics[height=1.65in,width=2.75in]{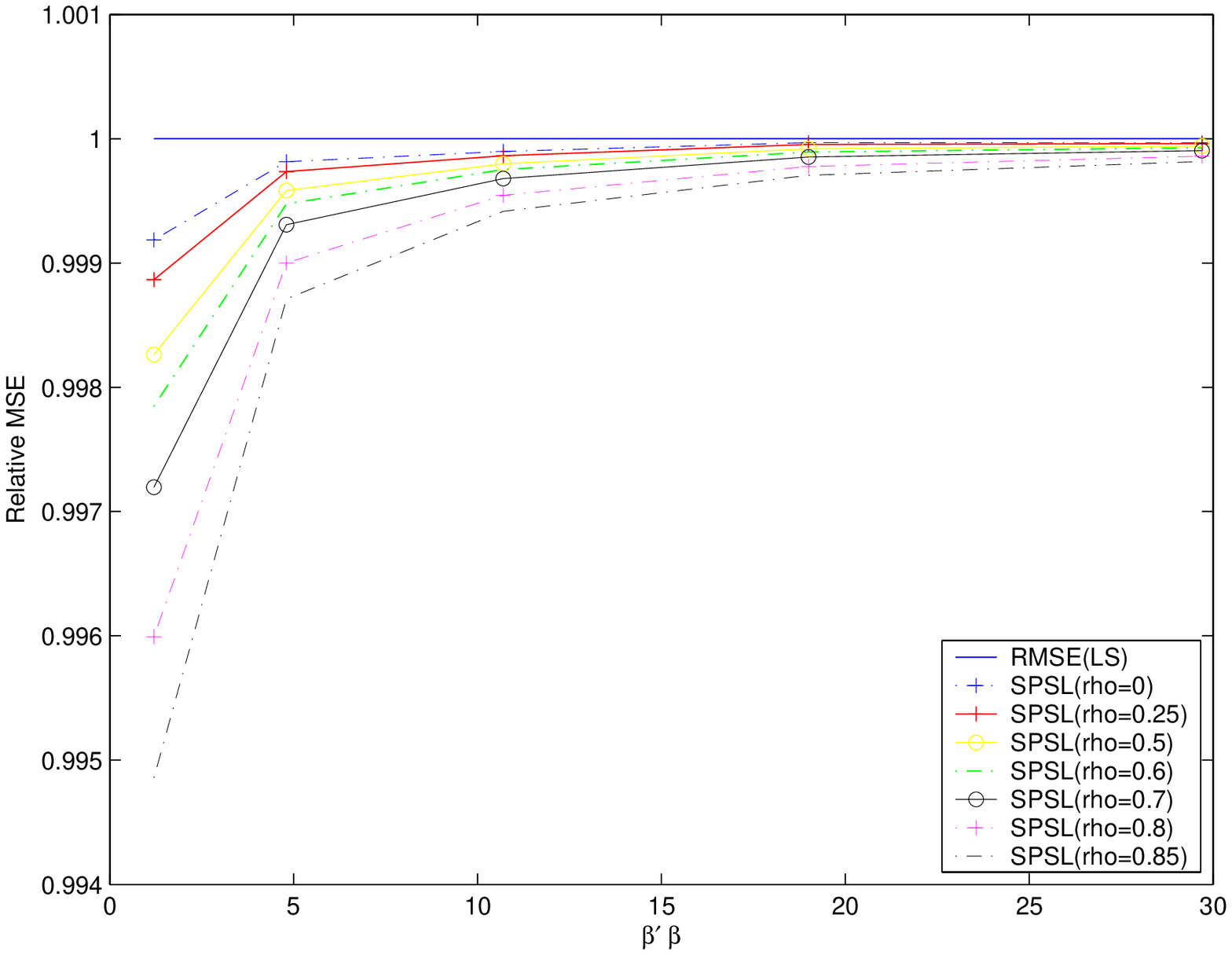}
} \subfigure[\mbox{$n=25$, $\sigma=0.25$, $k=3$} ]{
\label{T30p6} 
\includegraphics[height=1.65in,width=2.75in]{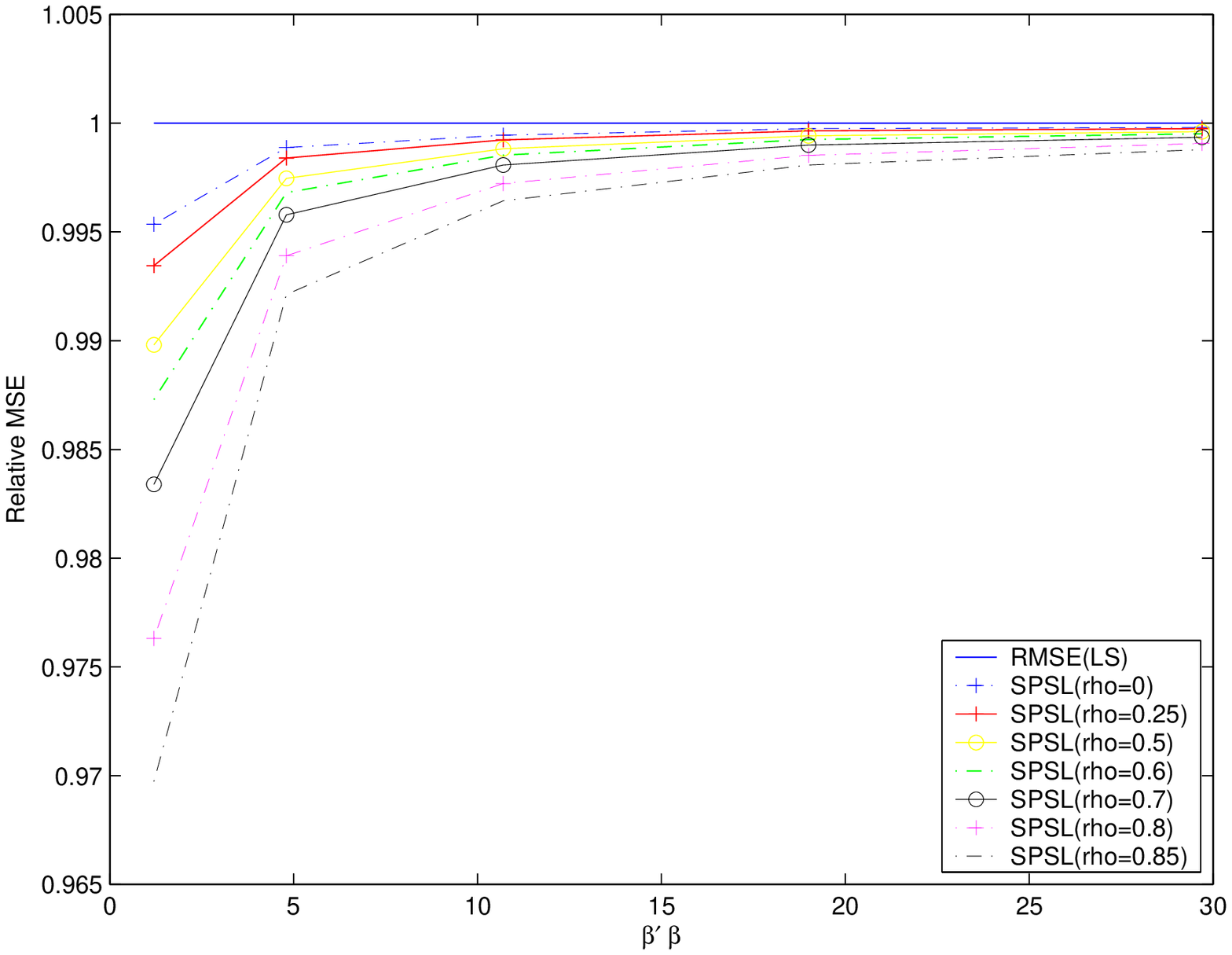}
} \subfigure[\mbox{$n=25$, $\sigma=0.5$, $k=3$}]{
\label{T40p6} 
\includegraphics[height=1.65in,width=2.75in]{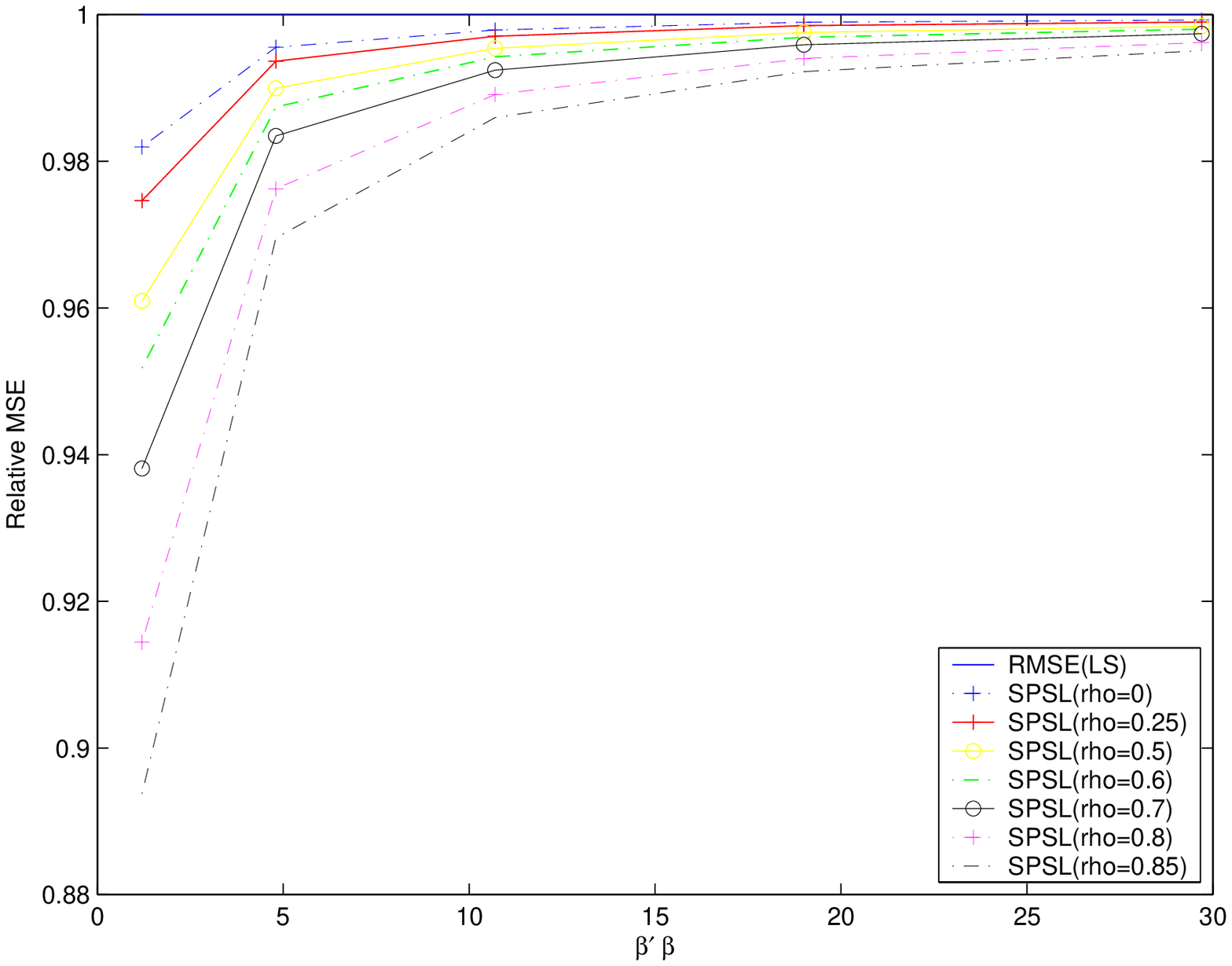}
} \subfigure[\mbox{$n=25$, $\sigma=1$, $k=3$}]{
\label{T50p6} 
\includegraphics[height=1.65in,width=2.75in]{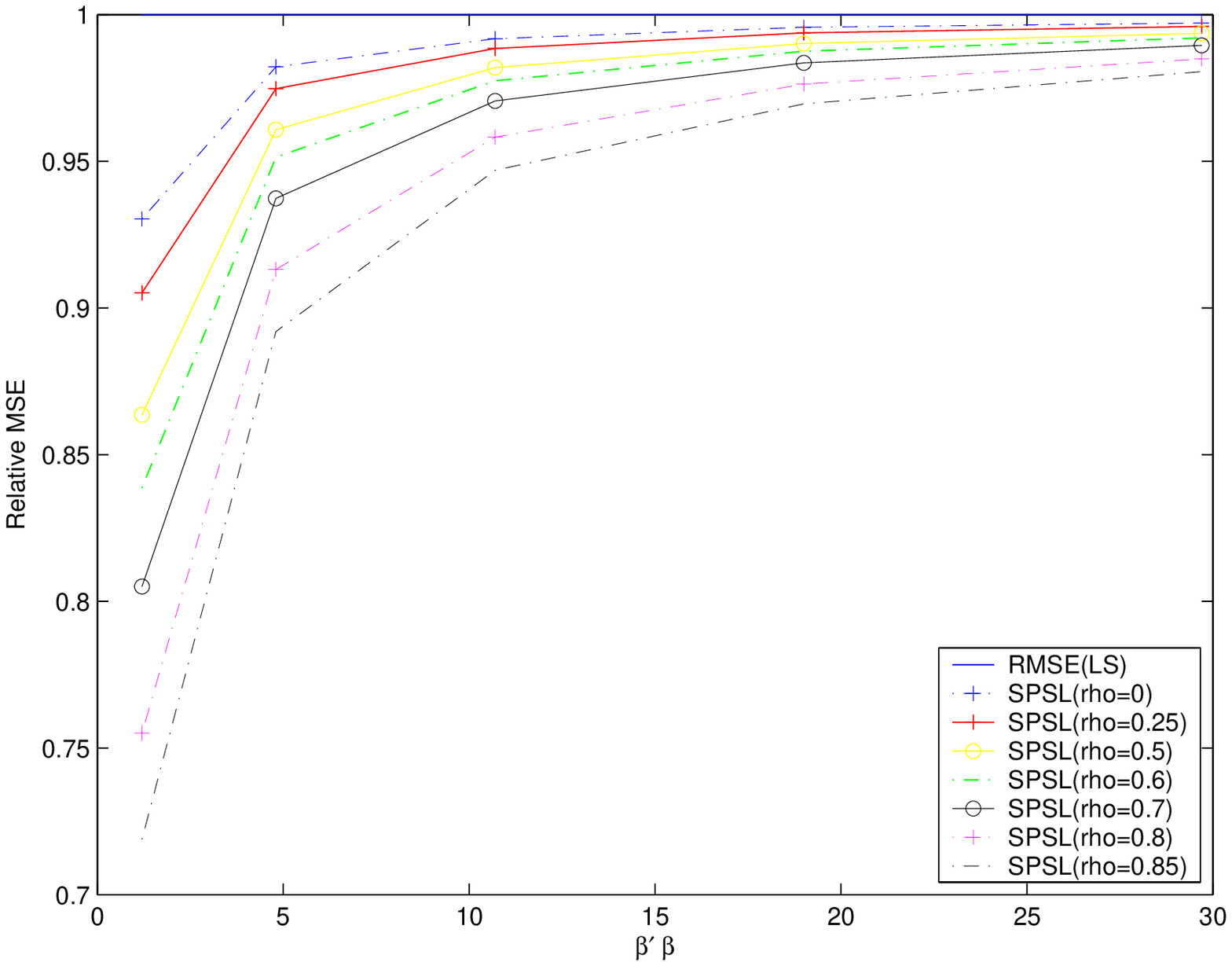}
}
 \caption{Relative efficiency versus \mbox{$\beta'\beta$ }}
\label{figloopbeta}
\end{figure}

\begin{figure}[htbp]
\centering
\subfigure[\mbox{$n=15$,  $\sigma=0.1$, $k=3$}]{
\label{T15p6} 
\includegraphics[height=1.65in,width=2.75in]{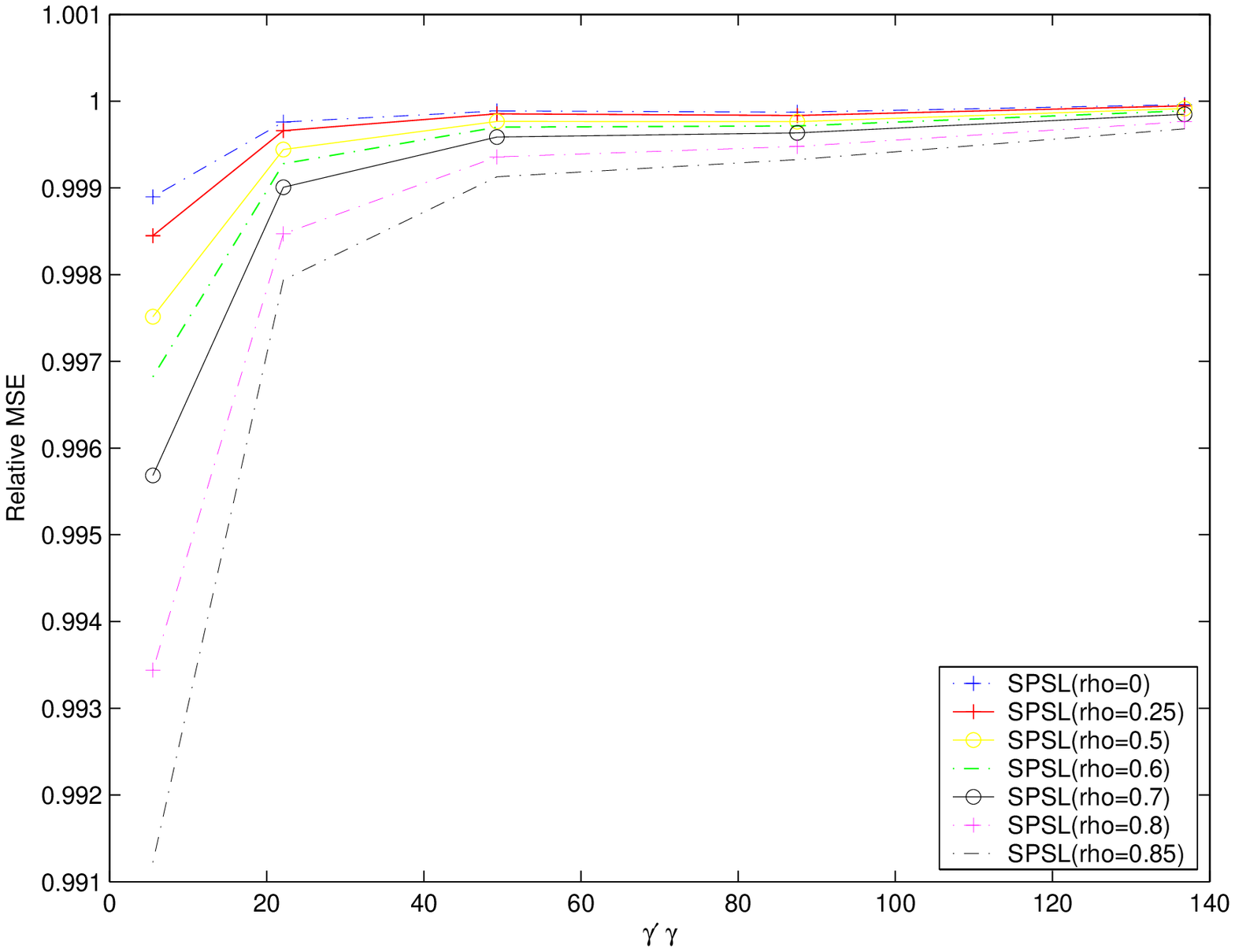}
} \subfigure[\mbox{$n=15$, $\sigma=0.25$, $k=3$} ]{
\label{T30p6} 
\includegraphics[height=1.65in,width=2.75in]{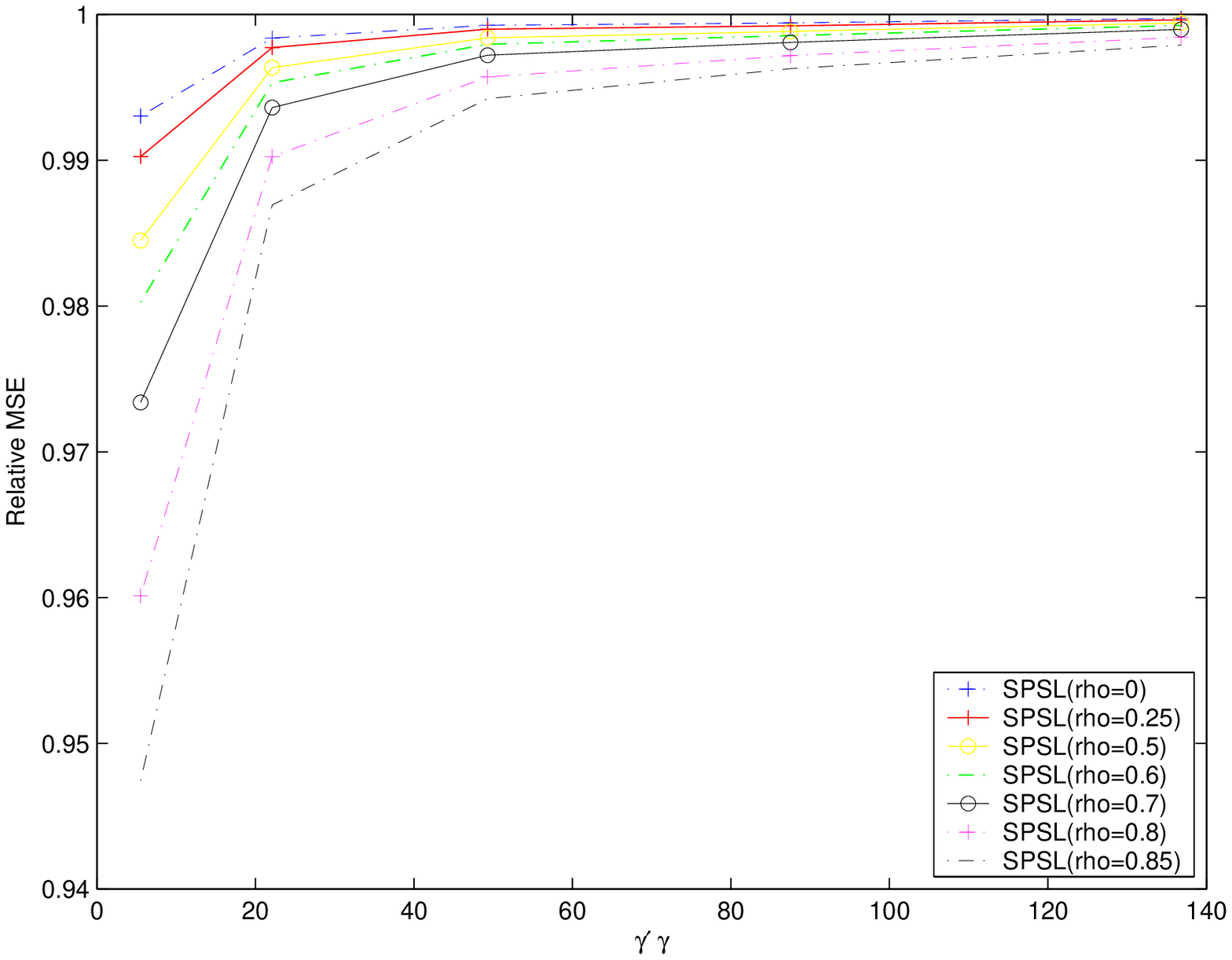}
} \subfigure[\mbox{$n=15$, $\sigma=0.5$, $k=3$}]{
\label{T40p6} 
\includegraphics[height=1.65in,width=2.75in]{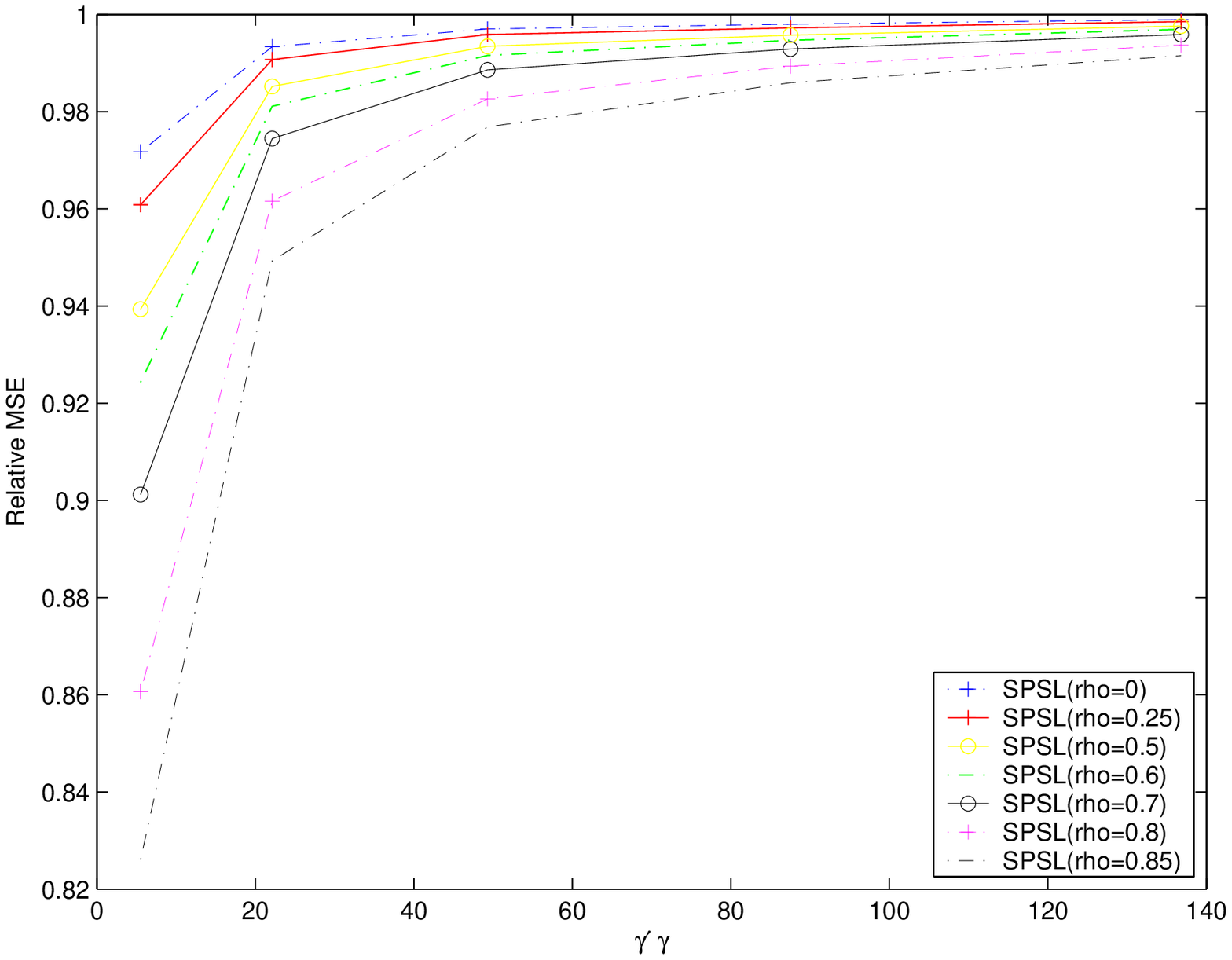}
} \subfigure[\mbox{$n=15$, $\sigma=1$, $k=3$}]{
\label{T50p6} 
\includegraphics[height=1.65in,width=2.75in]{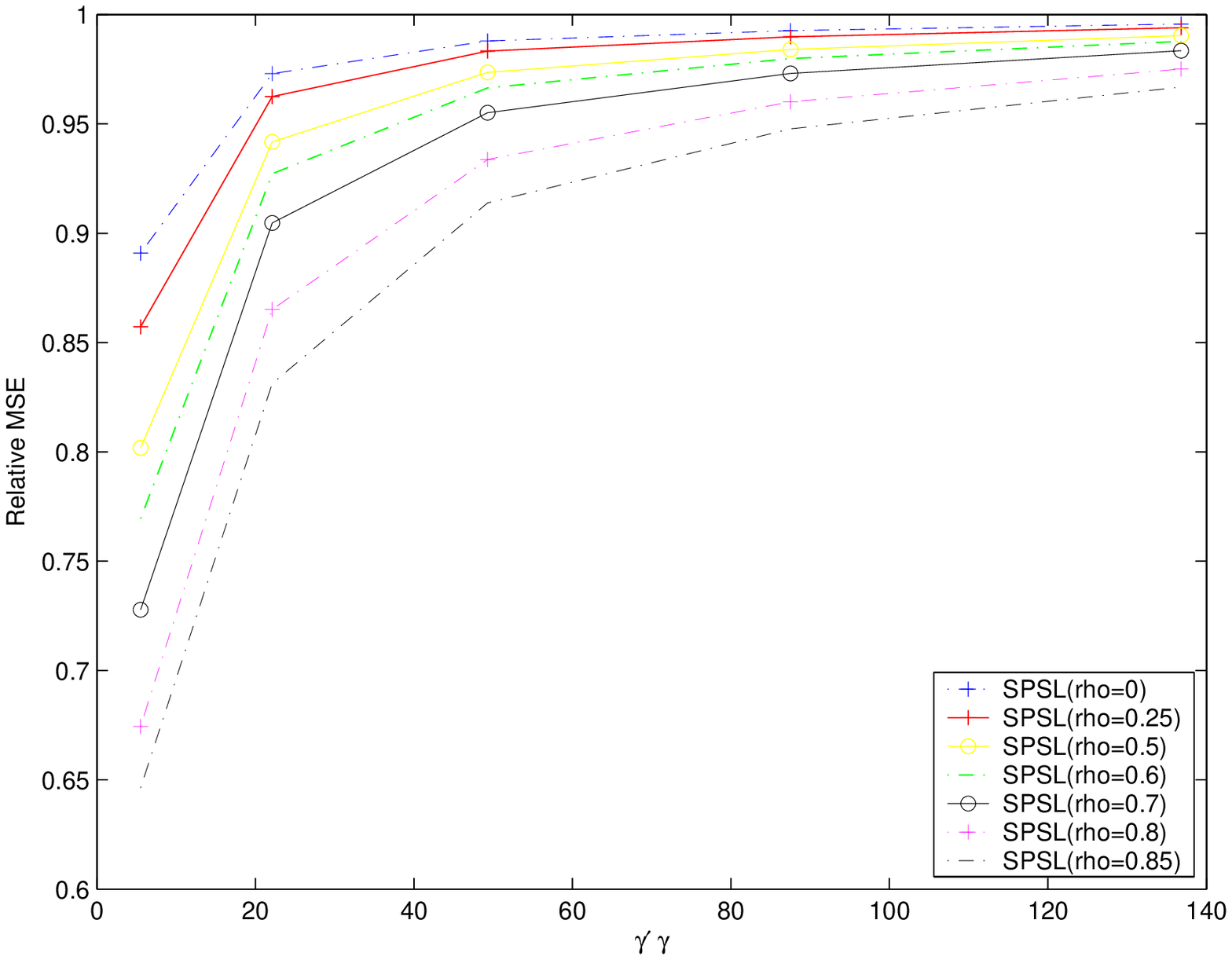}
} \subfigure[\mbox{$n=25$,  $\sigma=0.1$, $k=3$}]{
\label{T15p6} 
\includegraphics[height=1.65in,width=2.75in]{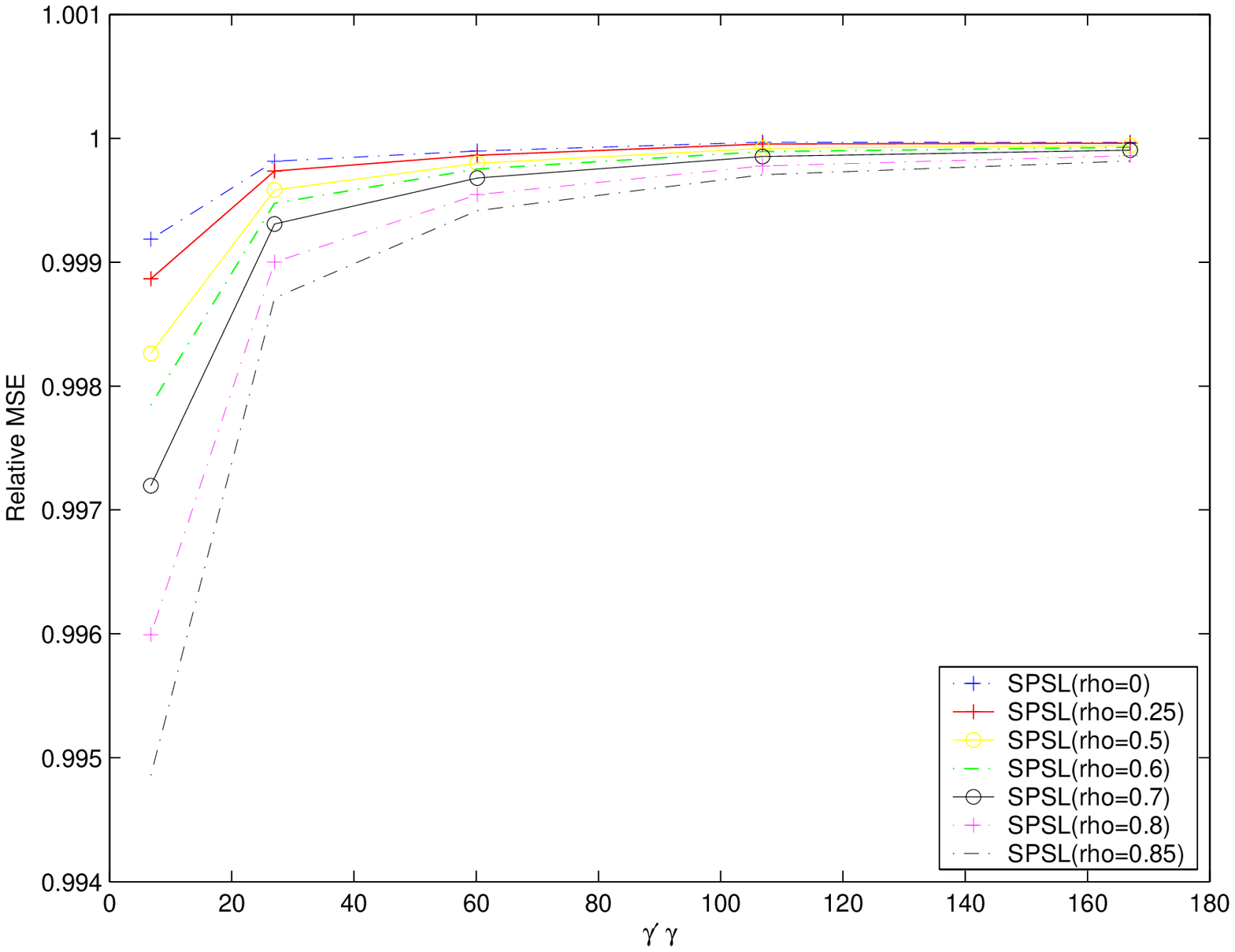}
} \subfigure[\mbox{$n=25$, $\sigma=0.25$, $k=3$} ]{
\label{T30p6} 
\includegraphics[height=1.65in,width=2.75in]{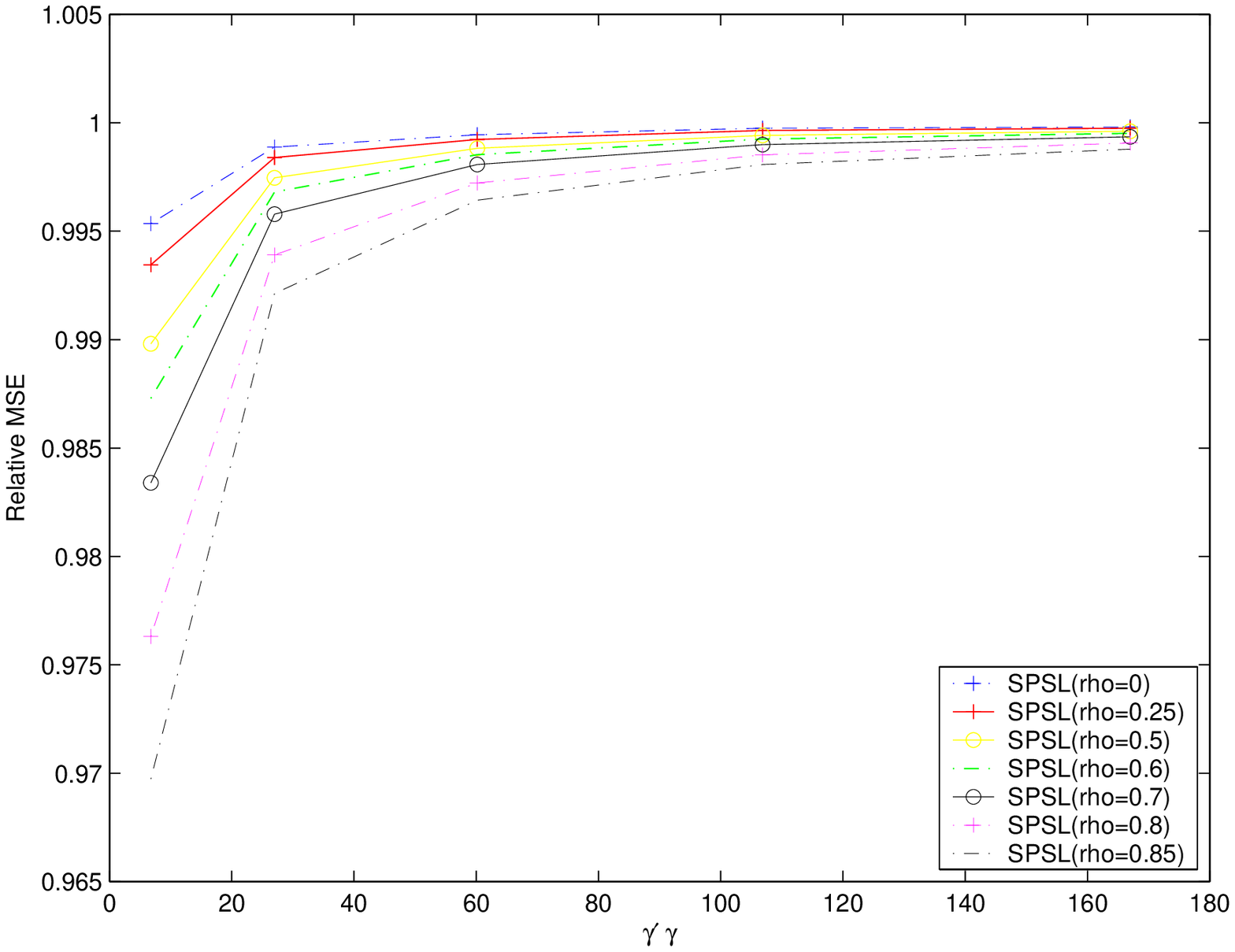}
} \subfigure[\mbox{$n=25$, $\sigma=0.5$, $k=3$}]{
\label{T40p6} 
\includegraphics[height=1.65in,width=2.75in]{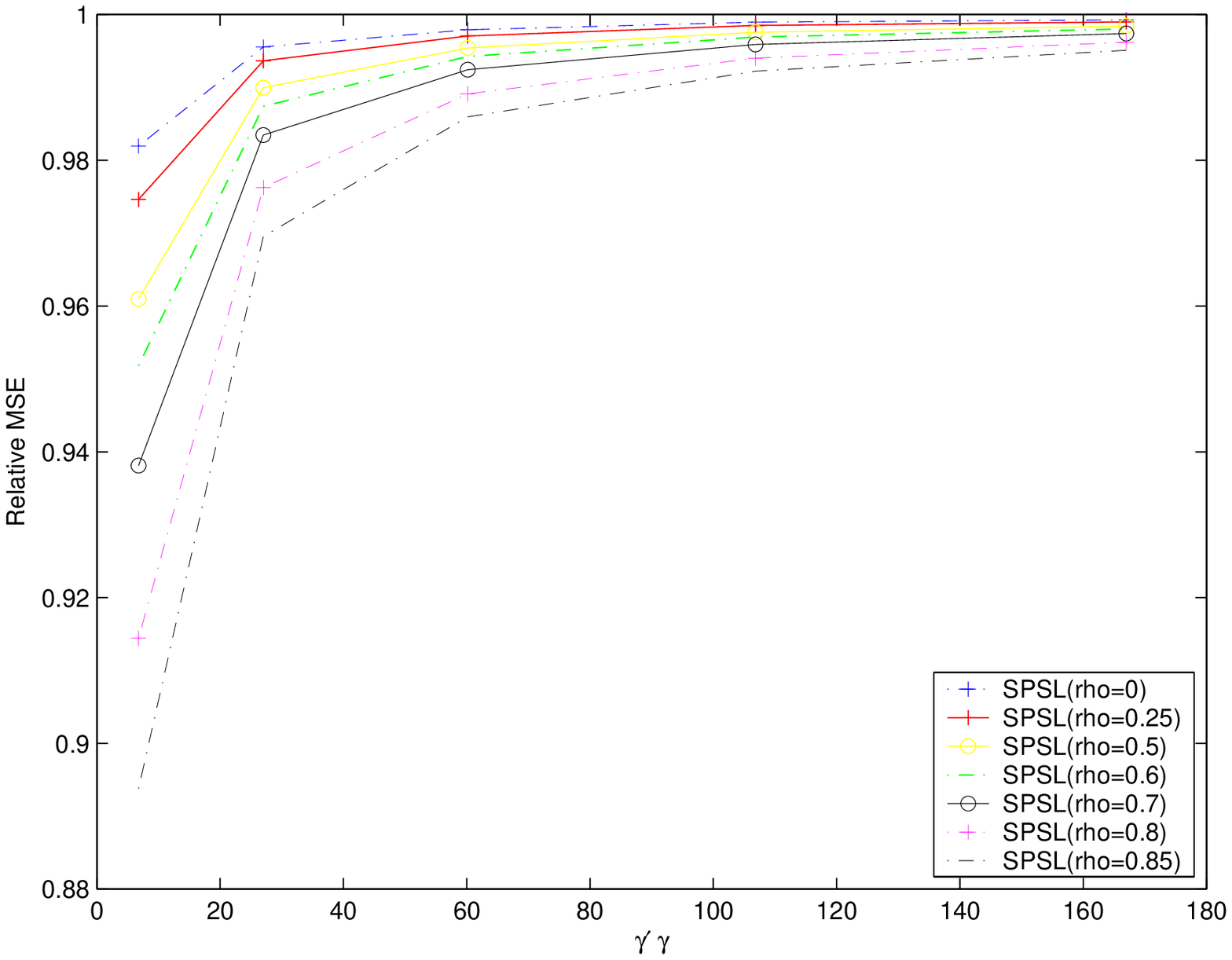}
} \subfigure[\mbox{$n=25$, $\sigma=1$, $k=3$}]{
\label{T50p6} 
\includegraphics[height=1.65in,width=2.75in]{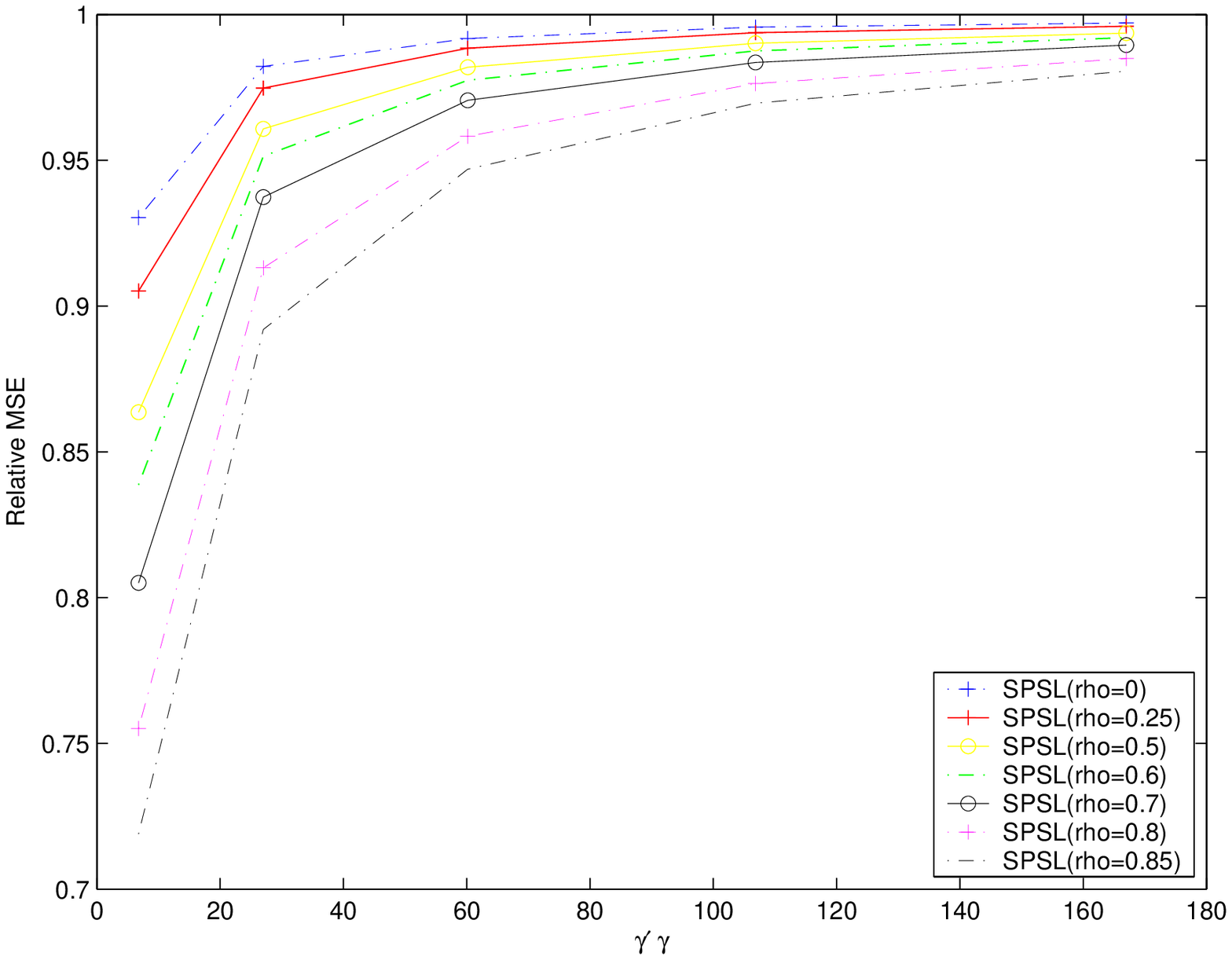}
}
 \caption{Relative efficiency versus \mbox{$\gamma'\gamma$ }}
\label{figloopgamm}
\end{figure}

\begin{figure}[htbp]
\centering
\subfigure[\mbox{$n=15$,  $\sigma=0.1$, $k=4$}]{
\label{T15p6} 
\includegraphics[height=1.65in,width=2.75in]{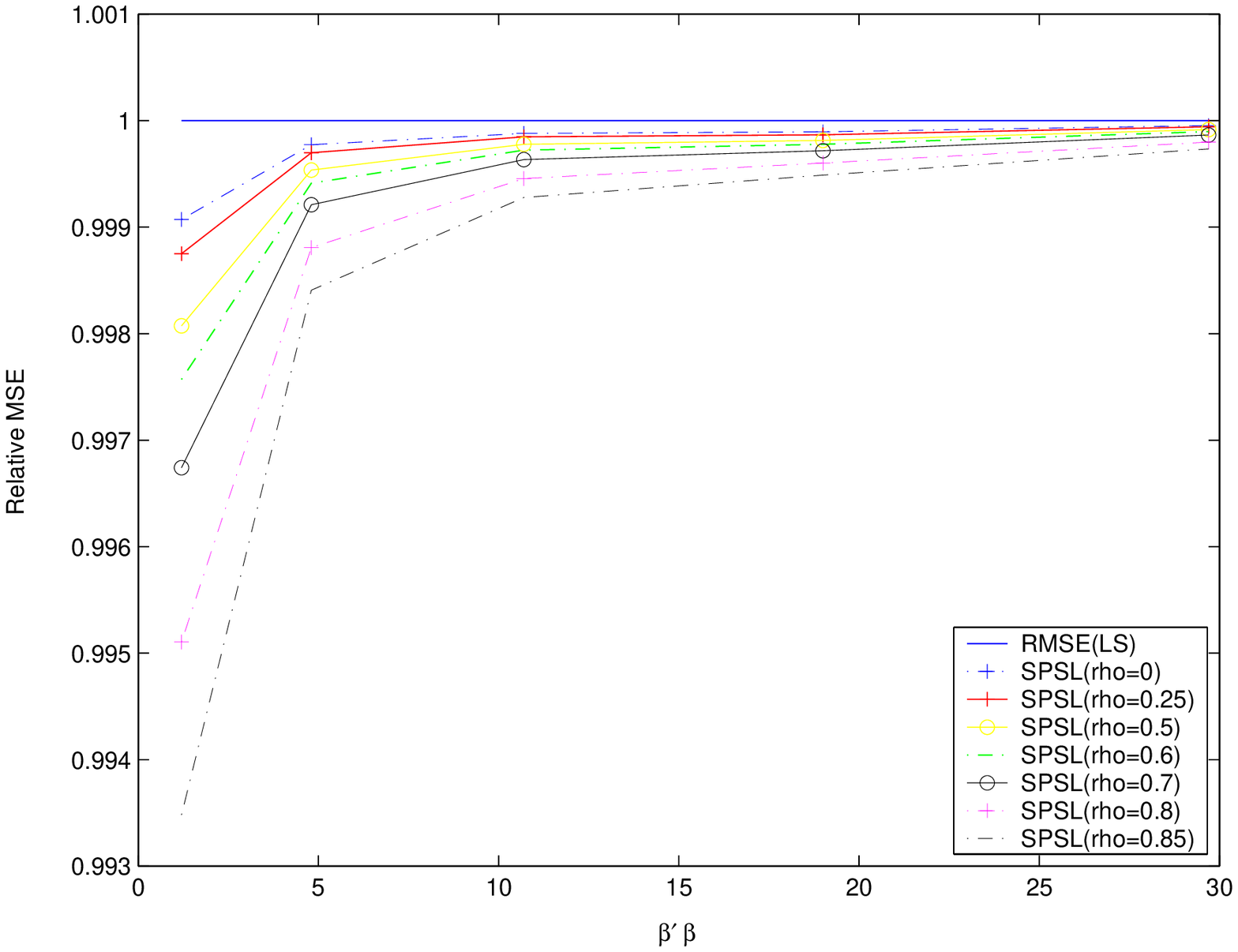}
} \subfigure[\mbox{$n=15$, $\sigma=0.25$, $k=4$} ]{
\label{T30p6} 
\includegraphics[height=1.65in,width=2.75in]{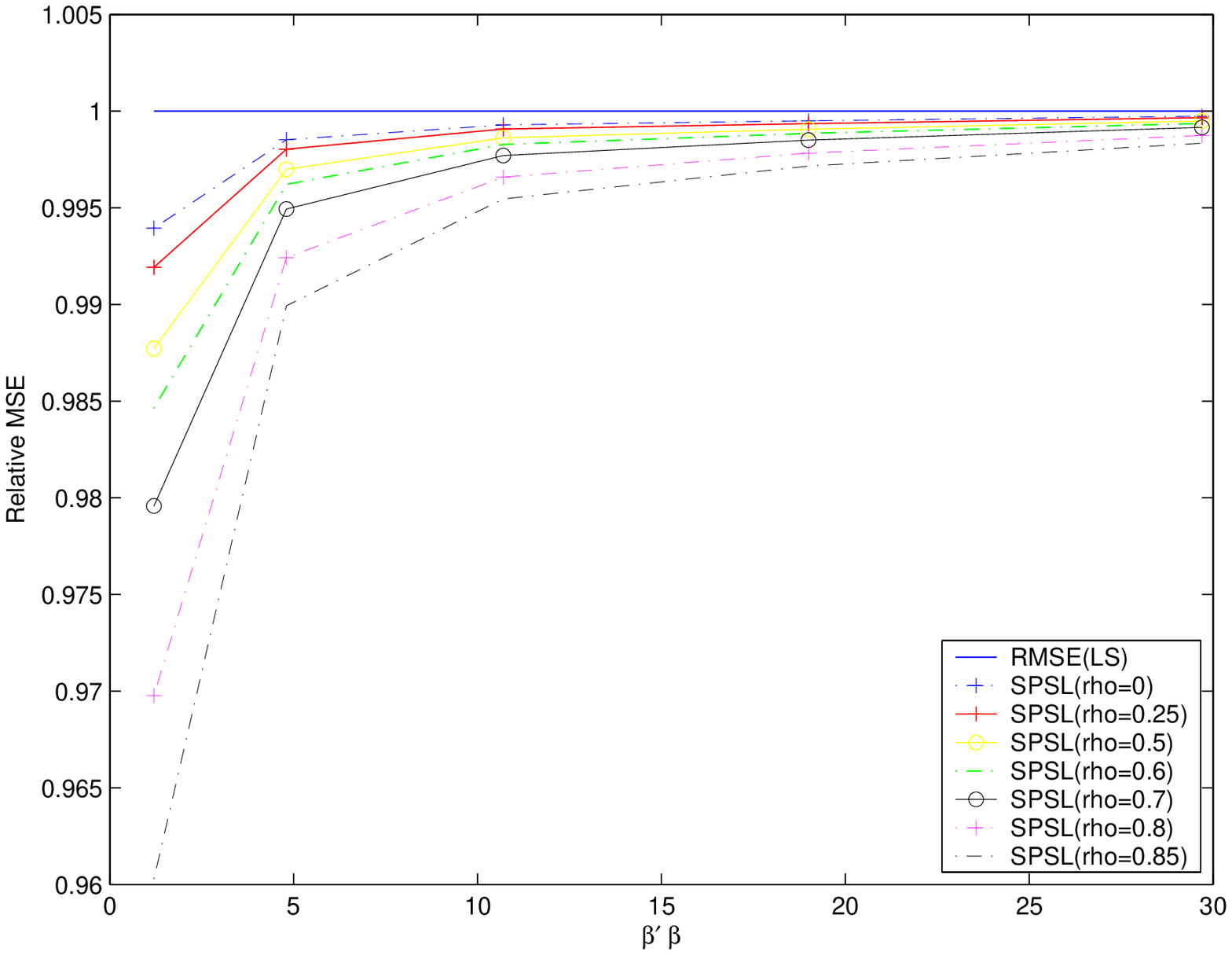}
} \subfigure[\mbox{$n=15$, $\sigma=0.5$, $k=4$}]{
\label{T40p6} 
\includegraphics[height=1.65in,width=2.75in]{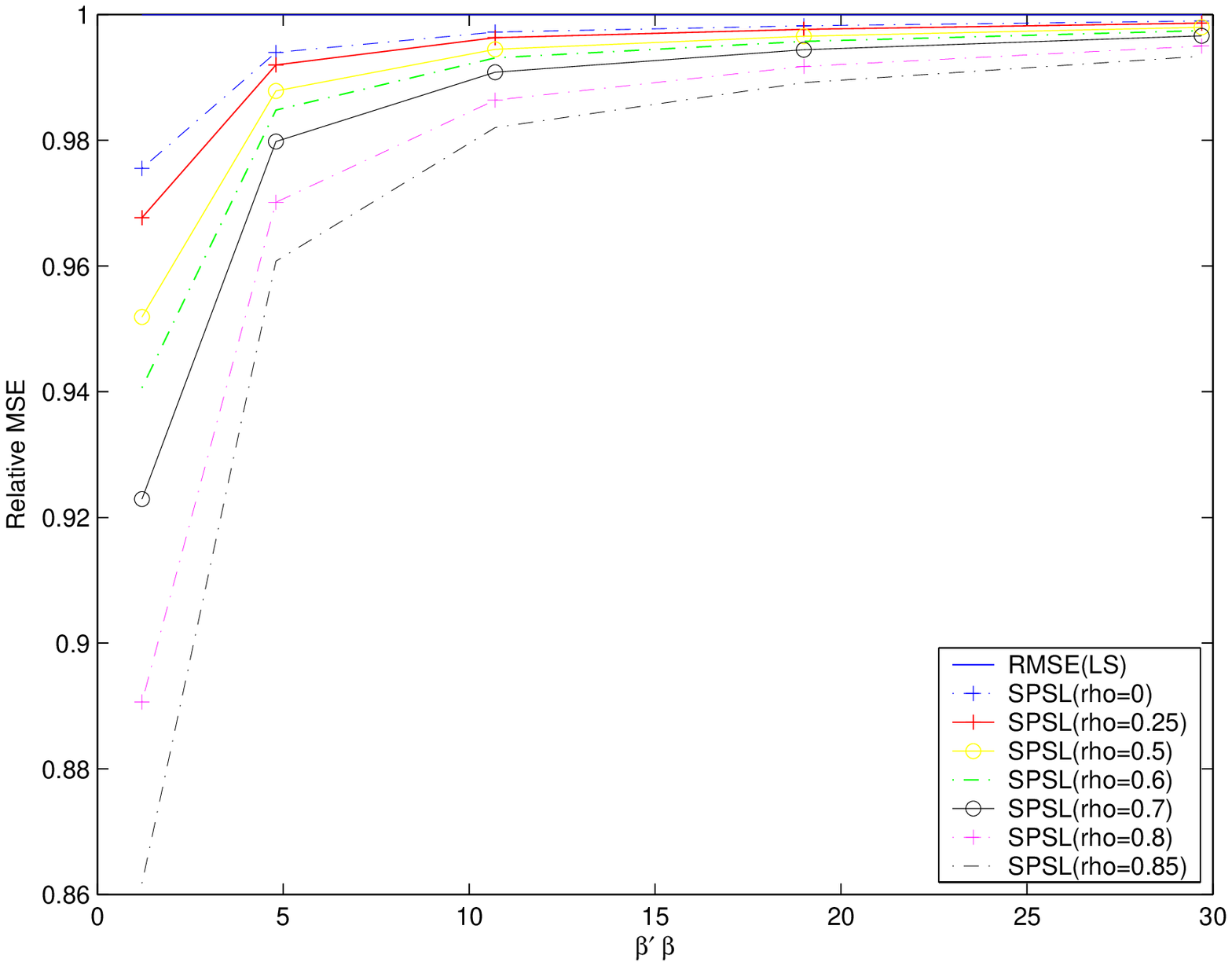}
} \subfigure[\mbox{$n=15$, $\sigma=1$, $k=4$}]{
\label{T50p6} 
\includegraphics[height=1.65in,width=2.75in]{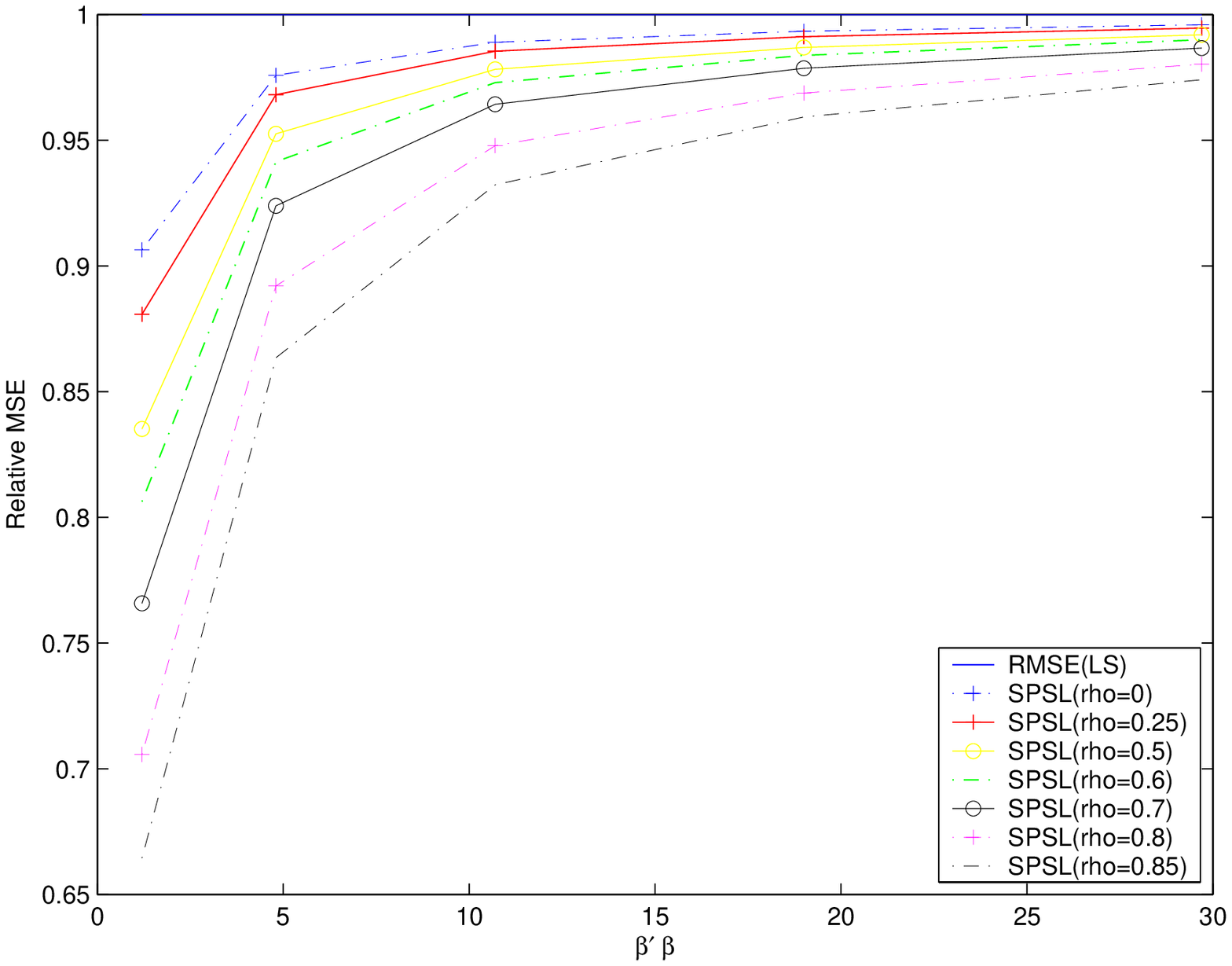}
} \subfigure[\mbox{$n=25$,  $\sigma=0.1$, $k=4$}]{
\label{T15p6} 
\includegraphics[height=1.65in,width=2.75in]{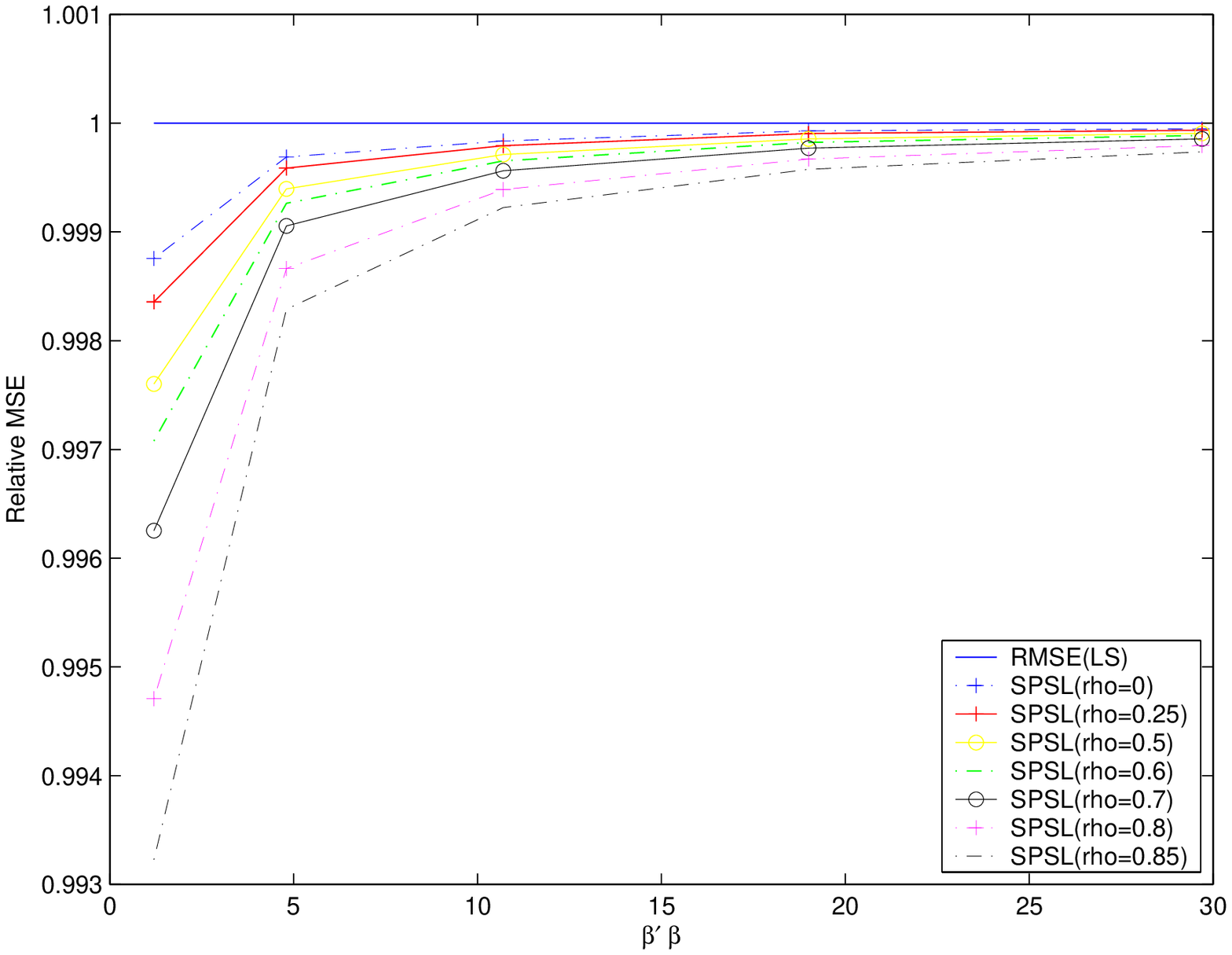}
} \subfigure[\mbox{$n=25$, $\sigma=0.25$, $k=4$} ]{
\label{T30p6} 
\includegraphics[height=1.65in,width=2.75in]{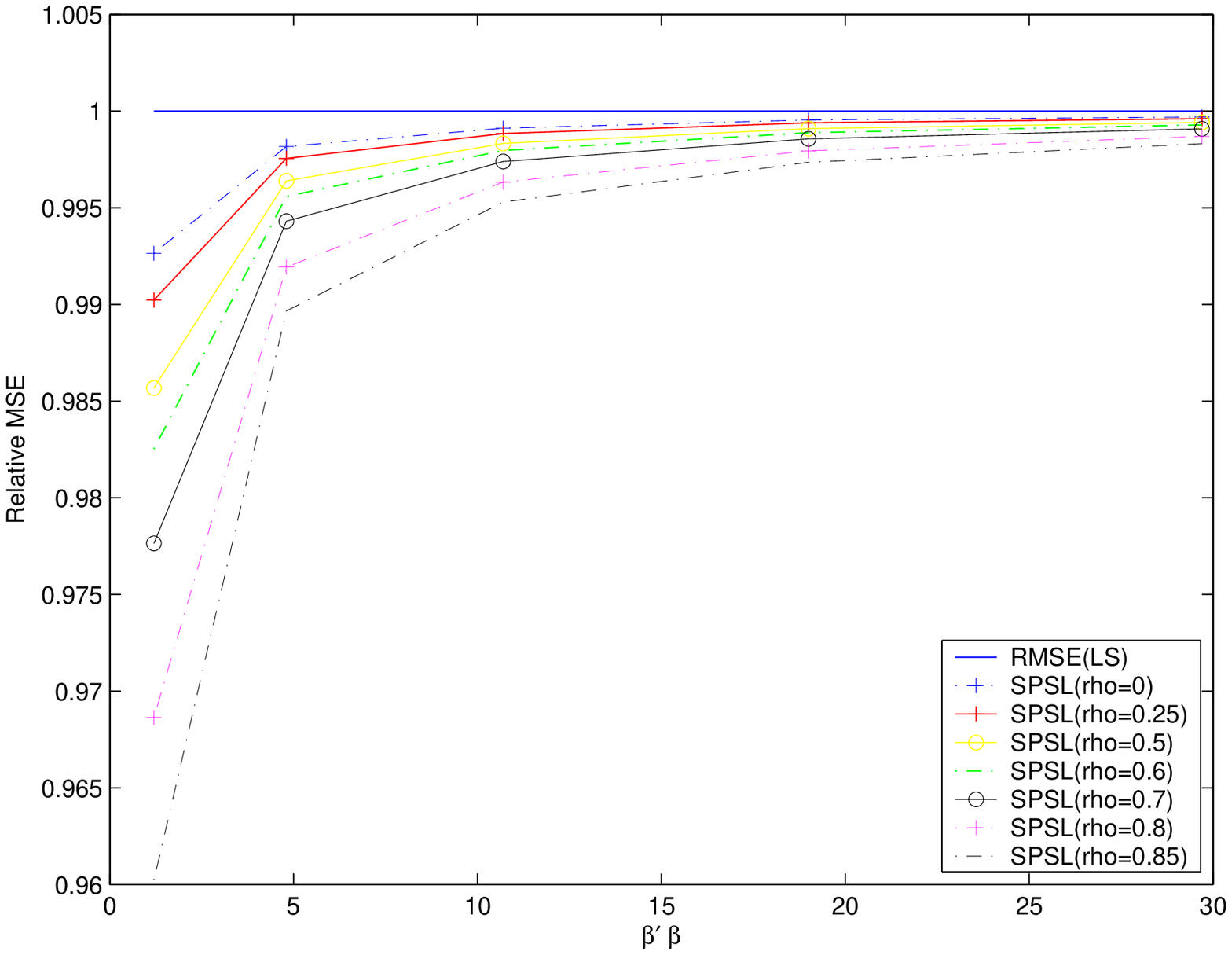}
} \subfigure[\mbox{$n=25$, $\sigma=0.5$, $k=4$}]{
\label{T40p6} 
\includegraphics[height=1.65in,width=2.75in]{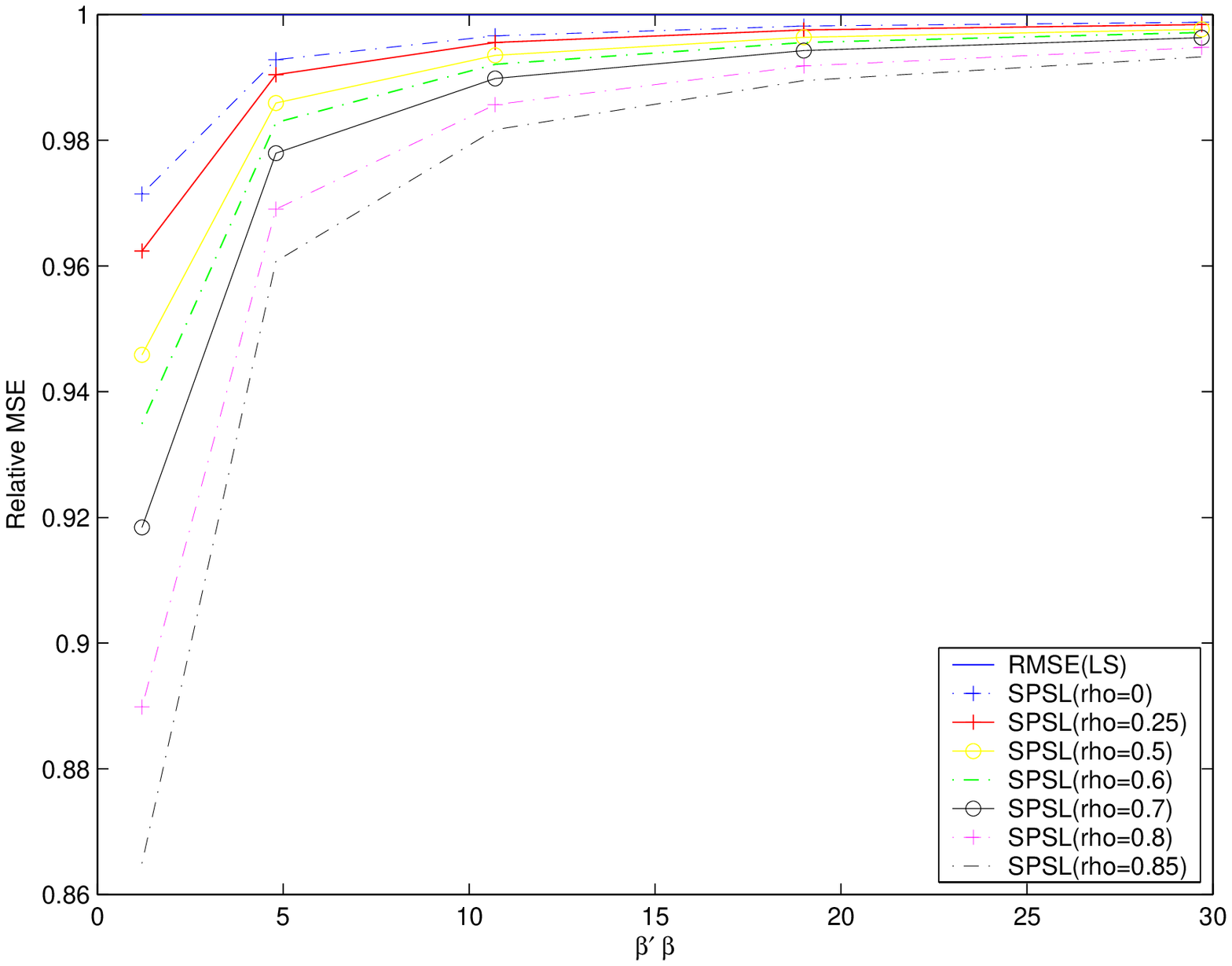}
} \subfigure[\mbox{$n=25$, $\sigma=1$, $k=4$}]{
\label{T50p6} 
\includegraphics[height=1.65in,width=2.75in]{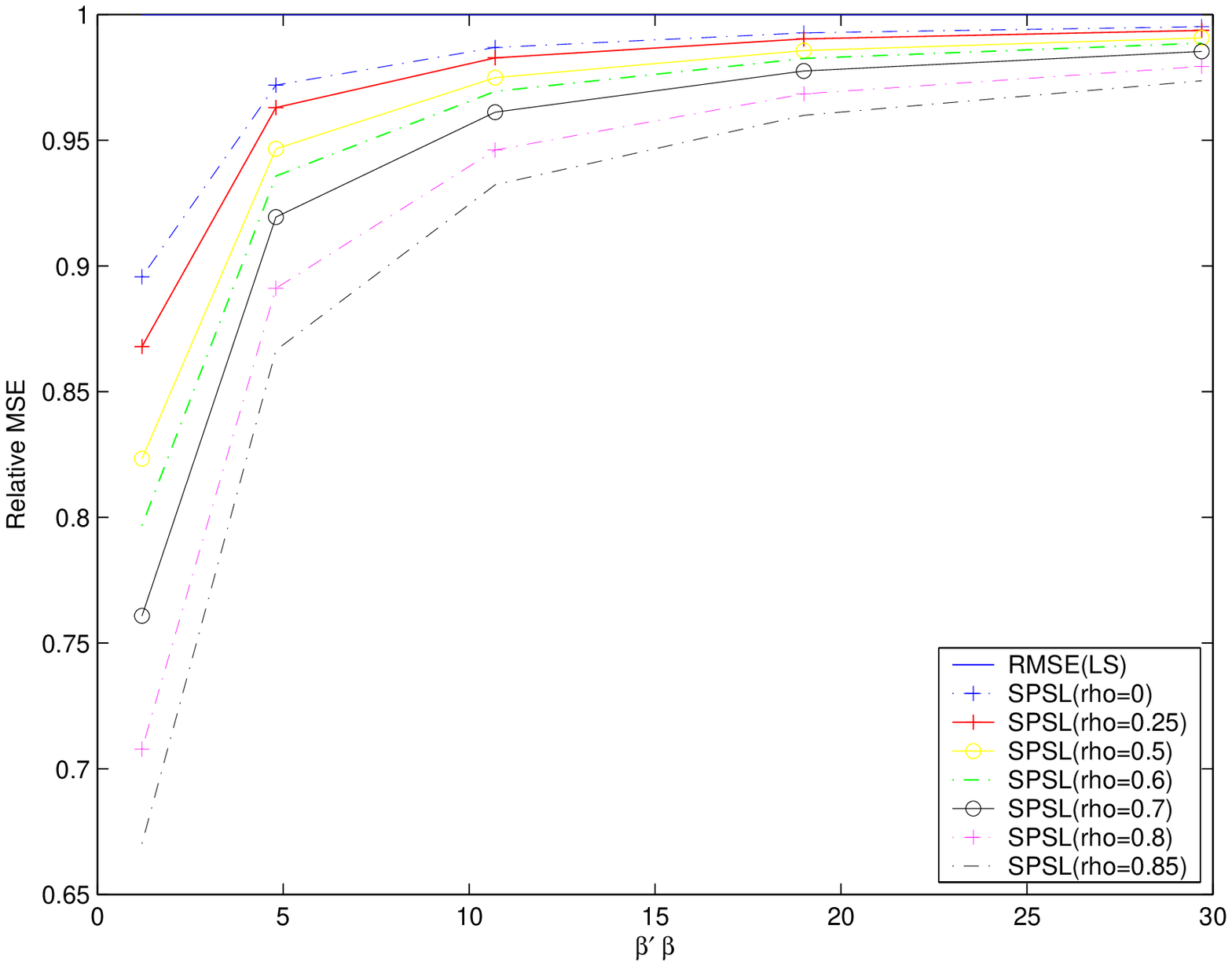}
} \caption{Relative efficiency versus \mbox{$\beta'\beta$ }}
\label{figloopp3}
\end{figure}

\begin{figure}[htbp]
\centering
\subfigure[\mbox{$n=15$,  $\sigma=0.1$, $k=4$}]{
\label{T15p6} 
\includegraphics[height=1.65in,width=2.75in]{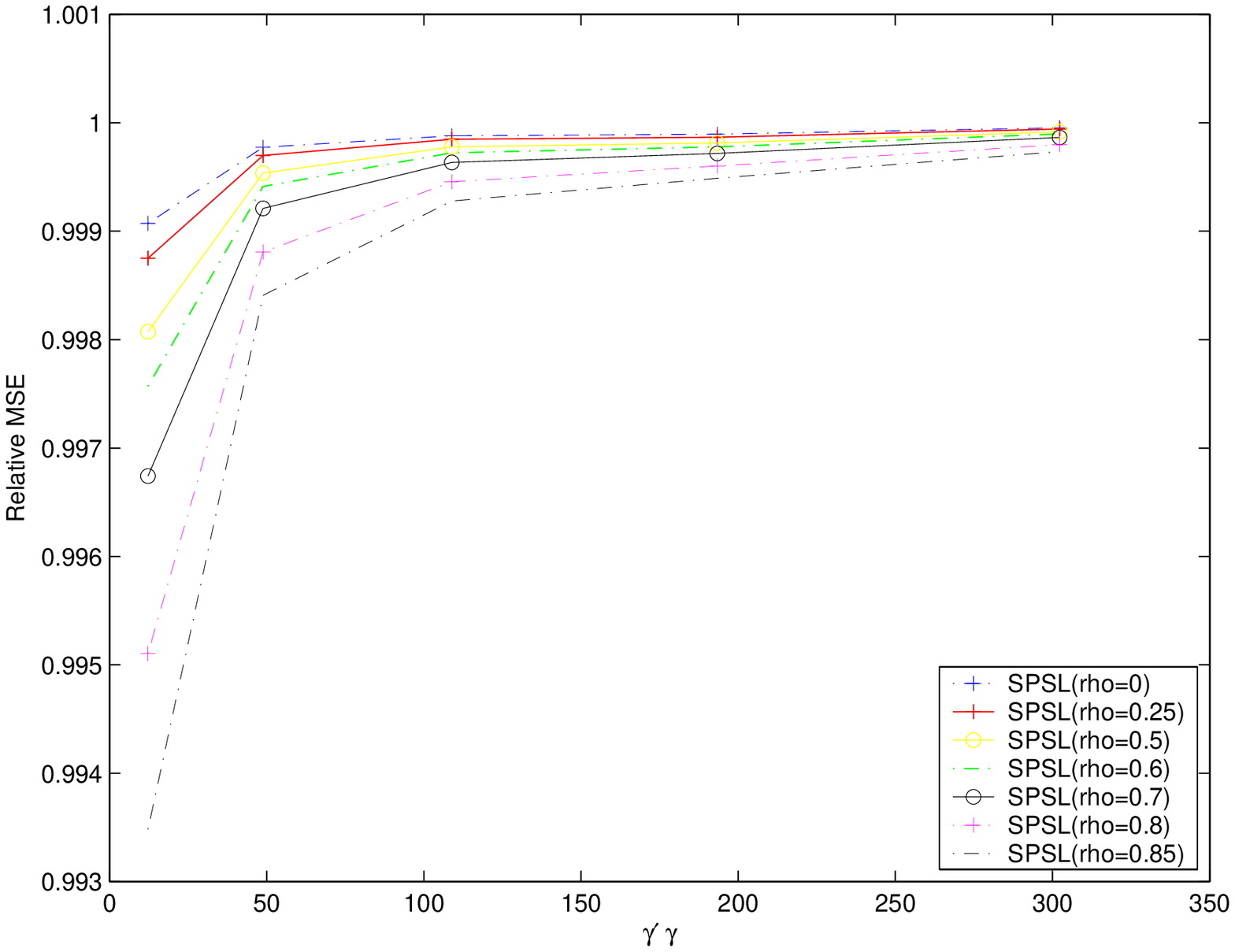}
} \subfigure[\mbox{$n=15$, $\sigma=0.25$, $k=4$} ]{
\label{T30p6} 
\includegraphics[height=1.65in,width=2.75in]{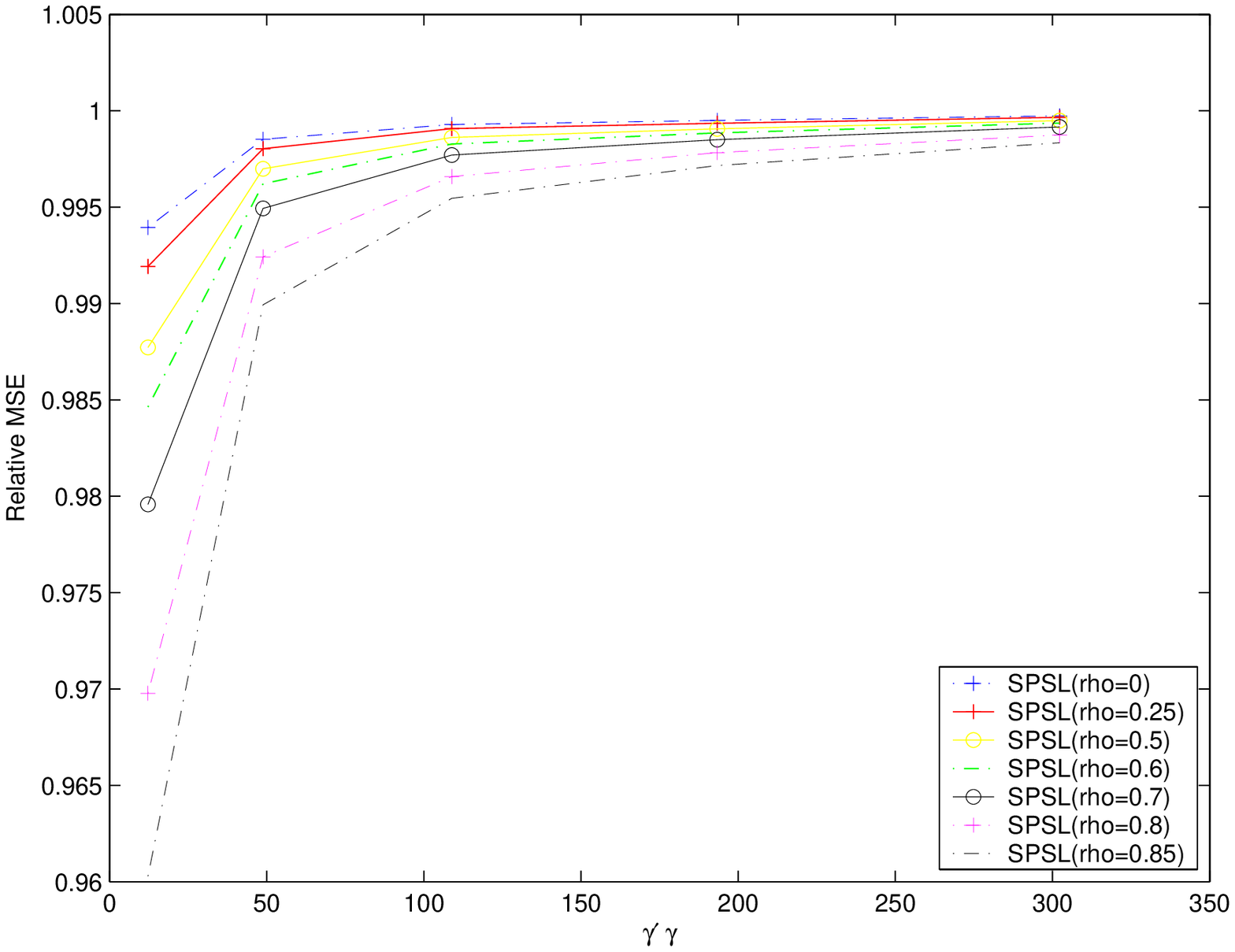}
} \subfigure[\mbox{$n=15$, $\sigma=0.5$, $k=4$}]{
\label{T40p6} 
\includegraphics[height=1.65in,width=2.75in]{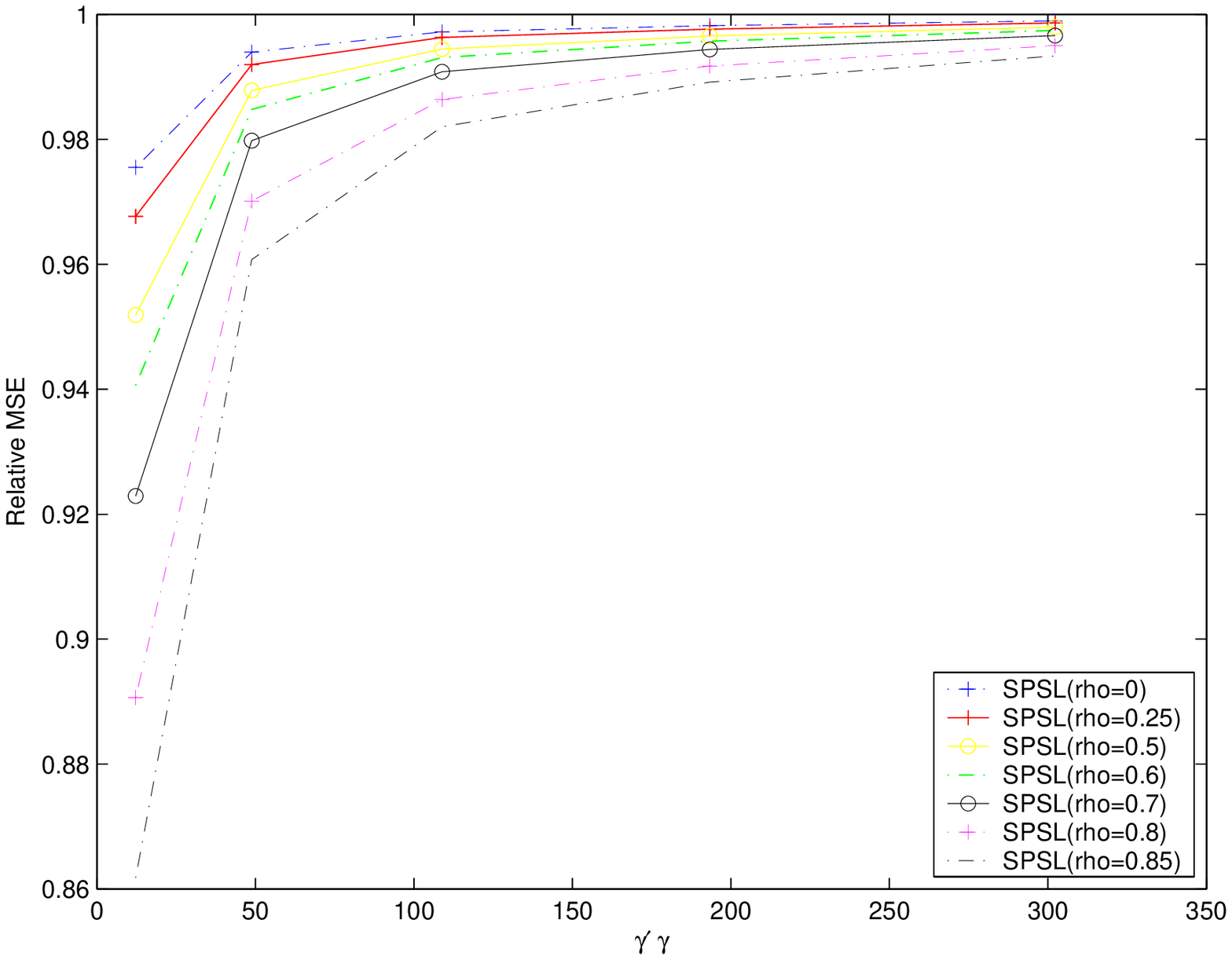}
} \subfigure[\mbox{$n=15$, $\sigma=1$, $k=4$}]{
\label{T50p6} 
\includegraphics[height=1.65in,width=2.75in]{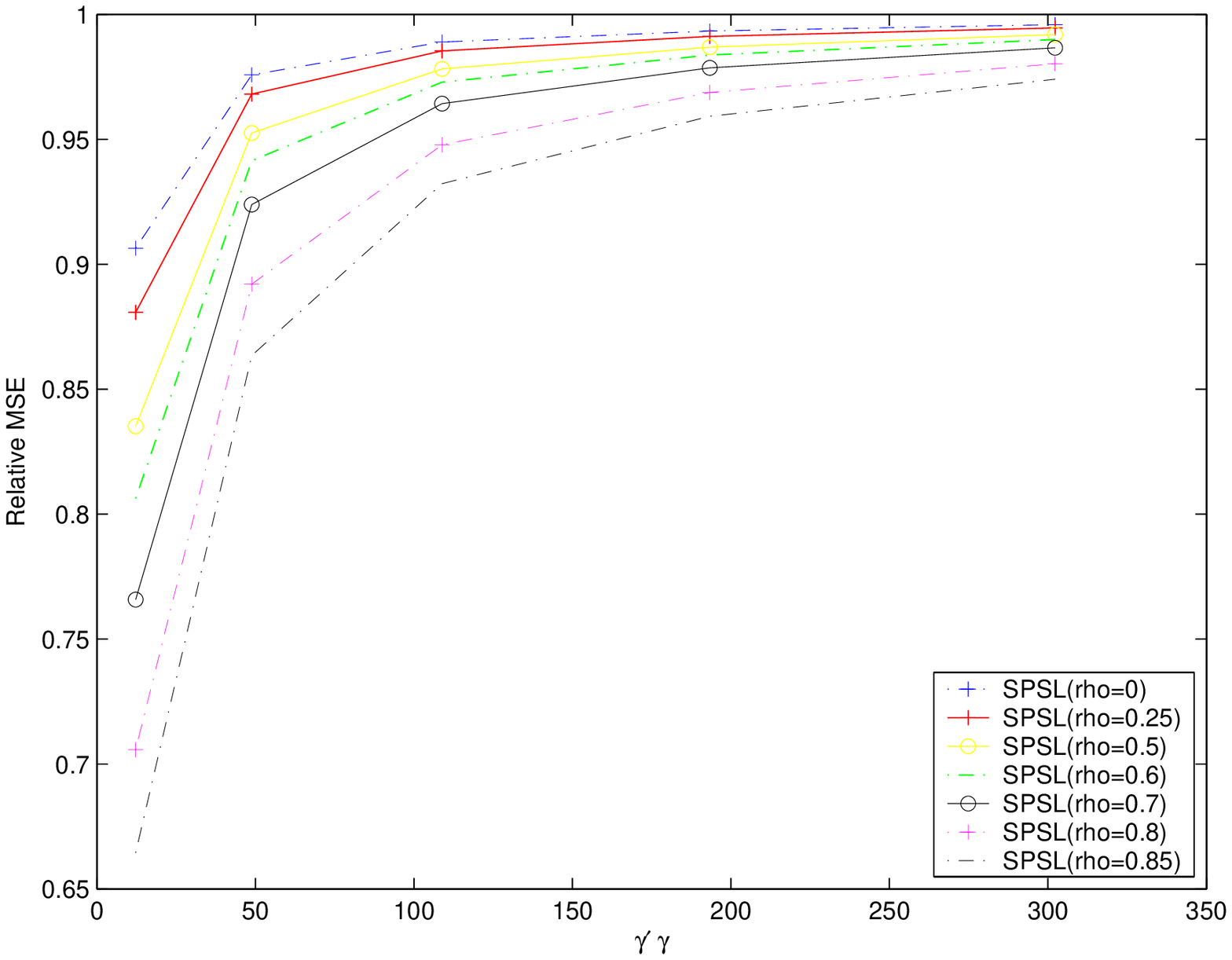}
} \subfigure[\mbox{$n=25$,  $\sigma=0.1$, $k=4$}]{
\label{T15p6} 
\includegraphics[height=1.65in,width=2.75in]{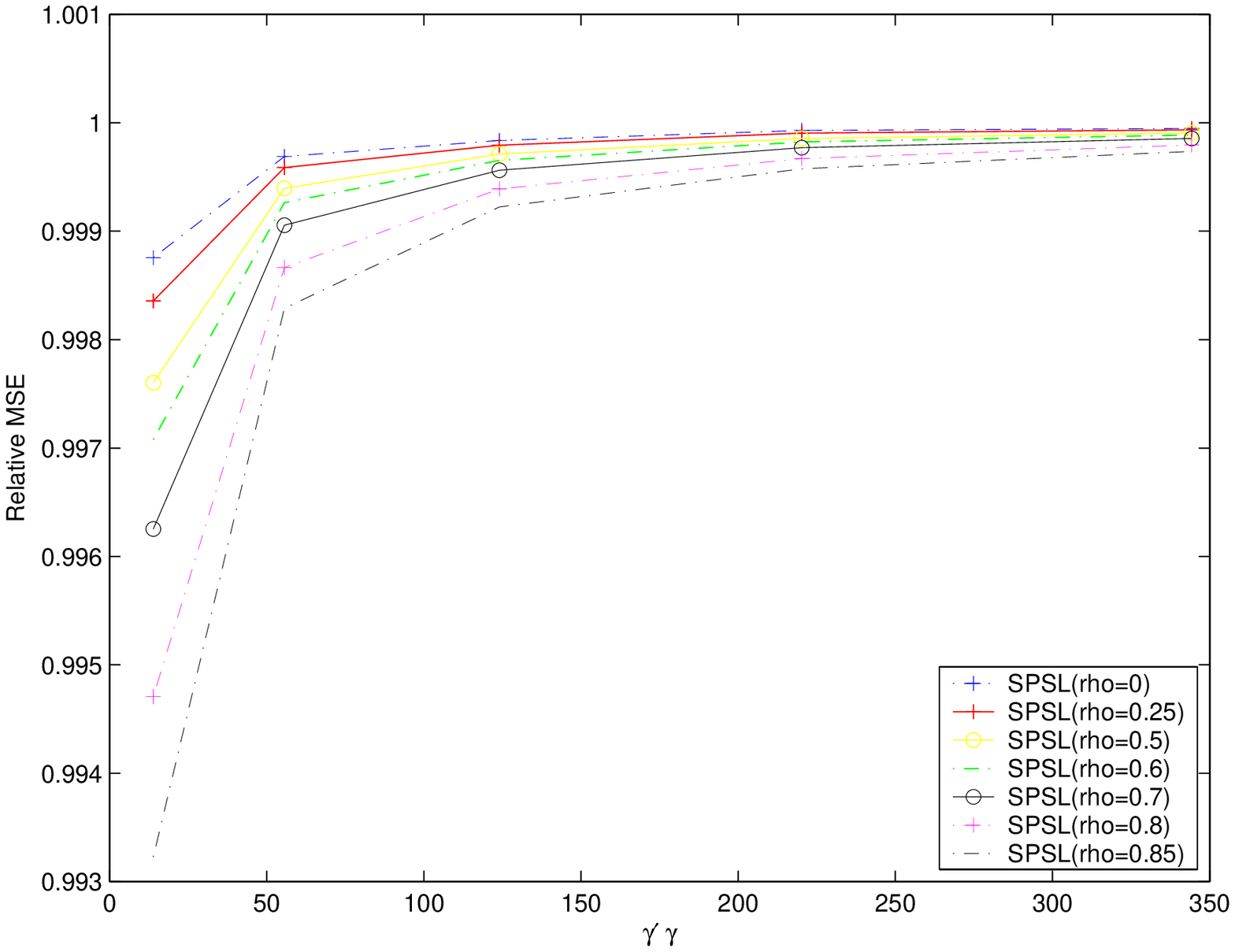}
} \subfigure[\mbox{$n=25$, $\sigma=0.25$, $k=4$} ]{
\label{T30p6} 
\includegraphics[height=1.65in,width=2.75in]{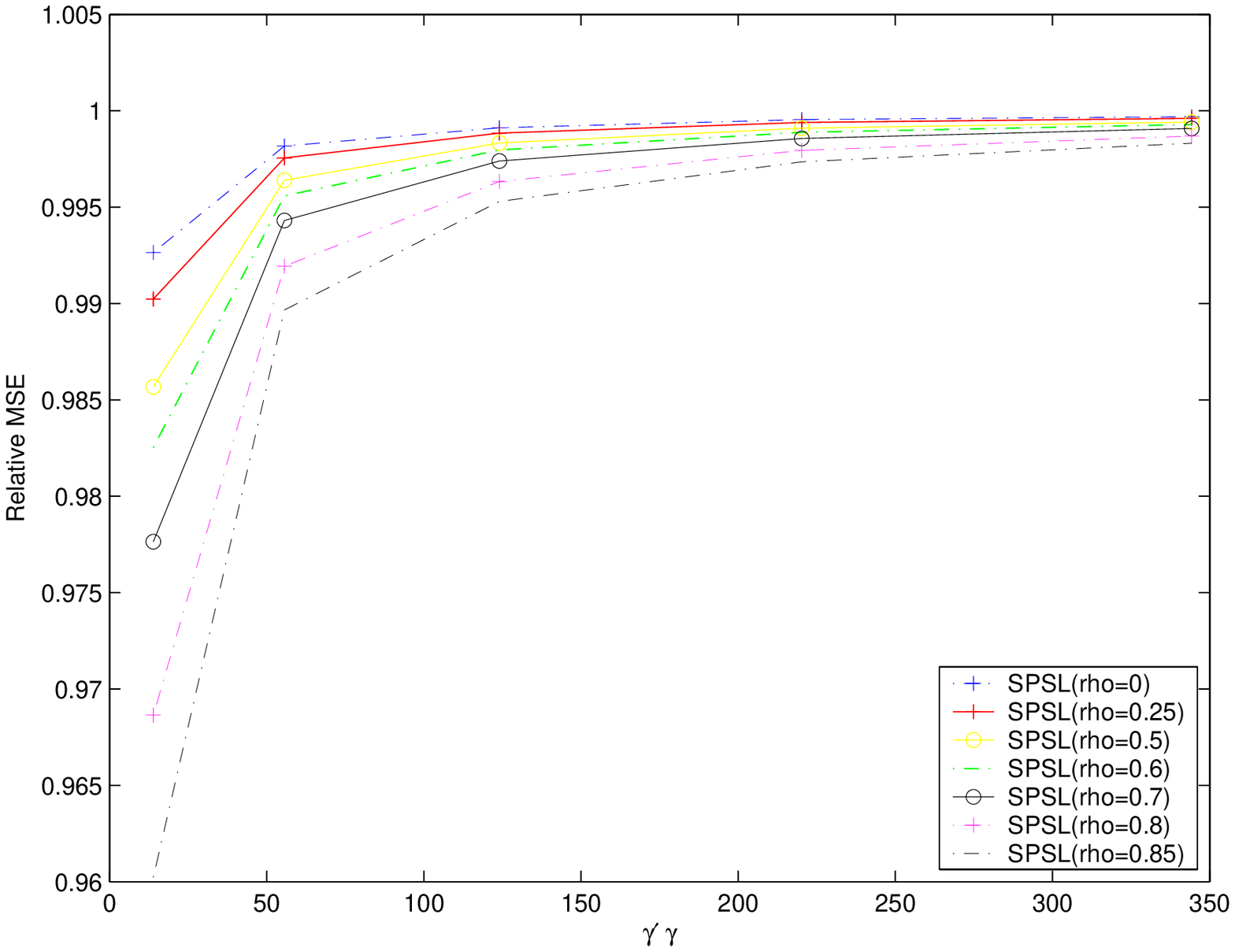}
} \subfigure[\mbox{$n=25$, $\sigma=0.5$, $k=4$}]{
\label{T40p6} 
\includegraphics[height=1.65in,width=2.75in]{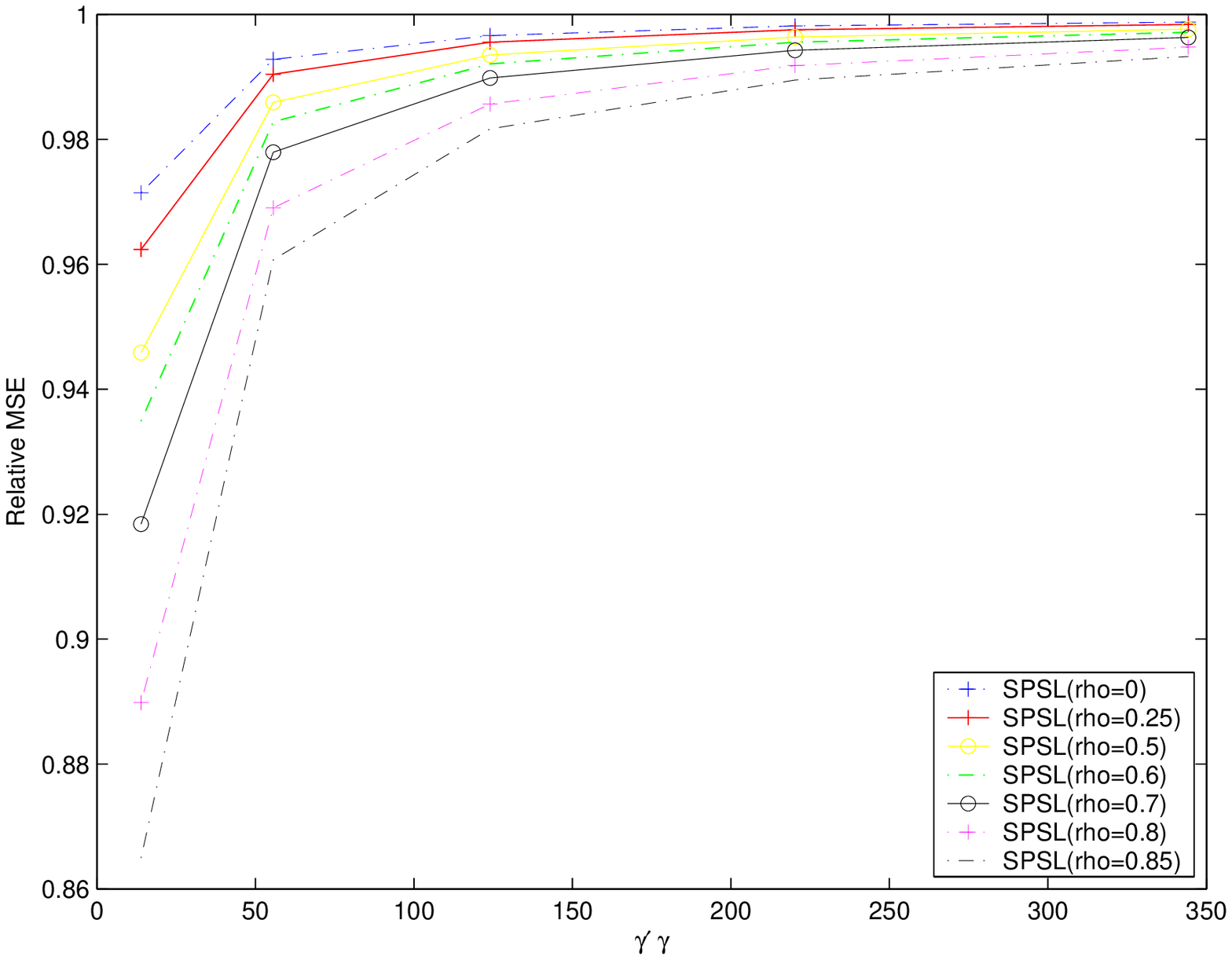}
} \subfigure[\mbox{$n=25$, $\sigma=1$, $k=4$}]{
\label{T50p6} 
\includegraphics[height=1.65in,width=2.75in]{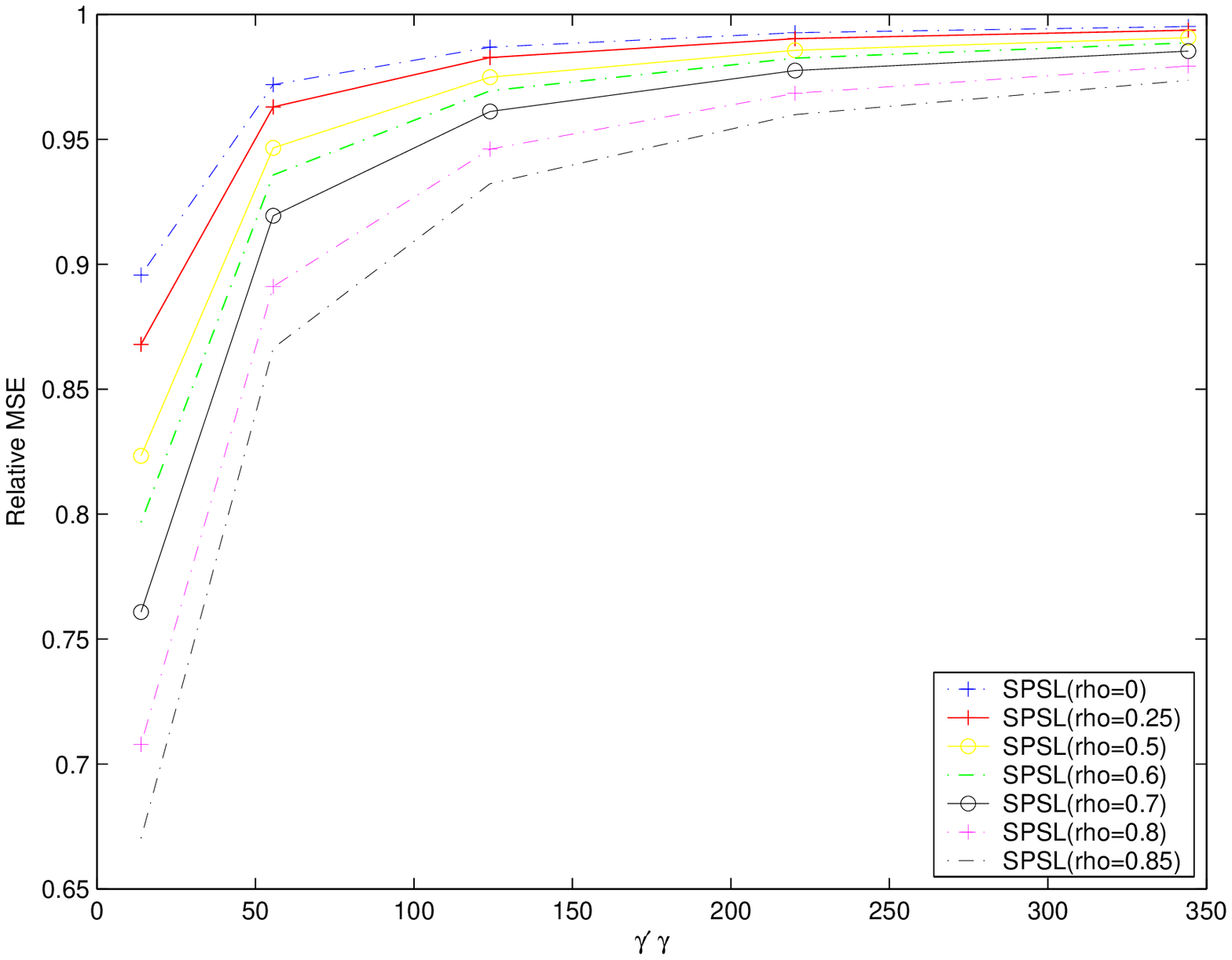}
} \caption{Relative efficiency versus \mbox{$\gamma'\gamma$ }}
\label{figloopp3bis}
\end{figure}

More precisely, in this paper, by following the results in
Subsection~5.2 of JM~(2004), we examine the relative performance of
the SPLS estimator as a function of the parameter norms
$\beta'\beta$ and $\gamma'\gamma$, where the parameter vector
$\beta$ is chosen such that \\$\beta'\beta\in \left\{1.2, 4.8, 10.7,
19.0, 29.7\right\}$, which are the values used in JM~(2004,
Figure~3).

For the small sample sizes and $k=3$,  the results are presented in
Figures~\ref{figloopbeta} and~\ref{figloopgamm}. These figures show
that, as the norms $\beta'\beta$ or $\gamma'\gamma$ increase, the
risk of the SPSL estimator increases and approaches the risk of the
LS estimator. Further, Figures~\ref{figloopp3}-\ref{figloopp3bis} 
show a
similar pattern for the case where $k=4$ and/or for the cases of
moderate and large sample sizes. This result is in agreement with
that given in JM~(2004) for the cases where $k\geqslant5$. Also,
these figures confirm the findings in JM~(2004) in that the
correlation among the $X$ variables increases the relative
performance of the SPSL estimator over the LS estimator.

\subsection{Data analysis}\label{sec:realdata} In this subsection, we
illustrate the application of the proposed method to a real data
set.  The data set consists of  a sample of 25 brands of cigarettes
(see Mendenhall  and Sincich,~1992). For each brand of
cigarette, the measurements of weight as well as tar, nicotine, and carbon
monoxide content are have been recorded.

The choice of this data set is justified by several health and
environmental  issues concerning the cigarettes as mentioned by some medical
studies. Thus, as explained in Mendenhall  and Sincich~(1992),  ''\emph{the
United States Surgeon General considers each of these substances
hazardous to a smoker's health}''. The authors  also
mention that ''\emph{past studies have shown that increases in the
tar and nicotine content of a cigarette are accompanied by an
increase in the carbon monoxide emitted from the cigarette smoke}''.

Accordingly, in order to illustrate the application of the proposed method,
the response variable is taken as the carbon monoxide content, while
the three covariates are:  $X_{1}$: \emph{weight}; $X_{2}$:
\emph{tar content}, and $X_{3}$: \emph{nicotine content}. So,
including the intercept, we apply the proposed method to the
regression model for which $n=25$ and $k=4$. It should be noticed
that, for such a data set whose $k<5$, the result in JM~(2004)
cannot be used to justify the efficiency of the SPSL over the base
estimator. In contrast, the result established in this paper justifies
very well the relative efficiency of the SPSL estimator provided that the
underlying distribution of the error terms is an
elliptically contoured distribution. To give some numerical
descriptive measures, the sample mean is 12.5280 for the response
variable, and for the covariates, the sample means are 12.2160, \quad{ }
0.8764 \mbox{ } and \mbox{ } 0.9703 for the weight, tar and nicotine
content, respectively. The correlation coefficients between the
response and the covariates are shown in Table~\ref{correlation}.
This table indicates that the weight and the tar content are highly
correlated to the response, while the correlation between the
nicotine content and the response is modest. Nevertheless,  the
correlation coefficients are statistically significant at the 5~\% level.
Further, the covariates seem pairwise correlated at
significance level 5~\%.
\begin{table}[!htbp]
\begin{center}
\caption{Correlation between covariates and  response variables
(p-value in parentheses)}
\begin{tabular}{l*{5}{c}r}
\hline
     & carbon monoxide  & Weight & Tar content & Nicotine content\\
\hline
carbon monoxide    & 1 & 0.9575  & 0.9259 &  0.4640  \\
               & (\small{-}) &   (\small{0.0000})    &   (\small{0.0000 })    &   (\small{0.0195})  \\
\hline
Weight    & 0.9575 & 1  & 0.9766 &  0.4908  \\
          &(\small{0.0000}) &   (\small{-})    &   (\small{0.0000 })    &   (\small{0.0127})  \\
\hline
Tar content & 0.9259   & 0.9766  & 1 &  0.5002  \\
            & (\small{0.0000 })  &   (\small{0.0000})    &   (\small{-})    &   (\small{0.0109})  \\
\hline
Nicotine content &0.4640    & 0.4908  & 0.5002  &  1  \\
               & (\small{0.0195}) &   (\small{0.0127 })    &   (\small{ 0.0109 })    &   (\small{-})  \\

\hline

\end{tabular}
\label{correlation}
\end{center}
\end{table}
By applying the method, we obtain the point estimates based on the LS and
SPSL estimators as reported in Table~\ref{dataestimate}. To asses the performance
of the estimators, we compute the mean squared error based on a
bootstrap method with 5000 replications. The relative efficiency of
the estimators is given in Table~\ref{efficiency}.
\begin{table}[!htbp]
\begin{center}
\caption{Point Estimates}
\begin{tabular}{ccc}
\hline
Parameter           & LS  & SPSL\\
\hline
Intercept   &  3.2022  &  3.9325\\
Weight      & 0.9626    &  0.9645\\
 Tar      &  -2.6317  &  -1.3262 \\
Nicotine & -0.1305 &  0.8983 \\
\hline
\end{tabular}
\label{dataestimate}
\end{center}
\end{table}

\begin{table}[!htbp]
\begin{center}
\caption{Relative efficiency (Bootstrap)}
\begin{tabular}{ccc}
\hline
Estimator           & LS  & SPSL\\
\hline
Relative efficiency   &  1  &  0.7578\\
\hline
\end{tabular}
\label{efficiency}
\end{center}
\end{table}

From Table~\ref{efficiency}, one can clearly see that the relative
efficiency of the SPSL estimstor is less than 1, that is the relative
efficiency of the base estimator. This illustrates that SPSL
dominates the base estimator.
\section{Conclusion}\label{sec:conclu}
In order to conclude, let us first recall that the main result in JM~(2004)
gives a sufficient condition for the Stein rule~(SR)-type
estimator to dominate the base estimator.
In this paper, we provided more refined inequalities and bounds
which are used in establishing this result in its full generality.
Namely, we generalized in four ways the result in JM~(2004)
which gives a sufficient condition for the Stein rule~(SR)-type
estimator to dominate the base estimator. To this end, we provided
an alternative and a versatile approach for establishing the
dominance result in its full generality. In particular, we proved
theoretically that the nonzero correlation does not change the
condition for the risk dominance of the SR estimator. The impact of
this result is that, unlike the method in JM~(2004), our method is
also applicable to a wide range of regression models, including, for
instance, the quadratic or cubic regressions, where the number of
regressors is less than 5. In addition, we relax the condition of
normality of the sampling distribution of the base estimator.  We
also generalize the method in JM~(2004) to the case where the
variance-covariance matrix of the base estimator and the restricted
estimator may be singular. The significance of this finding is that
our method is also efficient in the case where the past
statistical investigations may have established that some regression
coefficients are not statistically significant while a field
specialist believes that the nonsignificant explanatory variables
are important. From the practical point of view, we evaluate
numerically the relative efficiency of a data-based semiparametric
Stein-like~(SPSL) estimator. The simulation studies corroborate this
theoretical finding that the sufficient condition, for the SPSL to
dominate the LS estimator ($k\geqslant 3$), holds also regardless of
the correlation factor. Nevertheless, Figures~1-8 show that the
correlation may amplify the risk dominance of the SPSL. Finally, the
proposed method is applied to the Cigarette dataset, produced by USA
Federal Trade Commission,  for which $k=4$. An interesting
result is that, by using a bootstrap method, we see that the SPSL
dominates the base estimator. This finding is in agreement with the
theoretical result proved in the present paper.

\appendix{ }
\section{Appendix}
\begin{prn}\label{prncourant}
Let $\bm{C}$ be a $m\times m$-symmetric matrix. Then,\\
 $
 \left|x'Cx\right|\leqslant
\max\{|\lambda_{1}(\bm{C})|,|\lambda_{2}(\bm{C})|,\dots,|\lambda_{m}(\bm{C})|\}x'x$,
for all $m$-column vector $x$.
\end{prn}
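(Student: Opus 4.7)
The plan is to prove Proposition~\ref{prncourant} by invoking the spectral decomposition of the symmetric matrix $\bm{C}$ and reducing the quadratic form to a weighted sum of squares whose weights are the eigenvalues. Since $\bm{C}$ is a real symmetric $m\times m$ matrix, the spectral theorem guarantees the existence of an orthogonal matrix $\bm{Q}$ and a diagonal matrix $\bm{D}=\textrm{diag}\left(\lambda_{1}(\bm{C}),\dots,\lambda_{m}(\bm{C})\right)$ such that $\bm{C}=\bm{Q}\bm{D}\bm{Q}'$. This decomposition is the only nontrivial ingredient of the argument.

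Given this, I would first substitute the decomposition into $x'\bm{C}x$ and perform the change of variables $y=\bm{Q}'x$. Because $\bm{Q}$ is orthogonal, $y'y=x'\bm{Q}\bm{Q}'x=x'x$, so the norm is preserved. The quadratic form then reads $x'\bm{C}x=y'\bm{D}y=\sum_{i=1}^{m}\lambda_{i}(\bm{C})\,y_{i}^{2}$, which is the pivotal identity.

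Next, I would apply the triangle inequality and pull out the maximum of the absolute eigenvalues:
\begin{eqnarray*}
\left|x'\bm{C}x\right|=\left|\sum_{i=1}^{m}\lambda_{i}(\bm{C})\,y_{i}^{2}\right|\leqslant\sum_{i=1}^{m}\left|\lambda_{i}(\bm{C})\right|y_{i}^{2}\leqslant\max_{1\leqslant i\leqslant m}\left|\lambda_{i}(\bm{C})\right|\sum_{i=1}^{m}y_{i}^{2}.
\end{eqnarray*}
Combining this bound with the norm-preservation identity $\sum_{i=1}^{m}y_{i}^{2}=x'x$ delivers the stated inequality.

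There is no genuine obstacle here: the proposition is a classical Rayleigh-type bound, and the only subtle point is to remember that we need the maximum of $|\lambda_{i}(\bm{C})|$ rather than $\lambda_{i}(\bm{C})$ itself, because some eigenvalues may be negative (the matrix $\bm{C}$ is symmetric but not assumed to be positive semidefinite; indeed, the application the paper has in mind is to the matrix $\bm{B}$ from~\eqref{BF}, which is clearly indefinite). Handling absolute values through the triangle inequality, as above, takes care of this without extra work.
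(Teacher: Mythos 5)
Your proposal is correct and follows essentially the same route as the paper's own proof: diagonalize $\bm{C}$ by an orthogonal matrix, set $y=\bm{Q}'x$, bound $\left|\sum_{i}\lambda_{i}(\bm{C})y_{i}^{2}\right|$ by the triangle inequality and the maximum absolute eigenvalue, and use $y'y=x'x$. No differences worth noting.
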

\begin{proof}
Since the matrix $\bm{C}$ is symmetric, there exist orthogonal
matrix $\bm{Q}$ such that

$\bm{Q}'\bm{C}\bm{Q}=\bm{D}=\textrm{
diag}\left(\lambda_{1}(\bm{C}),\lambda_{2}(\bm{C}),\dots,\lambda_{m}(\bm{C})\right)$.
Therefore,
\begin{eqnarray*}
x'\bm{C}x=x'\bm{Q}\bm{Q}'\bm{C}\bm{Q}\bm{Q}'x=y'\bm{D}y=\sum_{i=1}^{m}\lambda_{i}(\bm{C})y_{i}^{2},
\quad{ } y=\bm{Q}'x=\left(y_{1},y_{2},\dots,y_{m}\right)'.
\end{eqnarray*}
Therefore,
\begin{eqnarray*}
|x'\bm{C}x|=|\sum_{i=1}^{m}\lambda_{i}(\bm{C})y_{i}^{2}|\leqslant
\sum_{i=1}^{m} |\lambda_{i}(\bm{C})|y_{i}^{2}\leqslant
\max\{|\lambda_{1}(\bm{C})|,|\lambda_{2}(\bm{C})|,\dots,
|\lambda_{m}(\bm{C})|\}\sum_{i=1}^{m}y_{i}^{2}\\
\leqslant\max\{|\lambda_{1}(\bm{C})|,|\lambda_{2}(\bm{C})|,\dots,|\lambda_{m}(\bm{C})|\}
y'y=\max\{|\lambda_{1}(\bm{C})|,|\lambda_{2}(\bm{C})|,\dots,|\lambda_{m}(\bm{C})|\}
x'x,
\end{eqnarray*}
this completes the proof.
\end{proof}
\begin{cor}\label{corcourant}
Let $\bm{C}$ be a $m\times m$-matrix. Let $x$ and $y$ $m-$column
vectors.
 Then, we have

 $
 \left|x'\bm{C}x\right|\leqslant
\frac{1}{2}\,\max\{|\lambda_{1}(\bm{C}+\bm{C}')|,|\lambda_{2}(\bm{C}+\bm{C}')|,
\dots,|\lambda_{m}(\bm{C}+\bm{C}')|\}x'x$,
and

$ \left|y'\bm{C}x\right|\leqslant
\frac{1}{2}\,\max\{|\lambda_{1}(\bm{B}_{0})|,|\lambda_{2}(\bm{B}_{0})|,\dots,
|\lambda_{2m}(\bm{B}_{0})|\}(x'x+y'y)$,
where
$
\bm{B}_{0}=\left(
              \begin{array}{cc}
                \bm{0} & \bm{C}' \\
                \bm{C} & \bm{0} \\
              \end{array}
            \right).
$
\end{cor}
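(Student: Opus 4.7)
The plan is to reduce both inequalities to Proposition~\ref{prncourant} by symmetrizing the relevant quadratic form and, for the bilinear case, embedding it inside a quadratic form on a doubled-dimension space.

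For the first inequality, I would start from the fact that $x'\bm{C}x$ is a scalar, hence equal to its own transpose $x'\bm{C}'x$. Averaging the two representations gives
\begin{equation*}
x'\bm{C}x \;=\; \tfrac{1}{2}\bigl(x'\bm{C}x + x'\bm{C}'x\bigr) \;=\; \tfrac{1}{2}\,x'(\bm{C}+\bm{C}')x.
\end{equation*}
Since $\bm{C}+\bm{C}'$ is symmetric, Proposition~\ref{prncourant} applies directly to it, yielding the claimed bound with the factor $\tfrac{1}{2}$ coming out naturally.

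For the second inequality, the key idea is to observe that $\bm{B}_{0}$ is already symmetric by construction, and that the bilinear form $y'\bm{C}x$ appears as the off-diagonal piece of the quadratic form associated with $\bm{B}_{0}$ on the stacked vector. Setting $z=(x',y')'$, a direct block computation gives
\begin{equation*}
z'\bm{B}_{0}z \;=\; (x',y')\begin{pmatrix} \bm{0} & \bm{C}' \\ \bm{C} & \bm{0} \end{pmatrix}\begin{pmatrix} x \\ y \end{pmatrix} \;=\; x'\bm{C}'y + y'\bm{C}x \;=\; 2\,y'\bm{C}x,
\end{equation*}
so that $|y'\bm{C}x| = \tfrac{1}{2}|z'\bm{B}_{0}z|$. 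Applying Proposition~\ref{prncourant} to the symmetric matrix $\bm{B}_{0}$ and the $2m$-vector $z$ then gives $|z'\bm{B}_{0}z|\leqslant \max_{i}|\lambda_{i}(\bm{B}_{0})|\,z'z$, and $z'z=x'x+y'y$ finishes the argument.

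There is essentially no obstacle here; the whole content of the corollary is recognizing the two symmetrization tricks. The only thing worth being careful about is the ordering of the blocks in $z$ so that the cross-term in $z'\bm{B}_{0}z$ actually produces $2\,y'\bm{C}x$ (not $2\,x'\bm{C}y$), and observing that $x'\bm{C}'y = (y'\bm{C}x)'= y'\bm{C}x$ as scalars, which is what makes the two cross-terms combine rather than cancel.
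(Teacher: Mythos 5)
Your proof is correct and follows essentially the same route as the paper: symmetrize for the first inequality, then embed the bilinear form in a quadratic form on the stacked vector $z=(x',y')'$ for the second. The only cosmetic difference is that you apply Proposition~\ref{prncourant} directly to the already-symmetric $\bm{B}_{0}$ via the identity $z'\bm{B}_{0}z=2\,y'\bm{C}x$, whereas the paper writes $y'\bm{C}x$ as a quadratic form in the asymmetric block matrix with $\bm{C}$ in one off-diagonal block and then invokes its first statement, whose symmetrization step reproduces exactly your $\bm{B}_{0}$.
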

\begin{proof}
Since $x'\bm{C}x$ is a real number, we have
$x'\bm{C}x=\frac{1}{2}\,x'(\bm{C}+\bm{C}')x$. Therefore, since
$\bm{C}+\bm{C}'$ is a symmetric matrix, the first statement follows
directly from Proposition~\ref{prncourant}. To prove the second
statement, note that $y'\bm{C}x$ can be rewritten as
\begin{eqnarray*}
y'\bm{C}x=(x',y')\left[(\bm{0},\, \bm{C}')'\quad{ } \vdots \quad{ }
\bm{0}\right]\left(x',\,y'\right)'.
\end{eqnarray*}
The rest of the proof follows from the first statement.
\end{proof}
\begin{proof}[Proof of Proposition~\ref{proetaomega}]
Under Assumption~$(\mathcal{H})$, there exists $q_{\scriptscriptstyle 0}>0$ such that\\
$|h(\hat{\beta},\tilde{\beta})|\leqslant q_{\scriptscriptstyle 0}\Big /\|\hat{\beta}-\tilde{\beta}\|^{2}$. Then,
we have
\[
|\eta(h)|\leqslant\textrm{E}\left(|h(\hat{\beta},\tilde{\beta})||\hat{\beta}-\beta|\right)\leqslant q_{\scriptscriptstyle 0} \textrm{E}\left(|\hat{\beta}-\beta|\Big /\|\hat{\beta}-\tilde{\beta}\|^{2}\right),
\]
and then, by using~\eqref{distrbeta} and \eqref{distrU}, we get
\[
\textrm{E}\left(|\hat{\beta}-\beta|/\|\hat{\beta}-\tilde{\beta}\|^{2}\right)
=\textrm{E}\left(|U'_{1}PZ|\,\Big /Z'RZ\right),
\]
this proves the first statement of the proposition. Further, we have
\[
|\omega(h)|=\textrm{E}\left(h^{2}(\hat{\beta},\tilde{\beta})\|\hat{\beta}-\tilde{\beta}\|^{2}\right)\leqslant q^{2}_{\scriptscriptstyle 0} \textrm{E}\left(1\,\Big /\|\hat{\beta}-\tilde{\beta}\|^{2}\right),
\]
and then, by using~\eqref{distrbeta} and \eqref{distrU}, we get
\[
\textrm{E}\left(1\,\Big /\|\hat{\beta}-\tilde{\beta}\|^{2}\right)
=\textrm{E}\left(1\,/Z'RZ\right)=\omega,
\]
this completes the proof.
\end{proof}
\begin{proof}[Proof of Proposition~\ref{born1}]
Let $W=\left(Z',U'_{1}\right)'$. We have
\begin{eqnarray}
\textrm{E}\left\{\left(|U_{1}'\bm{P}Z|\,\big/(Z'\bm{R}Z)\right)
\mathbb{I}_{\left\{\|W\|\leqslant\alpha\right\}}\right\}
=\textrm{E}\left\{\left(|W'\bm{F}W|\,\big /(Z'\bm{R}Z)\right)
\mathbb{I}_{\left\{\|W\|\leqslant\alpha\right\}}\right\},
\end{eqnarray}
where $F$ is defined in~\eqref{BF}. By using
Corollary~\ref{corcourant}, we get
\begin{eqnarray*}
\textrm{E}\left\{\frac{|W'\bm{F}W|}{Z'\bm{R}Z}
\mathbb{I}_{\left\{\|W\|\leqslant\alpha\right\}}\right\}\leqslant
\frac{1}{2}\, \psi_{1}\, \textrm{E}\left\{\frac{1}{Z'\bm{R}Z}
\|W\|^{2}\mathbb{I}_{\left\{\|W\|\leqslant\alpha\right\}}\right\}
\leqslant \frac{1}{2}\, \psi_{1}\,
\textrm{E}\left\{\frac{\alpha^{2}}{Z'\bm{R}Z}
\mathbb{I}_{\left\{\|W\|\leqslant\alpha\right\}}\right\},
\end{eqnarray*}
which gives
\begin{eqnarray}
\textrm{E}\left\{\left(|W'\bm{F}W|\,/(Z'\bm{R}Z)\right)\,
\mathbb{I}_{\left\{\|W\|\leqslant\alpha\right\}}\right\} \leqslant
\alpha^{2}\, \psi_{1}\,
\textrm{E}\left\{1/(Z'\bm{R}Z)\right\}\,\big/2\leqslant \alpha^{2}\,
\psi_{1}\,\omega\,\big /2,
\end{eqnarray}
and the proof is completed.
\end{proof}
\begin{proof}[Proof of Proposition~\ref{born2}]
We have,
\begin{eqnarray*}
\textrm{E}\left\{\frac{|W'\bm{F}W|}{Z'\bm{R}Z}\,
\mathbb{I}_{\left\{\|W\|>\alpha\right\}}\right\}\leqslant
\textrm{E}\left\{\left(|W'\bm{F}W|\,\big /(Z'\bm{R}Z)\right)\,
\mathbb{I}_{\left\{\|Z\|^{2}>\frac{\alpha^{2}}{2}\right\}}\right\}
\leqslant \frac{2}{\alpha^{2}}\textrm{E}\left\{|W'\bm{F}W|\,
\frac{\|Z\|^{2}}{Z'\bm{R}Z}\right\}.
\end{eqnarray*}
By Courant's Theorem, we have the inequality
$$1/\max\{\lambda_{1}(\bm{R}),\lambda_{2}(\bm{R}),\dots,\lambda_{k}(\bm{R})\}
\leqslant\|Z\|^{2}/(Z'\bm{R}Z)\leqslant
 1/\psi_{0},$$ which gives
\begin{eqnarray}
\textrm{E}\left\{\left(|W'\bm{F}W|\,\big /(Z'\bm{R}Z)\right)\,
\mathbb{I}_{\left\{\|Z\|^{2}>\alpha^{2}/2\right\}}\right\} \leqslant
2\,\textrm{E}\left\{|W'\bm{F}W|\right\}\,\big /
(\alpha^{2}\,\psi_{0}).\label{prelident}
\end{eqnarray}
Further, by using Corollary~\ref{corcourant}, we get
\begin{eqnarray}
\textrm{E}\left\{|W'\bm{F}W|\right\}\leqslant
\max\{|\lambda_{1}(\bm{F}+\bm{F}')|,|\lambda_{2}(\bm{F}+\bm{F}')|,\dots,
|\lambda_{2m}(\bm{F}+\bm{F}')|\}\textrm{E}(W'W)\,\big
/2.\label{bornXFX}
\end{eqnarray}
Note that $\bm{F}+\bm{F}'=\bm{B}$ and
$W\sim \mathcal{N}_{2k}\left(\left(
                              \begin{array}{c}
                                \mu \\
                                \bm{0} \\
                              \end{array}
                            \right), \left(
                                       \begin{array}{cc}
                                         \bm{I}_{k} & (\bm{A}-\bm{\Sigma})(\bm{P}^{-1})' \\
                                         \bm{P}^{-1}(\bm{A}-\bm{\Sigma}') & \bm{A} \\
                                       \end{array}
                                     \right)
\right)$.
Then, $\textrm{E}(W'W)=\textrm{trace}(\bm{A})+k+\mu'\mu$, and then,
\begin{eqnarray}
\textrm{E}\left\{|W'\bm{F}W|\right\}\leqslant
\,\psi_{1}\left[\textrm{trace}(\bm{A})+k+\mu'\mu\right]\,\big
/2.\label{impident}
\end{eqnarray}
By combining~\eqref{prelident},~\eqref{bornXFX} and
\eqref{impident}, we get the statement of the proposition,
which completes the proof.
\end{proof}
\begin{prn}\label{propomegah}
Suppose that Assumptions $(\mathcal{H}_{1})$ and $(\mathcal{H}_{2})$ hold. Then, under
normality, $\omega(h)<\infty$ provided that $q\geqslant 3$.
\end{prn}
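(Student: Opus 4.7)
The plan is to bound $\omega(h)$ by an expectation depending only on $V=\hat{\beta}-\tilde{\beta}$, and then exploit that, under normality, $V$ is Gaussian but supported on an affine subspace of effective dimension $q$. First I would invoke Assumption~$(\mathcal{H}_{1})$ to obtain $q_{0}>0$ with $|h(x,y)|\leqslant q_{0}/\|x-y\|^{2}$; squaring and multiplying by $\|V\|^{2}$ then gives $h^{2}(\hat{\beta},\tilde{\beta})\|V\|^{2}\leqslant q_{0}^{2}/\|V\|^{2}$, so that
\[
\omega(h)\leqslant q_{0}^{2}\,\textrm{E}\!\left[1/\|V\|^{2}\right].
\]
By Assumption~$(\mathcal{H}_{2})$, $h(\hat{\beta},\tilde{\beta})$ depends on $V$ only, so the remaining task is purely distributional: under normality $V\sim\mathcal{N}_{k}(-\gamma,\bm{\Xi})$ with $\textrm{rank}(\bm{\Xi})=q$.

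Next I would choose a factorization $\bm{\Xi}=\bm{L}\bm{L}'$ with $\bm{L}$ a $k\times q$ matrix of full column rank (for instance, from the spectral decomposition of $\bm{\Xi}$ restricted to its positive eigenvalues), so that $V$ has the same distribution as $-\gamma+\bm{L}Z_{0}$ with $Z_{0}\sim\mathcal{N}_{q}(\bm{0},\bm{I}_{q})$. Splitting $\gamma=\gamma_{\parallel}+\gamma_{\perp}$ orthogonally with $\gamma_{\parallel}\in\textrm{range}(\bm{L})$ and writing $\gamma_{\parallel}=\bm{L}c$ for some $c\in\mathbb{R}^{q}$, the Pythagorean identity applied to the orthogonal pair $-\gamma_{\perp}$ and $-\gamma_{\parallel}+\bm{L}Z_{0}$ yields pointwise
\[
\|V\|^{2}=\|\gamma_{\perp}\|^{2}+\|\bm{L}(Z_{0}-c)\|^{2}\geqslant \lambda_{\min}(\bm{L}'\bm{L})\,\|Z_{0}-c\|^{2},
\]
with $\lambda_{\min}(\bm{L}'\bm{L})>0$ since $\bm{L}$ has full column rank.

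Finally, $\|Z_{0}-c\|^{2}$ is a non-central chi-squared variate with $q$ degrees of freedom and non-centrality $\|c\|^{2}$, so by the same integrability analysis already used in the proof of Theorem~\ref{propmain} one has $\textrm{E}\bigl[1/\chi_{q}^{2}(\|c\|^{2})\bigr]\leqslant 1/(q-2)<\infty$ precisely when $q\geqslant 3$. Chaining the bounds then gives $\omega(h)\leqslant q_{0}^{2}/\bigl[(q-2)\lambda_{\min}(\bm{L}'\bm{L})\bigr]<\infty$, as required.

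The single delicate point I would anticipate is the orthogonal decomposition of $\gamma$ relative to $\textrm{range}(\bm{L})$ and the accompanying Pythagorean inequality: these replace the Cholesky factor $\bm{P}$ that was available in the non-singular case of Section~\ref{sec:mainres}, and they allow one to reduce to a genuinely $q$-dimensional Gaussian without having to invoke Assumption~$(\mathcal{H}_{3})$. Once this reduction is in place, the remaining step is merely the classical moment estimate for the reciprocal of a non-central chi-squared, and the threshold $q\geqslant 3$ emerges from exactly the same integrability criterion that drove Theorem~\ref{propmain}.
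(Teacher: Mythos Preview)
Your argument is correct, and in fact slightly cleaner than the paper's own proof. Both begin identically, using Assumption~$(\mathcal{H}_{1})$ to reduce the problem to bounding $\textrm{E}\bigl[1/\|\hat\beta-\tilde\beta\|^{2}\bigr]$. From there the routes diverge: the paper invokes the matrix $\bm{\Lambda}$ of Assumption~$(\mathcal{H}_{3})$, bounds $1/\|V\|^{2}$ by $\textrm{trace}(\bm{\Lambda}\bm{\Xi}\bm{\Lambda})/\bigl(V'\bm{\Lambda}\bm{\Xi}\bm{\Lambda}V\bigr)$, and then appeals to Theorem~5.1.3 of Mathai and Provost~(1992) to identify $V'\bm{\Lambda}\bm{\Xi}\bm{\Lambda}V$ as a non-central $\chi^{2}_{q}$ variate. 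Your rank factorization $\bm{\Xi}=\bm{L}\bm{L}'$ together with the orthogonal split $\gamma=\gamma_{\parallel}+\gamma_{\perp}$ achieves the same reduction by elementary linear algebra, without needing the idempotency and compatibility conditions in $(\mathcal{H}_{3})$ and without the external reference. In that sense your proof matches the \emph{stated} hypotheses of the proposition more faithfully than the paper's, which tacitly leans on $(\mathcal{H}_{3})$; the paper's version, on the other hand, keeps the argument within the $\bm{\Lambda}$-framework that is reused in Theorem~\ref{furtherext}. A minor remark: neither argument actually needs Assumption~$(\mathcal{H}_{2})$ at this stage---your sentence invoking it is harmless but superfluous, since the bound $\omega(h)\leqslant q_{0}^{2}\,\textrm{E}[1/\|V\|^{2}]$ already depends on $V$ alone.
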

\begin{proof} We have
$\omega(h)=\textrm{E}\left[h^{2}(\hat{\beta},\tilde{\beta})
\left\|\hat{\beta}-\tilde{\beta}\right\|^{2}\right]$, and then,
under Assumption~$(\mathcal{H}_{1})$, there exist $q_{0}$ such that
\begin{eqnarray*}
 \omega(h)\leqslant q_{0}\textrm{E}\left[1\Big /
\left\|\hat{\beta}-\tilde{\beta}\right\|^{2}\right]\leqslant
q_{0}\textrm{trace}(\bm{\Lambda}\bm{\Xi}\bm{\Lambda})\textrm{E}\left[1\Big/
\left(\hat{\beta}-\tilde{\beta}\right)'\bm{\Lambda}\bm{\Xi}\bm{\Lambda}\left(\hat{\beta}-\tilde{\beta}\right)\right].
\end{eqnarray*}
Further, by using Theorem~5.1.3 in Mathai and Provost~(1992,
p.~199), we have\\
$\left(\hat{\beta}-\tilde{\beta}\right)'\bm{\Lambda}\bm{\Xi}\bm{\Lambda}\left(\hat{\beta}-\tilde{\beta}\right)\sim
\chi^{2}_{q}\left(\gamma'\bm{\Lambda}\bm{\Xi}\bm{\Lambda}\gamma\right)$ and then,
\begin{eqnarray*}
\textrm{E}\left[1\Big/
\left(\hat{\beta}-\tilde{\beta}\right)'\bm{\Lambda}\bm{\Xi}\bm{\Lambda}\left(\hat{\beta}-\tilde{\beta}\right)\right]
=\textrm{E}\left[
\chi^{-2}_{q}\left(\gamma'\bm{\Lambda}\bm{\Xi}\bm{\Lambda}\gamma\right)\right]<1/(q-2).
\end{eqnarray*}
Hence,
$\omega(h)<q_{0}\textrm{trace}(\bm{\Lambda}\bm{\Xi}\bm{\Lambda})/(q-2)<+\infty$ provided that
$q\geqslant 3$, this completes the proof.
\end{proof}
\begin{proof}[Proof of Theorem~\ref{furtherext}]
We have 
\[
\eta(h)=\textrm{E}\left[h(\hat{\beta},\tilde{\beta})\left(\hat{\beta}-\beta\right)'
\left(\hat{\beta}-\tilde{\beta}\right)\right],
\]
and this gives
\[
\eta(h)=
\textrm{E}\left[h(\hat{\beta},\tilde{\beta})\left\|\hat{\beta}-\tilde{\beta}\right\|^{2}\right]
+\textrm{E}\left[h(\hat{\beta},\tilde{\beta})\left(\tilde{\beta}-\beta\right)'
\left(\hat{\beta}-\tilde{\beta}\right)\right].
\]
Then, by the triangular inequality,
\begin{eqnarray*}
 |\eta(h)|\leqslant \left|\textrm{E}\left[h(\hat{\beta},\tilde{\beta})\left\|\hat{\beta}-\tilde{\beta}\right\|^{2}\right]\right|
 +\left|
\textrm{E}\left[h(\hat{\beta},\tilde{\beta})\left(\tilde{\beta}-\beta\right)'
\left(\hat{\beta}-\tilde{\beta}\right)\right]\right|.
\end{eqnarray*}
Then, by using Jensen's inequality and
Assumption~$(\mathcal{H}_{1})$, we get
\begin{eqnarray*}
 |\eta(h)|\leqslant q_{0}+\left|
\textrm{E}\left[h(\hat{\beta},\tilde{\beta})\left(\tilde{\beta}-\beta\right)'
\left(\hat{\beta}-\tilde{\beta}\right)\right]\right|.
\end{eqnarray*}
Further, since $\left(\tilde{\beta}-\beta\right)$ and
$\left(\hat{\beta}-\tilde{\beta}\right)$ are independent, and since,
by Assumption~$(\mathcal{H}_{2})$, $h(\hat{\beta},\tilde{\beta})$ is
a measurable function of $\hat{\beta}-\tilde{\beta}$ only, we have
 $$|\eta(h)|\leqslant q_{0}+\left| \gamma'
\textrm{E}\left[h(\hat{\beta},\tilde{\beta})
\left(\hat{\beta}-\tilde{\beta}\right)\right]\right|.$$
Then, by Cauchy-Schwarz inequality,
\begin{eqnarray*}
 |\eta(h)|\leqslant
q_{0}+\left\|\gamma\right\|\left\{
\textrm{E}\left[h^{2}(\hat{\beta},\tilde{\beta})
\left\|\hat{\beta}-\tilde{\beta}\right\|^{2}\right]\right\}^{1/2}
=q_{0}+\left\|\gamma\right\|\omega^{1/2}(h).
\end{eqnarray*}
Therefore, the proof follows from Proposition~\ref{propomegah}.
\end{proof}

\end{document}